\documentclass[journal, onecolumn]{IEEEtran}

\usepackage{amsthm}

\usepackage{tikz}
\usetikzlibrary{arrows.meta, positioning, shapes,snakes, calc}
\usepackage{subfig}
\usepackage{amssymb}
\usepackage{mathtools}
\usepackage{cases}
\usepackage{autobreak}
\usepackage{mathrsfs}  
\usepackage{booktabs}
\usepackage{multirow}
\usepackage{color}
\usepackage{hyperref}
\usepackage{cite}
\usepackage{amsmath}
\usepackage{extarrows}
\usepackage{bm}
\usepackage{comment}
\usepackage{algorithmic}
\usepackage{algorithm}
\usepackage{caption}
\usepackage{graphicx}
\usepackage{pifont}
\usepackage{bbm}

\newcommand{\BA}{{\bold{A}}}
\newcommand{\BB}{{\bold{B}}}
\newcommand{\BC}{{\bold{C}}}
\newcommand{\BD}{{\bold{D}}}
\newcommand{\BE}{{\bold{E}}}
\newcommand{\BF}{{\bold{F}}}
\newcommand{\BG}{{\bold{G}}}
\newcommand{\BH}{{\bold{H}}}
\newcommand{\BI}{{\bold{I}}}

\newcommand{\BL}{{\bold{L}}}
\newcommand{\BM}{{\bold{M}}}

\newcommand{\BP}{{\bold{P}}}
\newcommand{\BQ}{{\bold{Q}}}
\newcommand{\BR}{{\bold{R}}}
\newcommand{\BS}{{\bold{S}}}
\newcommand{\BT}{{\bold{T}}}
\newcommand{\BU}{{\bold{U}}}
\newcommand{\BV}{{\bold{V}}}

\newcommand{\BX}{{\bold{X}}}
\newcommand{\BY}{{\bold{Y}}}
\newcommand{\BZ}{{\bold{Z}}}

\newcommand{\Be}{{\bold{e}}}

\newcommand{\Bh}{{\bold{h}}}

\newcommand{\Bn}{{\bold{n}}}

\newcommand{\Bu}{{\bold{u}}}
\newcommand{\Bv}{{\bold{v}}}

\newcommand{\Bx}{{\bold{x}}}
\newcommand{\By}{{\bold{y}}}
\newcommand{\Bz}{{\bold{z}}}

\newcommand{\BCD}{\boldsymbol{\mathcal{D}}}
\newcommand{\BCE}{\boldsymbol{\mathcal{E}}}

\newcommand{\BCI}{\boldsymbol{\mathcal{I}}}

\newcommand{\BSigma}{\bold{\Sigma}}

\newcommand{\BXi}{\bold{\Xi}}

\newcommand{\BDelta}{\boldsymbol{\Delta}}
\newcommand{\BPhi}{\boldsymbol{\Phi}}

\newcommand{\BPsi}{\boldsymbol{\Psi}}

\newcommand{\Bvarsigma}{\boldsymbol{\varsigma}}
\newcommand{\Balpha}{\boldsymbol{\alpha}}

\newcommand{\Bbeta}{\boldsymbol{\beta}}

\newcommand{\Bmu}{\boldsymbol{\mu}}
\newcommand{\Bnu}{\boldsymbol{\nu}}

\newcommand{\cB}{\mathcal{B}}
\newcommand{\cC}{\mathcal{C}}
\newcommand{\cD}{\mathcal{D}}
\newcommand{\cE}{\mathcal{E}}
\newcommand{\cF}{\mathcal{F}}

\newcommand{\cH}{\mathcal{H}}
\newcommand{\cI}{\mathcal{I}}

\newcommand{\cK}{\mathcal{K}}
\newcommand{\cL}{\mathcal{L}}

\newcommand{\cN}{\mathcal{N}}
\newcommand{\cO}{\mathcal{O}}
\newcommand{\cP}{\mathcal{P}}

\newcommand{\cS}{\mathcal{S}}
\newcommand{\cT}{\mathcal{T}}

\newcommand{\cW}{\mathcal{W}}

\newcommand*{\rom}[1]{\expandafter\@slowromancap\romannumeral #1@}

\newcommand{\E}{\mathbb{E}}
\newcommand{\indvj}{\boldsymbol{\mathsf{j}}}
\newcommand{\R}{\mathbb{R}}
\newcommand{\C}{\mathbb{C}}
\newcommand{\U}{\mathbb{U}}
\newcommand{\PP}{\mathbb{P}}
\newcommand\innerprod[2]{\left\langle #1, #2 \right\rangle}
\newcommand{\Vecm}{\mathrm{Vec}}
\newcommand{\norm}[1]{\left\lVert#1\right\rVert}
\newcommand{\abs}[1]{\left|#1\right|}
\newcommand{\lnorm}[1]{\lVert#1\rVert}
\newcommand{\dd}{\mathrm{d}}
\newcommand{\Ind}{\mathbbm{1}}

\newcommand{\dz}{\mathsf{d}_z}

\DeclareMathOperator{\Tr}{Tr}
\DeclareMathOperator{\dist}{dist}
\DeclareMathOperator{\Span}{Span}

\DeclareMathOperator{\rank}{rank}
\DeclareMathOperator{\diag}{diag}
\DeclareMathOperator{\Var}{Var}

\newtheorem{remark}{Remark}
\newtheorem{theorem}{Theorem}
\newtheorem{lemma}{Lemma}
\newtheorem{proposition}{Proposition}
\newtheorem{assumption}{Assumption}
\newtheorem{corollary}{Corollary}

\renewcommand{\leq}{\leqslant}
\renewcommand{\geq}{\geqslant}
\newcommand\numberthis{\addtocounter{equation}{1}\tag{\theequation}}
\allowdisplaybreaks[4]

\author{\IEEEauthorblockN {Zeyan Zhuang,  Anzheng Tang, Xin Zhang, Dongfang Xu, and Shenghui Song, \textit{Senior Member}, \textit{IEEE}}

\thanks{
Zeyan Zhuang, Anzheng Tang, and Shenghui Song are with the Department of Electronic and Computer Engineering,
The Hong Kong University of Science and Technology, Hong Kong. Shenghui Song is also with the Division of Integrative Systems and Design, The Hong Kong University of Science and Technology, Hong Kong (e-mail: zzhuangac@connect.ust.hk; eeaztang@ust.hk; eeshsong@ust.hk).
\par
Xin Zhang is with the School of Cyber
Science and Technology, Beihang University, Beijing 100191, China (e-mail: zhangxinn@buaa.edu.cn).
\par
Dongfang Xu is with the College of Artificial Intelligence, Nanjing University of Aeronautics and Astronautics, Nanjing 210000, China (e-mail: xudongfang@nuaa.edu.cn).
}
}

\begin{document}
\title{A Unified Framework for Characterizing General MIMO Channels}

\maketitle

\begin{abstract}
Modern multiple-input multiple-output (MIMO) systems are evolving toward higher-dimensional and more flexible architectures, such as distributed and holographic MIMO, offering substantial benefits while giving rise to increasingly complex channel correlation structures. This growing architectural diversity makes it difficult to characterize the fundamental limits of different MIMO systems on a case-by-case basis, motivating a unified analytical framework applicable across a broad range of architectures. This paper addresses this research gap by investigating a generally correlated Rayleigh channel whose vectorized form follows a general Gaussian distribution with an arbitrary covariance matrix. For that purpose, we first prove that the spectral norm of the channel matrix is bounded in the asymptotic regime where the numbers of transmit and receive antennas grow proportionally. We then derive a deterministic approximation for the ergodic mutual information, with an explicit convergence rate governed by the structure of the channel correlation. The approximation is characterized by a pair of matrix-valued self-consistent equations, for which we establish the existence and uniqueness of the solution and propose an iterative numerical algorithm. Furthermore, we develop a framework based on positive linear maps to analyze the stability of these equations, which can facilitate the spectral analysis of broader classes of random matrices. The proposed characterization unifies many existing results for structured MIMO channels as special cases while remaining applicable to a broad class of general MIMO architectures that are difficult to evaluate using existing analytical frameworks. To demonstrate its utility, we apply the developed theory to two representative systems, namely downlink distributed MIMO and uplink heterogeneous MIMO. Numerical results confirm the accuracy of the derived deterministic approximations.
\end{abstract}

\begin{IEEEkeywords}
Random Matrix Theory, General Gaussian matrices, Multiple-Input Multiple-Output, Mutual Information.
\end{IEEEkeywords}

\section{Introduction}

Multiple-input multiple-output (MIMO) technology has become a key enabler of wireless communication systems owing to its ability to improve spectral and energy efficiency, link reliability, and spatial multiplexing gains \cite{paulMIMO2004}. In particular, by employing multiple antennas at both the transmitter and the receiver, MIMO systems provide additional spatial degrees of freedom (DoF) to enable higher diversity and multiplexing gains \cite{JafardofMIMO}. Driven by the rapidly growing demands of emerging wireless applications, MIMO systems are evolving toward higher dimensions and more flexible hardware configurations \cite{WangMIMO2024}. For instance, massive MIMO (mMIMO) supports high data rate communications by equipping the base station (BS) with hundreds of antennas \cite{marze2010twcmMIMO}. Compared with conventional MIMO, mMIMO provides substantial beamforming and multiplexing gains while benefiting from favorable propagation and channel hardening \cite{Hoydis2013MIMO}. On the other hand, to mitigate intercell interference and provide more uniform service quality, cell-free and distributed mMIMO architectures have been developed, in which geographically separated antenna sets cooperatively serve users \cite{zhang2013jsacMIMO}. Meanwhile, decentralized baseband processing  partitions a large antenna array into multiple clusters equipped with local processing units, thereby reducing interconnection overhead and the computational burden of high-dimensional signal processing \cite{Zhuang2025Decentralized,Zhao2025}. Furthermore, with the advances in reconfigurable antenna hardware and electromagnetic metasurfaces, intelligent multi-antenna technologies have been widely investigated. Representative examples include reconfigurable intelligent surfaces (RISs) \cite{Basar2019IRS}, holographic MIMO (HMIMO) systems \cite{Pizzo2022Holo, Zhang2025HolographicMIMO}, and flexible antenna systems (FASs) \cite{DingFAS2025}. These emerging architectures promise lower costs and considerable performance gains, making them important candidates for future wireless systems.

\par

Alongside the benefits of modern MIMO systems, their increasing dimensionality and flexible hardware architectures give rise to more complex channel correlation structures. Consequently, it becomes increasingly difficult to characterize the fundamental limits of different types of MIMO systems on a case-by-case basis and a unified framework suitable for a large range of architectures will be very beneficial. Unfortunately, existing analyses commonly focus on specific channel fading structures, such as separable correlations \cite{Hachem2008ANewApproach}, variance profiles \cite{hachem2008varianceProfile}, or block-structured correlations \cite{zhang2013jsacMIMO}, and therefore cannot be directly applied to tackle general MIMO channels. In this work, we address this limitation by studying a generally correlated Rayleigh MIMO channel that allows arbitrary correlations among the channel coefficients, which encompasses a broad class of MIMO systems. In particular, we derive deterministic approximations for the ergodic mutual information (EMI) of the general MIMO systems with an explicit convergence rate.

\subsection{Related Work}

\textit{Performance Analysis of MIMO Channels:}
The EMI of different types of MIMO channels has been extensively investigated, yielding innovative results. Early studies primarily considered independent and identically distributed (i.i.d.) Rayleigh channels \cite{foschini1998limits}. In the large-system regime, the asymptotic MI was characterized through the Mar\v{c}enko-Pastur law  \cite{Tselinear1999, marchenko1967distribution}. To incorporate spatial correlations at both the transmitter and the receiver, the Kronecker-correlated Rayleigh channel was investigated in \cite{Moustakas2003titcorrelated}. In particular, Moustakas \textit{et al.} derived a deterministic approximation for the EMI and established a central limit theorem (CLT) for the MI utilizing the replica method. These results were then rigorously established in \cite{Hachem2008ANewApproach} using Gaussian tools. Similarly, the statistical properties of the MI for Kronecker Rician MIMO channels were analyzed in \cite{moustakas2005random,Dumont2010rician}, while Rician channels with a variance profile were studied in \cite{hachem2007deterministic,hachem2008varianceProfile}. Due to multipath propagation, MIMO channels generally exhibit frequency-selective fading.  The works in \cite{moustaka2007freqselec, Dupuy2011freqselec} established deterministic approximations for the MI of correlated frequency-selective MIMO channels. In \cite{Mckay2008variance}, McKay \textit{et al.} employed finite-dimensional RMT to characterize the mean and variance of the MI of MIMO orthogonal frequency-division multiplexing (OFDM) systems while accounting for statistical dependence among channel matrices across different subcarriers.

The aforementioned studies primarily focused on point-to-point MIMO systems. The RMT-based performance analyses were subsequently extended to multi-user and distributed MIMO systems. In \cite{Couillet2011DE}, deterministic approximations for the EMI of MIMO multiple-access channels (MACs) were derived under the Kronecker model. These results were then used to characterize the ergodic rate region. In \cite{zhang2013jsacMIMO}, Zhang \textit{et al.} further studied a distributed MIMO MAC in which the base station (BS) consists of several geographically distributed antenna sets. In particular, a deterministic approximation for the ergodic sum rate was derived under a spatially correlated Rician channel model. In contrast to distributed MIMO models that assume independent channels across different antenna sets, decentralized MIMO systems may involve spatial correlations among antenna clusters. The performance of such systems was investigated in \cite{Zhuang2025Decentralized}.

\par

More recently, emerging MIMO architectures such as HMIMO have been proposed \cite{Pizzo2022Holo}. In HMIMO systems, antenna elements are densely deployed over a compact physical aperture, resulting in nonseparable spatial correlations and pronounced electromagnetic coupling \cite{pizzojsaitholoMIMOmc}. Closed-form approximations for the EMI and outage probability of HMIMO systems were derived in \cite{Zhang2025HolographicMIMO} under a noncentered and nonseparable channel model. The effect of mutual coupling (MC) on system performance was further investigated in \cite{Zhang2025HolographicMC}. 
These developments highlight the growing diversity of MIMO channel correlation structures. Analyzing each architecture through a separate model is increasingly cumbersome, calling for a unified analytical framework. Specifically, let $\BH$ denote the channel matrix and assume that $\Vecm(\BH)$ follows a zero-mean complex Gaussian distribution with covariance $\BC$. Then, all correlation structures discussed above can be represented through an appropriate choice of $\BC$. Moreover, the increasing dimensions of these systems make RMT a natural tool for the performance analysis. Accordingly, this paper derives a deterministic approximation for the EMI under this general model with an explicit convergence rate.

\textit{Spectral Analysis of Gaussian Matrices:}
The above MIMO performance analyses generally rely on characterizing the spectrum of Gaussian channel matrices, representing a canonical class of random matrices. Specifically, many spectral convergence results were first established under Gaussian assumptions and subsequently generalized to broader classes of non-Gaussian ensembles \cite{bai2010spectral, Wen2013NonGaussian}, a phenomenon commonly referred to as \textit{universality} in RMT. For example, the almost sure convergence of the smallest eigenvalue of a Wishart matrix was first established in \cite{JWsilverstein1985wishartsmall}. After appropriate scaling, the largest eigenvalues of the Gaussian orthogonal ensemble and Gaussian sample covariance matrices were shown in \cite{tracy1996orthogonal} and \cite{johansson2000shape}, respectively, to converge weakly to the celebrated Tracy-Widom laws.

\par
Gaussian tools provide a powerful framework for analyzing Gaussian matrices, relying primarily on two ingredients. The \textit{integration by parts formula} enables the derivation of self-consistent equations for the resolvent, whereas the \textit{Poincar\'e--Nash inequality} controls the fluctuations of the approximation errors. These techniques were systematically developed in the early work by Pastur \textit{et al.} \cite{pastur2011eigenvalue} and have since become standard tools in RMT. In wireless communications, Gaussian tools have been used to characterize the EMI and outage performance of two-hop MIMO systems \cite{Xin2023Double-Scattering,Zhuang2025Fundamental}. Furthermore, they have  been applied in multivariate statistics and machine learning, including subspace estimation \cite{Vallet2012Subspace}, the analysis of kernel methods \cite{couillet2021twoway}, and the spectral analysis of tensor data \cite{JMLRHugotensor}. In this work, we employ Gaussian tools to analyze the generally correlated MIMO channel. However, owing to the entrywise correlations in $\BH$, the resulting error terms involve nondiagonal matrix-valued maps and therefore require a more refined analysis.

\textit{Stability Analysis of the Approximation Equations:}
The application of Gaussian tools typically yields self-consistent equations whose solutions characterize the resolvent approximation \cite{hachem2007deterministic}. A central step in establishing the convergence rate of the approximation is to analyze the stability of the equations, namely, to show that a small perturbation of the self-consistent system induces a controlled perturbation of its solution. For vector-valued self-consistent equations, a framework based on positive matrices was developed in \cite{hachem2008varianceProfile}. An important advantage of this framework is that it provides quantitative stability bounds with polynomial dependence on the spectral parameter, thereby facilitating the analysis of edge and outlier eigenvalues \cite{AblaNoEigen,Vallet2012Subspace,chapon2012outliers,hachem2013bilinear}. However, under the general covariance matrix $\BC$ considered in this work,  the resulting self-consistent equations are matrix-valued, and conventional  arguments are therefore no longer applicable. To overcome this difficulty, we develop a stability framework based on positive linear maps. Stability analysis of such equations is also an important ingredient in modern RMT, including the characterization of local spectral behavior \cite{benaych2016lectures,erdHos2017dynamical}.

\subsection{Contributions}
The main contributions of this paper are summarized as follows:
\begin{itemize}
    \item[1)] We establish a probabilistic norm bound for a generally correlated Gaussian matrix. Specifically, under proportional dimensional growth and a uniformly bounded covariance matrix $\BC$, we prove that its spectral norm is almost surely bounded and has uniformly bounded moments of every finite order. This result complements the classical Gaussian tools.

    \item[2)] Building on this norm control, we derive a deterministic approximation for the trace form of the resolvent of $\BH\BH^H$, together with an explicit convergence rate governed by the structural parameter $\mathsf{E}_N$ defined in \eqref{eq: convergence E_N}, which captures the intrinsic structure of covariance $\BC$. The resolvent approximation further yields deterministic approximations for linear spectral statistics (LSS) and the EMI, recovering existing results for separable \cite{Hachem2008ANewApproach}, column-independent \cite{AblaNoEigen}, mixed-separable \cite{zhang2013jsacMIMO}, and additive-separable \cite{Dupuy2011freqselec} channel models as special cases.

    \item[3)] The resulting resolvent approximation is characterized by a pair of matrix-valued self-consistent equations. We establish the existence and uniqueness of their solution  within the class of matrix-valued Stieltjes transforms and provide an iterative algorithm for its numerical computation. Moreover, we develop a framework based on positive linear maps to control the stability of these equations. This stability framework is central to deriving the explicit convergence rate.

    \item[4)] Finally, we use the resolvent approximation to derive a deterministic approximation for the EMI and apply it to a downlink distributed MIMO system and an uplink heterogeneous MIMO system. For both settings, we obtain an $\mathcal{O}(N^{-1/2})$ convergence rate, where $N$ is the number of receive antennas.  Simulation results validate the accuracy of the approximation. Furthermore, the analysis of general Gaussian channels can be specialized to a broad range of MIMO systems beyond those considered here.
    
\end{itemize}

\textit{Paper Outline:}
The paper is organized as follows. In Section \ref{Sec: Sys_Model}, we introduce the general MIMO channel model and assumptions used throughout the paper. In Section \ref{Sec:The res}, we
present the main theoretical results, including the boundedness of the channel spectral norm and the deterministic approximation of the resolvent and the MI. In Section \ref{Sec: App comm}, we apply the results to the analysis of downlink distributed MIMO systems and uplink heterogeneous MIMO communications. Numerical simulations are provided in
Section \ref{Sec: simulations} to validate the accuracy of the theoretical results and Section \ref{Sec: Conclu} concludes this paper. 
\par
\textit{Notation:}
Throughout this paper, lowercase and uppercase boldface letters denote vectors and matrices, respectively. The set $\mathbb{U}(n)$ denotes the group of $n$-by-$n$ unitary matrices. The superscripts $(\cdot)^\top$, $(\cdot)^*$, and $(\cdot)^H$ stand for transpose, entrywise complex conjugation, and Hermitian transpose, respectively. For a matrix \(\mathbf A\), \(\operatorname{rank}(\mathbf A)\), \(\|\mathbf A\|\), and \(\|\mathbf A\|_F\) denote its rank, spectral norm, and Frobenius norm, respectively. The trace of a square matrix \(\mathbf A\) is denoted by \(\operatorname{Tr}\mathbf A\). For a Hermitian matrix \(\mathbf A\), \(\lambda_{\min}(\mathbf A)\) and \(\lambda_{\max}(\mathbf A)\) denote its smallest eigenvalue and largest eigenvalue, respectively. The $N$-by-$N$ identity matrix is denoted by $\BI_N$, and the subscript is omitted when the dimension is clear from the context. For a vector $\Bx$, $\norm{\Bx}$ denotes the Euclidean norm. For a matrix $\BH$, $\Vecm(\BH)$ stacks the columns of $\BH$ into a vector. The notation $[\BA]_{i,j}$ or $A_{i, j}$ represents the $(i,j)$-th entry of $\BA$, while $\BA_{ij}$ is used for the $(i,j)$-th block of a block matrix when the block partition is specified. The Kronecker product and the Hadamard product are denoted by $\otimes$ and $\odot$, respectively. For two matrices of the same size, $\innerprod{\BA}{\BB}=\Tr(\BA\BB^H)$ denotes the Hilbert-Schmidt inner product. The symbols $\E[\cdot]$, $\Var(\cdot)$, and $\PP(\cdot)$ denote expectation, variance, and probability, respectively. The notation $\Bx\sim\mathcal{CN}(\Bmu,\BSigma)$ means that $\Bx - \Bmu$ is a circularly symmetric complex Gaussian vector with covariance $\BSigma$. For two Hermitian matrices $\BA$ and $\BB$, $\BA\succeq\BB$ means that $\BA-\BB$ is Hermitian non-negative. 
We use $\mathcal{O}(\cdot)$ for the standard big-O notation, with the implicit constants being independent of the system dimensions unless otherwise specified. The asymptotic notation $a \asymp b$ means that there exists a constant $K>0$ such that
$
K^{-1}\leq\liminf|a/b|
\leq\limsup|a/b|\leq K.
$
The indicator function  is denoted by $\Ind_{\{\cdot\}}$. For a positive integer \(n\), we write \([n]=\{1,2,\ldots,n\}\). We denote the upper half-plane by $\C_+ = \{z \in \C : \Im(z) > 0\}$ and the non-negative real axis by $\R_+ = \{x \in \R : x \geq 0\}$.

\section{System Model}
\label{Sec: Sys_Model}
Consider a general MIMO system with $M$ transmit and $N$ receive antennas, where the channel matrix between the transmitter and receiver is denoted by $\BH \in \C^{N \times M}$.  Thus, the received signal is given by
\begin{equation}
    \By = \BH \Bx + \Bn \in \C^{N},
\end{equation}
where $\Bx \in \mathbb{C}^M$ represents the transmit signal with zero mean and covariance $\E[\Bx\Bx^H] = \BI_M$ and $\Bn \sim \mathcal{CN}(0, \sigma^2 \BI_N)$ denotes the additive white Gaussian noise (AWGN) at the receiver.
\par
In this work, we consider a generalized Rayleigh fading model with an arbitrary correlation structure, which includes existing channel models as special cases, e.g., i.i.d. model \cite{foschini1998limits}, Kronecker model \cite{MIMODaShan2000}, and Weichselberger model \cite{Kermoal2002, Weichs2006channel}. Let $\Bh=\Vecm(\BH)$ be the vectorized channel matrix. We assume 
\begin{align}
    \label{eq: R channel model}
    \Bh \sim \mathcal{CN}\left(\mathbf{0}, {\BC}/ M\right),
\end{align}
where $\BC \in \C^{NM \times NM}$ is deterministic and Hermitian non-negative. 
\par
We assume that perfect channel state information (CSI) is available at the receiver, whereas the transmitter has access to statistical CSI. The MI of the considered MIMO channel is given by
\begin{align}
    \mathcal{I}(\sigma^2)
    &=
    \log\det\left(
        \BI_N+\frac{1}{\sigma^2}\BH\BH^H
    \right) =
    \sum_{i=1}^{N}
    \log\left(1+ \lambda_i / \sigma^2 \right) =
    N\int_{\R_+}
    \log\left(1+ \lambda / \sigma^2 \right)
    \widehat{F}^N(\dd\lambda), 
    \label{eq: Def of MI}
\end{align}
where $
    0\leq\lambda_1\leq\lambda_2\leq\cdots\leq\lambda_N $
denote the eigenvalues of $\BH\BH^H$. The corresponding empirical spectral distribution (ESD) is defined as
\begin{align}
    \widehat{F}^N(x) =
    \frac{1}{N}\sum_{i=1}^{N}\Ind_{\{\lambda_i\leq x\}}.
\end{align}
In this work, we will characterize the deterministic limit of the ESD $\widehat{F}^N$ and then derive an explicit approximation for the EMI of the system.  To facilitate the analysis, we make the following assumptions.
\begin{assumption}
\label{assumpt: A1}
The dimensions $N$ and $M$ approach infinity with 
    \begin{equation}
        0< \liminf_{N \to \infty} \frac{N}{M} \leq \limsup_{N \to \infty} \frac{N}{M} < \infty.
    \end{equation}
\end{assumption}

\begin{assumption} 
\label{assumpt: A2}
The spectral norm of $\BC$ satisfies
    \begin{equation}
        \limsup_{N \to \infty } \norm{\BC} < \infty.
    \end{equation}
\end{assumption}
\begin{assumption}
\label{assumpt: A3}
The minimal eigenvalue of $\BC$ satisfies
    \begin{equation}
        \liminf_{N \to \infty } \lambda_{\min}(\BC) > 0.
    \end{equation}
\end{assumption}
Assumption \ref{assumpt: A1} describes the asymptotic regime where both $N$ and $M$ tend to infinity at the same pace, which is a common setting in the study of large-dimensional random matrices. In what follows, we use $N\to\infty$ to denote this asymptotic regime. Strictly speaking, the eigenvalues $\{\lambda_i\}_{1 \leq i\leq N} \equiv \{\lambda_i^{(N)}\}_{1 \leq i\leq N}$ and matrices $\BH \equiv \BH_N$ and $\BC \equiv \BC_N$ depend on $N$. For notational simplicity, we suppress the index $N$ when there is no ambiguity and recover it when necessary. Assumptions \ref{assumpt: A2} and \ref{assumpt: A3} ensure that the eigenvalues of the correlation matrix $\BC$ lie within a compact interval in $(0, \infty)$, thereby avoiding the divergence of eigenvalues and cases of rank deficiency.

\section{Main Results}
\label{Sec:The res}
In this section, we present the main theoretical results on the spectral behavior of the covariance matrix $\BH\BH^H$ and the deterministic approximation for the EMI. In particular, we first establish a probabilistic bound for the spectral norm of $\BH$ in Section \ref{Sec_A_bound_for_the_Spectral_Norm}. We then characterize the asymptotic behavior of $\BH\BH^H$ through its resolvent in Section \ref{Sec_Converg_of_Res}, which further yields the  deterministic approximation for the EMI in Section \ref{Sec_Conv_EMI}.

\subsection{A Bound for the Spectral Norm}
\label{Sec_A_bound_for_the_Spectral_Norm}
To study the eigenvalue distribution of the covariance matrix $\BH\BH^H$, it is necessary to first establish a probabilistic bound for the spectral norm of $\BH$. The following theorem shows that the spectral norm of the generally correlated Gaussian matrix is bounded.
\begin{theorem}
\label{thm: bound norm}
    Assume that Assumptions \ref{assumpt: A1} and \ref{assumpt: A2} hold. Then, there exists a constant $K > 0$ (independent of $N, M$) such that
    \begin{align*}
        \PP \left( \limsup_{N \to \infty} \norm{\BH} \leq K \right) = 1. \numberthis \label{eq: limsup prob bound H}
    \end{align*}
    Additionally, for every positive number $p$, the following moment bound holds
    \begin{equation}
        \sup_{N \geq 1} \E  \left[ \norm{\BH}^p  \right] \leq K_p,
    \end{equation}
    where the constant $K_p > 0$ is also  independent of $N$ and $M$.
\end{theorem}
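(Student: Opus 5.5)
Write $\Bh = M^{-1/2}\BC^{1/2}\Bz$ with $\Bz\sim\mathcal{CN}(\mathbf{0},\BI_{NM})$, and let $\BX\in\C^{N\times M}$ be the matrix with $\Vecm(\BX)=\Bz$. Then $\BX$ has i.i.d.\ standard complex Gaussian entries, and $\BH$ is the image of $\BX$ under the linear map $\mathcal{L}:\BX\mapsto M^{-1/2}\,\Vecm^{-1}(\BC^{1/2}\Vecm(\BX))$.

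The plan is to combine two ingredients:
\begin{itemize}
\item[(a)] a bound on $\E\norm{\BH}$ that is uniform in $N$;
\item[(b)] Gaussian concentration of the Lipschitz function $\BX\mapsto\norm{\BH}$.
\end{itemize}

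For (b), the Lipschitz constant is easy. With respect to the Frobenius (Euclidean) norm on $\Bz$,
\[
\abs{\norm{\BH(\BX)}-\norm{\BH(\BX')}}\leq \norm{\BH(\BX)-\BH(\BX')}_F\leq M^{-1/2}\norm{\BC}^{1/2}\norm{\BX-\BX'}_F .
\]
By Assumption \ref{assumpt: A2}, the Gaussian concentration inequality (Tsirelson--Ibragimov--Sudakov) gives
\[
\PP\left(\norm{\BH}\geq \E\norm{\BH}+t\right)\leq \exp\!\left(-\frac{cMt^2}{\norm{\BC}}\right).
\]

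For (a), I would use a Gaussian comparison bound for matrices with arbitrary covariance structure. The cleanest choice is the noncommutative Khintchine / matrix Gaussian series bound. Write $\BH=\sum_k g_k \BA_k$ with $g_k$ i.i.d.\ standard complex Gaussian and deterministic $\BA_k\in\C^{N\times M}$, obtained from the columns of $M^{-1/2}\BC^{1/2}$. Tropp's bound for Gaussian series then gives
\[
\E\norm{\BH}\lesssim \sigma_*\sqrt{\log(N+M)},\qquad \sigma_*^2=\max\Bigl\{\Bigl\|\sum_k\BA_k\BA_k^H\Bigr\|,\Bigl\|\sum_k\BA_k^H\BA_k\Bigr\|\Bigr\}.
\]
The $\log$ factor is fatal here, so instead I would invoke the sharp bound of Bandeira--Boedihardjo--van Handel (free-probability comparison):
\[
\E\norm{\BH}\lesssim \sigma_*+v^{1/2}\sigma_*^{1/2}(\log d)^{3/4},\qquad v^2=\norm{\mathrm{Cov}(\Bh)}=\norm{\BC}/M .
\]
This gives a uniform bound provided $\sigma_*=\mathcal{O}(1)$.

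To check $\sigma_*=\mathcal{O}(1)$, note that $\sum_k\BA_k\BA_k^H=\E[\BH\BH^H]$. For unit vectors $\Bu\in\C^N$, $\Bv\in\C^M$,
\[
\abs{\Bu^H\E[\BH\BH^H]\Bv}\leq\sum_{j}\abs{\E[(\Bu^H\BH\Be_j)(\Be_j^H\BH^H\Bv)]}.
\]
Each term is a covariance of the form $(\Be_j\otimes\Bu)^H(\BC/M)(\Be_j\otimes\Bv)$, which is at most $\norm{\BC}/M$ in absolute value. Summing over the $M$ indices $j$ gives $\norm{\E\BH\BH^H}\leq\norm{\BC}$, and symmetrically $\norm{\E\BH^H\BH}\leq (N/M)\norm{\BC}$. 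So $\sigma_*=\mathcal{O}(1)$ under Assumptions \ref{assumpt: A1}--\ref{assumpt: A2}. Moreover $v\,(\log d)^{3/2}=\mathcal{O}(M^{-1/2}\log^{3/2}M)\to 0$. Hence $\E\norm{\BH}\leq K_0$ uniformly in $N$.

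\textbf{Main obstacle.} The main obstacle is exactly this step: obtaining a dimension-free bound on $\E\norm{\BH}$ without a logarithmic loss for an arbitrary covariance $\BC$. If one prefers to avoid the Bandeira--Boedihardjo--van Handel theorem, a fallback is a moment method. Bound $\E\Tr(\BH\BH^H)^{k}$ with $k\asymp\log N$ using Wick's formula, control each pairing by $\norm{\BC}^k M^{-k}$ times the number of free index sums, and argue that non-planar pairings are suppressed. This is more laborious under general correlations, so I would rely on the comparison theorem.

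Finally, combine (a) and (b). Take $K=K_0+1$. The tail bound gives $\PP(\norm{\BH}>K)\leq e^{-cM}$, which is summable in $N$ (since $M\asymp N$). Borel--Cantelli then yields \eqref{eq: limsup prob bound H}. For the moments,
\[
\E\norm{\BH}^p\leq 2^{p}\Bigl(K_0^p+\E\abs{\norm{\BH}-\E\norm{\BH}}^p\Bigr),
\]
and the subgaussian tail with variance proxy $\norm{\BC}/M\leq C$ bounds the last term by a constant depending only on $p$. This gives $K_p$.
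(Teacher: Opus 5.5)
Your proof is correct, but it takes a different route from the paper. The paper never bounds $\E\norm{\BH}$ separately. For a fixed unit $\Bu\in\C^N$ it writes $\norm{\BH^H\Bu}^2$ as a Gaussian quadratic form $\Bz^H\BG\Bz$ with $\rank\BG\le M$ and $\norm{\BG}\le\norm{\BC}/M$. A Chernoff bound then gives $\PP(\norm{\BH^H\Bu}\ge t)\le e^{-Mt^2/(3\norm{\BC})}$. A union bound over a $\tfrac12$-net of the unit sphere of $\C^N$, of cardinality at most $5^{2N}$, yields $\PP(\norm{\BH}\ge t)\le\exp(2N\ln 5-Mt^2/(12\norm{\BC}))\le e^{-NK_1t^2}$ for $t\ge K$. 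Both claims follow from this single tail bound, by Borel--Cantelli and by integrating the tail.

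You instead combine Lipschitz concentration of $\norm{\BH}$ with the Bandeira--Boedihardjo--van Handel comparison theorem for the mean. Your checks $\sigma_*^2\le\max\{1,N/M\}\norm{\BC}$ and $v^2=\norm{\BC}/M$ are right. One typo: in the display for $\E[\BH\BH^H]$ the vector $\Bv$ should lie in $\C^N$, not $\C^M$.

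Your route buys sharpness. It gives $\E\norm{\BH}\le 2\sigma_*+o(1)$ with $\sigma_*^2=\max\{\norm{\widetilde{\cL}(\BI_M)},\norm{\cL(\BI_N)}\}$, whereas the net argument only produces an unspecified constant depending on $\norm{\BC}$ and $N/M$. The cost is invoking a deep recent theorem for a statement that only needs an $\cO(1)$ bound.

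The ``main obstacle'' you identify is not actually an obstacle in the proportional regime. The $\cO(N)$ entropy of the net is absorbed by the $\Theta(M)$ degrees of freedom in each $\norm{\BH^H\Bu}^2$, so no logarithm appears. Your own ingredient (b) already suffices: $\Bz\mapsto\norm{\BH^H\Bu}$ is $\sqrt{\norm{\BC}/M}$-Lipschitz with mean at most $\sqrt{\Bu^H\E[\BH\BH^H]\Bu}\le\sqrt{\norm{\BC}}$. A union bound over the same net then gives the sub-Gaussian tail for $\norm{\BH}$ in a few lines. That makes both the BBvH step and the Wick-formula fallback unnecessary.
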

\textit{Proof:} The proof of Theorem \ref{thm: bound norm} is given in Appendix \ref{app: proof of thm bound norm}.\qed
\par
\begin{remark}
    In \cite[Theorem 2]{RubioCLT2012}, the bound in \eqref{eq: limsup prob bound H} was established for random matrices with i.i.d. entries distributed as $\mathcal{CN}(0,1/M)$. Theorem \ref{thm: bound norm} extends this result to a generally correlated Gaussian matrix, where $\BH$ is obtained by stacking an arbitrary Gaussian vector, allowing arbitrary correlations among the entries of $\BH$.
\end{remark}

\subsection{Deterministic Approximation of the Resolvent}
\label{Sec_Converg_of_Res}

Having established a bound on the spectral norm $\norm{\BH}$, we now turn to the collective behavior of the eigenvalues for $\BH\BH^H$, as characterized by its ESD. A fundamental tool for this purpose is the \textit{resolvent} \cite{bai2010spectral, couillet2011random}. In this subsection, we derive a deterministic approximation for the trace form of the resolvent. Specifically, the resolvent associated with $\BH\BH^H$ is defined as
\begin{align}
    \BQ(z)  =\left( \BH \BH^H - z\BI_N\right)^{-1}, ~~ z \in \C - \R_+.
\end{align}
The normalized trace of the resolvent is precisely the Stieltjes transform of the ESD of $\BH\BH^H$ \cite{bai2010spectral}. In particular, for $z \in \C - \R_+$,
\begin{align}
    \frac{1}{N}\Tr\BQ(z)
    &=
    \frac{1}{N}\sum_{i=1}^{N}\frac{1}{\lambda_i-z}
    =
    \int_{\R_+}\frac{\widehat{F}^{N}(\dd\lambda)}{\lambda-z} = \widehat{m}_N(z).
\end{align}
Consequently, the asymptotic behavior of the ESD can be characterized through the corresponding resolvent \cite{hachem2007deterministic}. In this section, we study the convergence of the resolvent as $N \to \infty$. To facilitate the presentation, we first introduce some notation. Define the length-$N$ index vectors $\Bvarsigma_j = [(j-1)N + 1, (j-1)N + 2, \ldots, jN]^\top$ for $j \in [M]$. For an $MN \times MN$ matrix $\BM$, we denote by $\BM_{ij}$ the $N \times N$ subblock $\BM[\Bvarsigma_i, \Bvarsigma_j]$ for brevity. 
Define two linear maps $\cL: \C^{N \times N} \mapsto \C^{M \times M}$ and $\widetilde{\cL}: \C^{M \times M} \mapsto \C^{N \times N}$ by
\begin{align}
    \label{eq: def of two operators}
    \cL(\BX) = \E_{\bar{\BH}} \left[ \bar{\BH}^H \BX \bar{\BH}\right], ~~ \widetilde{\cL}(\BY) = \E_{\bar{\BH}} \left[ \bar{\BH} \BY \bar{\BH}^H \right],
\end{align}
where $\bar{\BH}$ is an independent copy of ${\BH}$, while $\BX \in \C^{N \times N}$ and $\BY \in \C^{M \times M}$ are arbitrary deterministic matrices. We note that both $\cL$ and $\widetilde{\cL}$ are entirely determined by the correlation matrix $\BC$. In particular, we have 
$
[\cL(\BX)]_{i, j} = \frac{1}{M}\Tr \BC_{ji}\BX $ for each $i, j \in [M]$ and $ \widetilde{\cL}(\BY) = \frac{1}{M} \sum_{a, b = 1}^M  \BC_{ab}[\BY]_{a, b}
$. The following matrix-valued self-consistent equations describe the key quantities that approximate the resolvent
\begin{align*}
\left\{
\begin{array}{rcl}
\BT(z) &=& -\left[z \Big(  \BI_N + \widetilde{\mathcal{L}}[\widetilde{\BT}(z)] \Big)\right]^{-1}, \\
\widetilde{\BT}(z) &=& -  \left[ z\Big(\BI_M + \mathcal{L}[{\BT}(z)] \Big) \right]^{-1},
\end{array}
\right. \numberthis \label{eq: DE systems}
\end{align*}
for $z \in \C - \R_+$.
Let \(\mathcal S_n(\mathbb R_+)\) denote the class of Stieltjes transforms of finite positive semidefinite $n$-by-$n$ matrix-valued measures on \(\mathbb R_+\) (see details in Appendix \ref{app: proof of prop exist unique DE}). Proposition \ref{prop: exist unique DE} establishes the existence and uniqueness of the solution within the given class.
\begin{proposition}
\label{prop: exist unique DE}
    The matrix-valued system of equations \eqref{eq: DE systems} admits a unique solution $(\BT, \widetilde{\BT})$ such that $\mathbf{T} \in \mathcal{S}_N(\R_+)$ and $\widetilde{\BT} \in \mathcal{S}_M(\R_+)$ with total masses $\bm{\mu}_{\BT}(\R_+) = \BI_N$ and $\bm{\mu}_{\widetilde{\BT}}(\R_+) = \BI_M$.
\end{proposition}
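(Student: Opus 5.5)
The proof follows the strategy of Hachem, Loubaton and Najim \cite{hachem2007deterministic}, lifted to matrix-valued Stieltjes transforms and the positive linear maps $\cL,\widetilde{\cL}$. Throughout we use two facts. First, $\cL$ and $\widetilde{\cL}$ are \emph{completely positive}: they are expectations of congruences $\BX\mapsto\bar{\BH}^H\BX\bar{\BH}$. Second, by Assumption~\ref{assumpt: A2}, $\norm{\cL(\BX)}\leq K\norm{\BX}$ and $\norm{\widetilde{\cL}(\BY)}\leq K\norm{\BY}$, since $\E\norm{\bar{\BH}}^2$ is bounded by Theorem~\ref{thm: bound norm}.

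\textbf{Step 1: the iteration preserves $\mathcal S(\R_+)$.} We use the characterization of $\mathcal S_n(\R_+)$ with total mass $\BI_n$: a function $\BF$ analytic on $\C-\R_+$ belongs to this class iff
\[
\Im\BF(z)\succeq0 \text{ and } \Im(z\BF(z))\succeq0 \text{ for } z\in\C_+, \qquad -y\,\BF(-y)\to\BI_n \text{ as } y\to\infty .
\]
We show that if $\widetilde{\BT}\in\mathcal S_M(\R_+)$ with mass $\BI_M$, then $\Psi(\widetilde{\BT}):=-[z(\BI_N+\widetilde{\cL}(\widetilde{\BT}))]^{-1}$ lies in $\mathcal S_N(\R_+)$ with mass $\BI_N$.

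Write $\BA=z(\BI+\widetilde{\cL}(\widetilde{\BT}))=z\BI+\widetilde{\cL}(z\widetilde{\BT})$. Positivity of $\widetilde{\cL}$ gives $\Im\BA\succeq\Im(z)\BI\succ0$, so $\BA$ is invertible and $\Im(-\BA^{-1})=\BA^{-1}(\Im\BA)\BA^{-H}\succeq0$. Similarly, $\Im\big(z\cdot(-\BA^{-1})\big)=\Im\big(-(\BI+\widetilde{\cL}(\widetilde{\BT}))^{-1}\big)\succeq0$ because $\Im\widetilde{\BT}\succeq0$. On the negative axis, $\widetilde{\BT}(-y)\succeq0$ and $\norm{\widetilde{\BT}(-y)}\leq 1/y$. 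Hence $\widetilde{\cL}(\widetilde{\BT}(-y))\to0$, which gives invertibility on $\R_-$ and $y\,\Psi(-y)\to\BI$. The same argument applies to the second equation.

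\textbf{Step 2: existence.} Start from $\BT^{(0)}=-z^{-1}\BI_N$ and alternate the two maps to define $\BT^{(k)}$ and $\widetilde{\BT}^{(k)}$; Step~1 keeps every iterate in the class. For $z=-y$ with $y$ large, the iteration is a contraction in operator norm. Indeed,
\[
\BT_1-\BT_2=\BT_1\,z\,\widetilde{\cL}(\widetilde{\BT}_1-\widetilde{\BT}_2)\,\BT_2 ,
\qquad \norm{\BT_i}\leq 1/y ,
\]
so
\[
\norm{\BT_1-\BT_2}\leq K y^{-1}\norm{\widetilde{\BT}_1-\widetilde{\BT}_2},
\]
and the composite map has Lipschitz constant $K^2/y^2<1$. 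Thus the iterates converge on a half-line. All iterates are Stieltjes transforms of probability-type matrix measures and hence are locally uniformly bounded on compacts of $\C-\R_+$ (by $1/\dist(z,\R_+)$). Montel's theorem applied entrywise, together with the identity theorem, then gives convergence on all of $\C-\R_+$ to analytic limits $\BT,\widetilde{\BT}$. These satisfy \eqref{eq: DE systems}, the inverses being continuous since $\Im\BA\succeq\Im(z)\BI$. Closedness of the class under locally uniform limits with preserved mass needs a tightness check. Here $-y\BT^{(k)}(-y)\to\BI$ uniformly in $k$, because $\norm{\widetilde{\cL}(\widetilde{\BT}^{(k)}(-y))}\leq K/y$ uniformly.

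\textbf{Step 3: uniqueness (main obstacle).} Let $(\BT_1,\widetilde{\BT}_1)$ and $(\BT_2,\widetilde{\BT}_2)$ be two solutions in the class. For $z=-y$ with $y>K$, the contraction estimate above already forces $\BT_1=\BT_2$. Both are analytic on the connected domain $\C-\R_+$, so the identity theorem extends equality everywhere.

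This avoids the delicate general-$z$ stability argument, which would need a Perron–Frobenius/spectral radius bound for the positive map
\[
\BX\mapsto\BT\,\widetilde{\cL}\big(\widetilde{\BT}\,\cL(\BX)\,\widetilde{\BT}^H\big)\,\BT^H ,
\]
showing that $|z|^2$ times this map has spectral radius $<1$, using $\Im\BT=|z|^2\,\BT\,\widetilde{\cL}(\Im\widetilde{\BT})\,\BT^H+\Im(z)\BT\BT^H$. The remaining care point is verifying that a priori bounds $\norm{\BT_i(-y)}\leq1/y$ hold for \emph{any} solution in the class. They do, since an element of $\mathcal S_N(\R_+)$ with mass $\BI_N$ evaluated at $-y$ is $\int(\lambda+y)^{-1}\bm\mu(\dd\lambda)$, which lies between $0$ and $\BI/y$.
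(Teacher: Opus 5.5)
Your proposal is correct and follows essentially the paper's argument. Both proofs show that the maps $\widetilde{\BT}\mapsto-[z(\BI_N+\widetilde{\cL}(\widetilde{\BT}))]^{-1}$ and its counterpart preserve the class, run a Picard iteration that contracts where $\dz$ is large, extend convergence to all of $\C-\R_+$ by normal families and the identity theorem, and get uniqueness from the same contraction; the differences are cosmetic (you contract on the half-line $z=-y$ rather than the paper's region $\mathbb{D}\subset\C_+$, and you iterate alternately rather than simultaneously). One small sign slip: since $\BF(-y)=\int(\lambda+y)^{-1}\bm{\mu}(\dd\lambda)\succeq 0$, the mass condition should read $y\,\BF(-y)\to\BI_n$, not $-y\,\BF(-y)\to\BI_n$ (you use the correct sign in Step~1).
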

\textit{Proof:} The proof of Proposition \ref{prop: exist unique DE} is given in Appendix \ref{app: proof of prop exist unique DE}. \qed
\par
From the proof of Proposition \ref{prop: exist unique DE}, an iterative algorithm for solving \eqref{eq: DE systems} can be derived, which is presented in Algorithm \ref{alg: solve for de}. When there is no confusion, we denote by $(\BT(z), \widetilde{\BT}(z))$ the unique solution from Proposition \ref{prop: exist unique DE} in the following.
\begin{remark}[Computational complexity]
In general, since each iteration of Algorithm~\ref{alg: solve for de} has a complexity of order $\mathcal{O}(N^4)$, the overall complexity is given by $
\mathcal{O}(N^4\log(\epsilon^{-1})),
$
where $\epsilon > 0$ denotes the prescribed solution accuracy. For channel models with additional correlation structures, this complexity can be substantially reduced. In particular, when the columns of $\BH$ are mutually independent \cite{AblaNoEigen} or when $\BH$ follows a Kronecker model \cite{Hachem2008ANewApproach}, the overall complexity is reduced to
$
\mathcal{O}(N^3\log(\epsilon^{-1})).
$
\end{remark}

\begin{algorithm}
\caption{Numerical Algorithm For Solving $(\BT, \widetilde{\BT})$}
\label{alg: solve for de}
\begin{algorithmic}[1]
\STATE Set the parameter $z \in \C - \R_+$, the correlation matrix $\BC$, and the iteration counter $t = 0$. Initialize the matrices $\BT^{(0)} = -\frac{1}{z}\BI_N$ and $\widetilde{\BT}^{(0)} = -\frac{1}{z}\BI_M$.
\REPEAT
    \STATE $t \leftarrow t + 1$.
    \STATE $\BT^{(t)} \leftarrow -\frac{1}{z} [\BI_N + \widetilde{\cL}(\widetilde{\BT}^{(t-1)}) ]^{-1}$.
    \STATE $\widetilde{\BT}^{(t)} \leftarrow -\frac{1}{z} [ \BI_M + {\cL}(\BT^{(t-1)}) ]^{-1}$.
\UNTIL Convergence
\STATE \textbf{Output:} $(\BT^{(t)}, \widetilde{\BT}^{(t)})$.
\end{algorithmic}
\end{algorithm}

Before we present the main result, we introduce some additional notation for ease of presentation. Let $\cK: \C^{NM \times NM} \mapsto \C^{M \times M}$ be the map defined by $[\cK(\BM)]_{i, j} = \norm{\BM_{ij}}$, and let $\cW(\BB)$ denote the bandwidth of matrix $\BB$ such that $[\BB]_{i, j} = 0$ for $|i - j| > \cW(\BB)$. The following theorem demonstrates the deterministic approximation of the trace form of the resolvent.
\begin{theorem}[Deterministic approximation for $\BQ(z)$]
\label{thm: de for the resolvent}
    Assume Assumptions \ref{assumpt: A1}-\ref{assumpt: A3} hold. Let $\BA$ be a deterministic matrix with uniformly bounded spectral norm, i.e., $\limsup_{N \to \infty} \norm{\BA} < \infty$. Then, for any $z \in \C - \R_+$, we have 
    \begin{equation}
    \label{eq: resolvent converge}
        \Tr \BA\E \BQ(z) = \Tr \BA \BT(z) + \mathcal{O}\left(\frac{{\Psi}(z) \mathsf{E}_N}{\dz^4}\right),
    \end{equation}
    where 
    \begin{align}
    \label{eq: convergence E_N}
    \mathsf{E}_N = \frac{1}{N}{\min\left\{\sqrt{\ell},  1 + \inf_{\BU \in \mathbb{U}(M)}\norm{\cK(\BC_U)}^2\min(N, \cW[\cK(\BC_U)]+1)\right\}}
    \end{align}
    with $\BC_U = (\BU^\top \otimes \BI_N)\BC(\BU^* \otimes \BI_N)$ and 
    $\ell$ being the dimension of the vector space $\mathrm{Span}\{\BC_{i j}\}_{i, j=1}^M$. Here, $\Psi(z)$ is given by $\Psi(z) = \sum_{9 \leq j \leq 12, 0 \leq i \leq j} \frac{|z|^i}{\dz^j}$ with $\dz = \dist(z, \R_+)$.
\end{theorem}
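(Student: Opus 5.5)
The plan is to compare $\E\BQ(z)$ with $\BT(z)$ in two stages. First, Gaussian tools show that $\E\BQ$ and $\E\widetilde{\BQ}$ satisfy the system \eqref{eq: DE systems} up to a small error, where $\widetilde{\BQ}(z)=(\BH^H\BH-z\BI_M)^{-1}$ is the co-resolvent. Second, a stability argument for \eqref{eq: DE systems} transfers that small error into the estimate \eqref{eq: resolvent converge}. Throughout, Theorem~\ref{thm: bound norm} is used to control moments of $\norm{\BH}$ wherever derivatives of $\BQ$ bring out factors of $\BH$. The elementary bounds $\norm{\BQ},\norm{\widetilde{\BQ}}\leq \dz^{-1}$ are used repeatedly; they explain why the error carries powers of $\dz^{-1}$ and $|z|$.

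\emph{Step 1 (integration by parts).} Writing $\Bh=\Vecm(\BH)$, the integration by parts formula reads
\[
\E[h_p f(\Bh)] = \frac{1}{M}\sum_q C_{pq}\,\E\!\left[\partial f/\partial h_q^*\right].
\]
Apply it to $\E[\BH\BH^H\BQ]$ together with the resolvent identity $\BH\BH^H\BQ=\BI+z\BQ$. In block form this produces the operators $\cL$ and $\widetilde{\cL}$ of \eqref{eq: def of two operators}, evaluated at the random matrices $\BQ$ and $\widetilde{\BQ}$. The result is an identity of the form
\[
\E\BQ = \BR + \BDelta,\qquad \BR=-\left[z\left(\BI_N+\widetilde{\cL}(\E\widetilde{\BQ})\right)\right]^{-1},
\]
and a symmetric identity for $\E\widetilde{\BQ}$ with $\cL(\E\BQ)$. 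Here $\BDelta$ is a sum of covariance-type terms, such as $\E[(\cL(\BQ)-\cL(\E\BQ))\,\cdot\,\widetilde{\BQ}\cdots]$. I also need the lower bound $\BI+\widetilde{\cL}(\cdot)$ being invertible with a controlled inverse. This follows from the positivity of $\Im(z\,\cdot)$, with Assumption~\ref{assumpt: A3} used to keep $\widetilde{\cL}$ and $\cL$ nondegenerate.

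\emph{Step 2 (fluctuations and the rate $\mathsf{E}_N$).} The size of $\Tr\BA\BDelta$ is governed by $\E\norm{\cL(\BQ)-\E\cL(\BQ)}^2$. I will bound this quantity in two independent ways and take the minimum.

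The first route uses a low-dimensional span. Let $\{\BB_k\}_{k\leq\ell}$ be a basis of $\Span\{\BC_{ij}\}$, chosen orthonormal in a suitable weighted inner product. Then $\cL(\BQ)$ is determined by the $\ell$ scalars $\frac1M\Tr\BB_k\BQ$. Each scalar has variance $\mathcal{O}(N^{-2}\dz^{-4})$ by the Poincar\'e--Nash inequality. Summing these variances and applying Cauchy--Schwarz gives the $\sqrt{\ell}/N$ bound.

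The second route uses a banded structure. Replacing $\BH$ by $\BH\BU$ leaves $\BQ$ unchanged and replaces $\BC$ by $\BC_U$. In this rotated frame I bound the operator norm of the fluctuation matrix $\cL(\BQ)-\E\cL(\BQ)$ entrywise, again by Poincar\'e--Nash. Then a Schur-test/banded-matrix bound gives a factor $\norm{\cK(\BC_U)}^2\min(N,\cW+1)/N$. Finally I optimize over $\BU$.

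\emph{Step 3 (stability), which I expect to be the main obstacle.} Set $\BX=\E\BQ-\BT$ and $\widetilde{\BX}=\E\widetilde{\BQ}-\widetilde{\BT}$. Subtracting \eqref{eq: DE systems} from the perturbed identities gives
\[
\BX = z\,\BT\,\widetilde{\cL}(\widetilde{\BX})\,\BR+\BDelta,
\]
together with its twin equation. Composing the two yields $\BX=\Phi(\BX)+\text{error}$, where $\Phi$ is a linear map built from $\BY\mapsto \BT\,\widetilde{\cL}(\cdots\cL(\BY)\cdots)\BR$.

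The core difficulty is showing that $\BI-\Phi$ is invertible with an inverse whose norm grows only polynomially in $|z|/\dz$. My approach follows Hachem et al.\ in replacing positive matrices by positive linear maps. I would take imaginary parts of \eqref{eq: DE systems}. This yields identities of the form $\Im\BT = \Psi_1(\Im\BT)+\Im z\,\BT\BT^H$ with a completely positive map $\Psi_1$. Since $\Im z\,\BT\BT^H\succeq \delta\BI$, with $\delta$ polynomial in $\dz/|z|$ (using Assumption~\ref{assumpt: A3}), this forces the spectral radius of $\Psi_1$ to be at most $1-\delta'$. A Cauchy--Schwarz inequality for positive maps then dominates $\Phi$ by $\Psi_1$, giving $\norm{(\BI-\Phi)^{-1}}\lesssim\Psi(z)$. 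For $z\in(-\infty,0)$ the same argument applies with the sign of $\Re z$ playing the role of $\Im z$.

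\emph{Conclusion.} Combining the stability bound of Step 3 with the fluctuation bounds of Step 2, and testing against $\BA$, yields $|\Tr\BA(\E\BQ-\BT)|=\mathcal{O}(\Psi(z)\mathsf{E}_N/\dz^4)$, which is \eqref{eq: resolvent converge}.
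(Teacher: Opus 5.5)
Your Step 1 is sound (your $\widetilde{\BR}=-[z(\BI_M+\cL(\E\BQ))]^{-1}$ is exactly the paper's $\widetilde{\BT}_\alpha$), and so is your first route in Step 2. In your normalization the error is $\Tr\BA\BDelta=z\,\E\Tr\bigl[\mathring{\widetilde{\BQ}}\,\cL(\mathring{\BQ}\BA\BR)\bigr]$. Expanding $\cL$ over an orthonormal basis of $\Span\{\BC_{ij}\}$ and applying Cauchy--Schwarz to each of the $\ell$ scalar covariances gives $\sqrt{\ell}/N$, as in the paper. For this you need the variance of the partner trace of $\widetilde{\BQ}$, not only that of $\frac1M\Tr\BB_k\BQ$.

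The gap is the second branch of $\mathsf{E}_N$. The error is not controlled by the size of $\cL(\mathring{\BQ})$. It is linear in $\cL(\mathring{\BQ})$: a covariance with the $M\times M$ fluctuation $\mathring{\widetilde{\BQ}}$, whose Frobenius norm can be of order $\sqrt N$. Meanwhile $\E\norm{\cL(\mathring{\BQ})}_F^2=\mathcal{O}(1/N)$ for every admissible $\BC$ (see \eqref{eq: bound on TrLaLaH}), so size bounds alone give only $\mathcal{O}(1)$. Write $W=\cW[\cK(\BC_U)]$. Your rotated, banded entrywise bounds give the $(i,j)$ entry of $\BU^H\cL(\mathring{\BQ}\BA\BR)\BU$ a standard deviation of order $[\cK(\BC_U)]_{ji}/N$, vanishing off the band. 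Pairing these with the $\mathcal{O}(N^{-1/2})$ fluctuations of the entries of $\BU^H\widetilde{\BQ}\BU$ improves the bound only to about $\norm{\cK(\BC_U)}\sqrt{W+1}/\sqrt{N}$. An operator-norm bound combined with $|\Tr\BX\BY|\le M\norm{\BX}\norm{\BY}$ gives $\mathcal{O}(1)$ again. For the column-independent model ($W=0$, $\ell$ up to $M$) both of your routes therefore give only $N^{-1/2}$, whereas the theorem asserts $\mathsf{E}_N\asymp 1/N$.

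What is missing is a second Gaussian integration by parts applied to the error term itself; this is what makes it genuinely second order. The paper applies it to $\E\Tr[\BQ\BH\cL(\mathring{\BQ})\widetilde{\BT}_\alpha\BH^H\BT_\alpha\BA]$ and obtains $\varepsilon_1+\varepsilon_2+\varepsilon_3$:

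$\varepsilon_1$ and $\varepsilon_3$ are products of two $\cL$-type fluctuations. They are $\mathcal{O}(1/N)$ for every $\BC$, and this is where a bound like $\E\norm{\cL(\mathring{\BQ})}_F^2=\mathcal{O}(1/N)$ is the right tool.

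$\varepsilon_2$ comes from differentiating $\cL(\mathring{\BQ})$ itself and is not a fluctuation at all. It equals $\frac{z}{M^2}\E[\Bv_{Q,1}^H\BXi_{C,Q}^\Gamma\Bv_{Q,2}]$ with $\norm{\Bv_{Q,i}}=\mathcal{O}(\sqrt N)$, up to powers of $\norm{\BH}$.

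The structure of $\BC$ enters only through $\norm{\BXi_{C,Q}^\Gamma}\le\norm{\cK(\BC_U)}^2\norm{\abs{\widetilde{\BT}_{\alpha,U}}}/\dz$. The factor $\min(N,W+1)$ comes from the exponential off-diagonal decay of the deterministic matrix $\widetilde{\BT}_{\alpha,U}=-\frac1z(\BI_M+\BU^H\cL(\E\BQ)\BU)^{-1}$, which is your $\widetilde{\BR}$ rotated. Its matrix measure inherits the band of $\cK(\BC_U)$, and the decay is Lemma \ref{lem: bound on banded Stieltjes transform}, proved by a Bernstein/Combes--Thomas argument via Lemma \ref{lemm: band berstein bound}. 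So the band must be exploited on a deterministic equivalent, not on the fluctuation matrix.

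A smaller point in Step 3: your $\Phi$ mixes $(\BT,\widetilde{\BT})$ with $(\BR,\widetilde{\BR})$, so it cannot be dominated by a single map $\Psi_1$ built from $\BT$. You need three further ingredients:

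a second positive map built from the perturbed quantities, and the analogue of $\rho(\cF_{I,\alpha})\le\sqrt{\rho(\cF)\rho(\cF_\alpha)}$;

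a perturbed imaginary-part identity, which keeps that second spectral radius below one only on a region like $\mathfrak{R}$;

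the trivial bound $|\Tr\BA(\E\BQ-\BT)|\le 2N\norm{\BA}/\dz$ outside that region.
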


\textit{Proof:} The proof of Theorem \ref{thm: de for the resolvent} is given in Appendix \ref{app: proof of thm de for the resolvent}. \qed
\par
\begin{remark}
In each term of $\Psi(z)$, the exponent of $|z|$ in the numerator does not exceed that in the denominator $\dz$. The motivation for introducing this polynomial is to ensure that when $x < 0$ and $|x|$ is sufficiently large, $\frac{\Psi(x)}{x^4}$ is integrable over the unbounded interval.
\end{remark}


\begin{remark}[Unitary invariance of $\ell$]
The dimension $\ell$ is an intrinsic parameter determined jointly by the correlation matrix $\BC$ and the matricization of $\Bh$ into $\BH\in\C^{N\times M}$. In particular, $\ell$ is invariant under unitary changes of basis at both the transceiver sides.  Let
$
\widetilde{\BH}=\BU_1\BH\BU_2$ with $
\BU_1\in\mathbb U(N) $ and $
\BU_2\in\mathbb U(M)
$. Then,
the covariance matrix of $\Vecm(\widetilde{\BH})$ is given by
$\widetilde{\BC} = (\BU_2^\top\otimes\BU_1) \BC (\BU_2^*\otimes\BU_1^H) $. Consequently, we have 
$ \dim\Span\{\widetilde{\BC}_{ij}\}_{i,j=1}^M = \ell $,
which implies that $\ell$ is invariant under rotations.
\end{remark}

 Theorem \ref{thm: de for the resolvent} can be used to characterize the asymptotic behavior of the ESD $\widehat{F}^{N}$. In fact, by Proposition~\ref{prop: exist unique DE}, the normalized trace $m_N(z) = \frac{1}{N}\Tr \BT(z)$ is the Stieltjes transform of a deterministic probability measure $F^{N}$, namely,
\begin{align}
  m_N(z)
    =
    \int_{\R_+}\frac{F^{N}(\dd\lambda)}{\lambda-z},
    \qquad z\in\C - \R_+. \label{eq: LSD definition}
\end{align}
Combining Theorem~\ref{thm: de for the resolvent} with Lemma~\ref{lem: variance control}, we obtain, for every fixed $z\in\C - \R_+$,
$
    \widehat{m}_N(z)
    -
    m_N(z)
     \to 0
$ almost surely.
Consequently, we have that $\widehat{F}^N - F^N$ converges weakly to 0 almost surely. Hereafter, we refer to $F^N$ as the limiting spectral distribution (LSD).  The following corollary further quantifies the associated convergence of the spectral functionals.

\begin{corollary} 
\label{coro: functional convergence}    
Let $\varphi \in \cC_c^\infty(\R, \R)$, the set of smooth real-valued functions with compact support in $\R$. Define the random LSS $\widehat{\cH}_N(\varphi) = \int \varphi (\lambda) \widehat{F}^N(\dd \lambda)$ and the deterministic spectral functional ${\cH}_N(\varphi) = \int \varphi (\lambda) {F}^N(\dd \lambda)$. Then, we have 
\begin{align}
    \E \left[ \widehat{\mathcal{H}}_N(\varphi) \right] - \cH_N(\varphi) = \mathcal{O}\left( \frac{\mathsf{E}_N}{N}\right), \quad \Var\left( \widehat{\cH}_N(\varphi) \right) = \mathcal{O}\left(\frac{\max_{x \in \mathrm{supp}(F^N) } | x\varphi'(x)^2|}{N^2} + \frac{\mathsf{E}_N}{N^3}\right).
\end{align}
\end{corollary}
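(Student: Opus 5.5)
The plan is to go from the resolvent estimate of Theorem~\ref{thm: de for the resolvent} (with $\BA=\BI_N/N$) to smooth test functions through the Helffer--Sjöstrand formula, and to handle the variance with the Poincaré--Nash inequality.

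\emph{Bias term.} For $\varphi\in\cC_c^\infty(\R,\R)$ we take an almost-analytic extension
\[
\Phi_k(x+iy)=\Big(\sum_{j=0}^{k}\frac{(iy)^j}{j!}\varphi^{(j)}(x)\Big)\chi(y),
\]
where $\chi$ is a smooth cutoff equal to $1$ near $0$. Then $|\bar\partial\Phi_k(x+iy)|\lesssim |y|^k$ near the real axis, and it is supported in a compact set of $\C$. Both $\widehat F^N$ and $F^N$ are probability measures on $\R_+$, so the Helffer--Sjöstrand formula gives
\[
\E\widehat{\cH}_N(\varphi)-\cH_N(\varphi)=\frac{1}{\pi}\int_{\C_+}\Re\Big[\bar\partial\Phi_k(z)\big(\E\widehat m_N(z)-m_N(z)\big)\Big]\,\dd^2z,
\]
using the symmetry $z\mapsto\bar z$ to restrict to $\C_+$.

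Applying Theorem~\ref{thm: de for the resolvent} with $\BA=\BI_N/N$ yields $|\E\widehat m_N-m_N|=\mathcal{O}(\Psi(z)\mathsf{E}_N/(N\dz^4))$. On the compact support of $\Phi_k$ the factor $|z|$ is bounded. The blow-up $\Psi(z)/\dz^4\lesssim \dz^{-16}$ comes from $\dz\ge \Im z$ for $z\in\C_+$. Choosing $k\ge 17$ makes $|y|^k\,\dz^{-16}$ integrable, and we obtain the $\mathcal{O}(\mathsf{E}_N/N)$ bound. A little care is needed where the support of $\varphi$ meets a neighbourhood of $\R_+$, but there $\dz\ge|y|$ is all that is used.

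\emph{Variance.} We view $\widehat{\cH}_N(\varphi)=\frac1N\Tr\varphi(\BH\BH^H)$ as a function of the standard Gaussian vector $\Bx$, where $\Bh=(\BC/M)^{1/2}\Bx$. The Poincaré--Nash inequality gives $\Var f\le\E\|\nabla_{\Bx} f\|^2$. The differential is
\[
\dd f=\frac1N\Tr\big[\varphi'(\BH\BH^H)(\dd\BH\,\BH^H+\BH\,\dd\BH^H)\big],
\]
so the gradient with respect to $\Bh$ is $\frac{2}{N}\Vecm(\varphi'(\BH\BH^H)\BH)$. Hence
\[
\|\nabla_{\Bx}f\|^2\le\frac{4\|\BC\|}{MN^2}\Tr\big[\BH^H\varphi'(\BH\BH^H)^2\BH\big]=\frac{4\|\BC\|}{MN^2}\sum_i\lambda_i\varphi'(\lambda_i)^2 .
\]
This equals $\mathcal{O}(N^{-2})\int x\varphi'(x)^2\,\widehat F^N(\dd x)$. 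Taking expectations, $\E\int x\varphi'^2\,\dd\widehat F^N$ is split into the deterministic term $\int x\varphi'^2\,\dd F^N$, which is at most $\max_{\mathrm{supp}F^N}|x\varphi'(x)^2|$, plus the bias of the smooth compactly supported function $x\varphi'(x)^2$. Applying the first part of the corollary to that second term gives $\mathcal{O}(\mathsf{E}_N/N)$, and multiplying by $N^{-2}$ yields the stated $\mathsf{E}_N/N^3$ term.

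\emph{Main obstacle.} The delicate point is the bias integral: one must check that the polynomial growth in $\dz^{-1}$ from Theorem~\ref{thm: de for the resolvent} (of order $16$, together with the $|z|$ factors in $\Psi$) is absorbed by a high-order almost-analytic extension uniformly in $N$. The other requirement is that no dependence on the spectral norm enters beyond what Theorem~\ref{thm: bound norm} already controls. Here Theorem~\ref{thm: bound norm} guarantees that the eigenvalues stay in a compact set with overwhelming probability, but since $\varphi$ is compactly supported this is not even needed. The variance step is routine given Assumption~\ref{assumpt: A2}.
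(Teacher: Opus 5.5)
Your proposal is correct and follows essentially the route the paper intends. The paper omits the details and defers to the arguments of the cited works, which are exactly these: a Helffer--Sj\"ostrand (almost-analytic extension) smoothing of the resolvent bound for the bias, with $\dz\geq|\Im z|$ absorbing the polynomial blow-up. For the variance, the Poincar\'e--Nash inequality gives $\Var(\widehat{\cH}_N(\varphi))\lesssim N^{-2}\,\E\int x\varphi'(x)^2\,\widehat F^N(\dd x)$, and applying the bias estimate to the test function $x\varphi'(x)^2$ is precisely what produces the stated form of the variance bound.
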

\textit{Proof:}
The proof follows by combining Theorem~\ref{thm: de for the resolvent} with arguments analogous to those in \cite{AblaNoEigen, Vallet2012Subspace}. The details are therefore omitted. \qed

\begin{remark}[Asymptotic location of extreme eigenvalues]
Corollary~\ref{coro: functional convergence} can be used to characterize the asymptotic locations of the extreme eigenvalues of generally correlated Gaussian matrices. In particular, suppose that there exists some $\epsilon > 0$ such that
$
    \sum_{N \geq 1}N^{\epsilon}\mathsf{E}_N^2<\infty.
$
Then, it can be shown that no eigenvalues of $\BH\BH^H$ occur outside the limiting support $\operatorname{supp}(F^N)$ with probability one \cite{bai1998no, PAUL2009Noeigen, Bai2012Noeigen, Bai1999exact,AblaNoEigen, zhuang2026noeigen}. This no-eigenvalue property can be applied to multi-user MIMO communication to validate the invertibility of the channel Gram matrix, thereby ensuring the feasibility of zero-forcing precoding \cite{AblaNoEigen,zhuang2026noeigen}.
\end{remark}
In Theorem \ref{thm: de for the resolvent} and Corollary \ref{coro: functional convergence}, the parameter $\mathsf{E}_N$ is essential for the convergence rate. A primary observation is that $\frac{1}{N} \lesssim \mathsf{E}_N \lesssim 1$ since $\ell \leq \min(N^2, M^2)$. This indicates that $\mathsf{E}_N$  does not vanish in the worst case. However, as shown in Subsection~\ref{subsec: discusssion}, the order of $\mathsf{E}_N$ is small for many correlated models of $\BH$ in practice.

\subsection{Convergence of the EMI}
\label{Sec_Conv_EMI}
In this subsection, we apply the resolvent approximation established above to characterize the MI defined in \eqref{eq: Def of MI}, which is an LSS of $\BH\BH^H$ associated with
$
f(\lambda)=\log (1+ \lambda / \sigma^2)
$.
Corollary \ref{coro: functional convergence} yields a deterministic approximation for the EMI, while the following theorem provides a more explicit form.
\begin{theorem}
\label{thm: EMI approximation}
    Assume the same settings as in Theorem \ref{thm: de for the resolvent}. Then, we have the following deterministic approximation for the EMI 
    \begin{equation}
    \label{eq: de MI convergence rate}
        \E \left[ \mathcal{I}(\sigma^2) \right] = \overline{\cI}(\sigma^2) + \cO\left(\mathsf{E}_N \right),
    \end{equation}
    where 
    \begin{equation}
        \overline{\mathcal{I}}(\sigma^2) = \log\det\left(\BI_N + \widetilde{\cL}(\widetilde{\BT}(-\sigma^2))\right) + \log\det\left(\BI_M + \cL(\BT(-\sigma^2))\right) - \sigma^2 \innerprod{\cL(\BT(-\sigma^2))}{\widetilde{\BT}(-\sigma^2)}.
    \end{equation}
\end{theorem}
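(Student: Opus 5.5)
The plan is the standard integral-representation argument, as in \cite{Hachem2008ANewApproach}. For $\sigma^2>0$ we start from
\begin{align*}
\log\left(1+\frac{\lambda}{\sigma^2}\right)=\int_{\sigma^2}^{\infty}\left(\frac{1}{\omega}-\frac{1}{\lambda+\omega}\right)\dd\omega ,
\end{align*}
which gives
\begin{align*}
\E\left[\cI(\sigma^2)\right]=\int_{\sigma^2}^{\infty}\left(\frac{N}{\omega}-\Tr\E\BQ(-\omega)\right)\dd\omega .
\end{align*}
We will first show that the deterministic equivalent satisfies the analogous identity
\begin{align*}
\overline{\cI}(\sigma^2)=\int_{\sigma^2}^{\infty}\left(\frac{N}{\omega}-\Tr\BT(-\omega)\right)\dd\omega .
\end{align*}
Subtracting the two identities, the error is $\int_{\sigma^2}^\infty \Tr(\BT(-\omega)-\E\BQ(-\omega))\,\dd\omega$. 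Theorem~\ref{thm: de for the resolvent} with $\BA=\BI_N$ (so $\norm{\BA}=1$) bounds this by $\int_{\sigma^2}^\infty \Psi(-\omega)\mathsf{E}_N/\omega^4\,\dd\omega$. At $z=-\omega$ we have $\dz=\omega$ and $|z|=\omega$, so every term of $\Psi(-\omega)$ is at most $\omega^{-(j-i)}\leq \max(1,\omega^{-12})$. Hence $\Psi(-\omega)/\omega^4$ is integrable on $[\sigma^2,\infty)$, as anticipated in the remark after Theorem~\ref{thm: de for the resolvent}, and the error is $\cO(\mathsf{E}_N)$ with a constant depending only on $\sigma^2$.

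The real work is the identity for $\overline{\cI}$, which we establish by differentiation in $\sigma^2$. Write $\omega=\sigma^2$, $\BT=\BT(-\omega)$, $\widetilde{\BT}=\widetilde{\BT}(-\omega)$, and
\begin{align*}
V(\omega)=\log\det(\BI_N+\widetilde{\cL}(\widetilde{\BT}))+\log\det(\BI_M+\cL(\BT))-\omega\innerprod{\cL(\BT)}{\widetilde{\BT}}.
\end{align*}
We treat $V$ as a function $\Phi(\omega,\BT,\widetilde{\BT})$ of three arguments. The duality $\Tr(\BY\cL(\BX))=\Tr(\widetilde{\cL}(\BY)\BX)$ (both equal $\E\Tr\BY\bar\BH^H\BX\bar\BH$) lets us compute the partial derivatives in $\BT$ and $\widetilde{\BT}$ explicitly.
\begin{itemize}
\item The partial derivative in $\BT$ in a direction $\Delta$ is $\Tr\big[(\BI_M+\cL(\BT))^{-1}\cL(\Delta)\big]-\omega\Tr\big[\cL(\Delta)\widetilde{\BT}\big]$. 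Since $(\BI_M+\cL(\BT))^{-1}=\omega\widetilde{\BT}$ by \eqref{eq: DE systems}, this vanishes.
\item The partial derivative in $\widetilde{\BT}$ vanishes in the same way, using $(\BI_N+\widetilde{\cL}(\widetilde{\BT}))^{-1}=\omega\BT$ together with the duality.
\item The derivative $\dd\BT/\dd\omega$ exists because $\BT$ is analytic off $\R_+$ by Proposition~\ref{prop: exist unique DE}. With the $\BT$- and $\widetilde{\BT}$-derivatives zero, only the explicit $\omega$-dependence survives, so $V'(\omega)=-\Tr\cL(\BT)\widetilde{\BT}$.
\item By the first equation of \eqref{eq: DE systems}, $\omega\BT(\BI_N+\widetilde{\cL}(\widetilde{\BT}))=\BI_N$, hence $\Tr\widetilde{\cL}(\widetilde{\BT})\BT=N/\omega-\Tr\BT$. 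Duality then gives $V'(\omega)=\Tr\BT(-\omega)-N/\omega$.
\end{itemize}

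It remains to show $V(\omega)\to0$ as $\omega\to\infty$, which is the step I expect to need the most care. Because $\BT(-\omega)$ is the Stieltjes transform of a positive matrix measure of total mass $\BI_N$ on $\R_+$, we have $0\preceq\BT(-\omega)\preceq\omega^{-1}\BI_N$, and likewise for $\widetilde{\BT}$. Assumption~\ref{assumpt: A2} bounds $\norm{\cL}$ and $\norm{\widetilde{\cL}}$ uniformly. So $\norm{\cL(\BT)}=\cO(1/\omega)$, and each log-det term is $\cO(N/\omega)$. For the third term, $\omega\,|\innerprod{\cL(\BT)}{\widetilde{\BT}}|\leq \omega\cdot M\cdot\cO(\omega^{-2})$. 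All three terms tend to $0$ for fixed $N$. Integrating $V'$ from $\sigma^2$ to $\infty$ then gives the claimed identity for $\overline{\cI}(\sigma^2)$; each term of the integrand is integrable, being $\cO(N\omega^{-2})$ at infinity. Combined with the first paragraph, this proves \eqref{eq: de MI convergence rate}.
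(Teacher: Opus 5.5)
Your proposal is correct and follows essentially the same route as the paper. Both arguments use the integral representation of $\cI(\sigma^2)$ through $\Tr\BQ(-x)$; a derivative computation for $\overline{\cI}$ in which the variations in $\BT$ and $\widetilde{\BT}$ cancel via the fixed-point equations and the duality $\innerprod{\cL(\BX)}{\BY}=\innerprod{\BX}{\widetilde{\cL}(\BY)}$; the trace identity yielding $\Tr\BT(-x)-N/x$; the vanishing of $\overline{\cI}$ at infinity; and Theorem~\ref{thm: de for the resolvent} with $\BA=\BI_N$, integrated against the integrable weight $\Psi(-x)/x^4$.

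Like the paper, you implicitly use that $\BT(-\omega)$ and $\widetilde{\BT}(-\omega)$ are Hermitian when you replace $\innerprod{\cL(\BT)}{\widetilde{\BT}}$ by $\Tr\cL(\BT)\widetilde{\BT}$. Your explicit check that $\overline{\cI}(\omega)\to0$ as $\omega\to\infty$ fills in a step the paper only asserts.
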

\textit{Proof:} The proof of Theorem \ref{thm: EMI approximation} is given in Appendix \ref{app: proof of thm EMI approximation}. \qed

In Section \ref{Sec: App comm}, we apply this theorem to several MIMO communication systems. To achieve the capacity, the covariance $\BP=\E[\Bx\Bx^H]$ of the transmitted signal needs to be designed subject to the power constraints.  Suppose the feasible region of $\BP$ is $\mathcal{P}$. The following remark provides a unified framework for the design of $\BP$.
\begin{remark}[Iterative algorithm for EMI maximization]
With transmit covariance matrix $\BP$, the MI is given by
$\cI(\BP;\sigma^2) = \log\det(\BI_N+\BH\BP\BH^H / \sigma^2) $,
where $\BH$ is defined in \eqref{eq: R channel model}. The EMI maximization problem is formulated as $\max_{\BP \in \cP} \E [\cI(\BP;\sigma^2)]$. Directly solving this problem is computationally challenging due to the expectation in the objective. However, the deterministic approximation in Theorem \ref{thm: EMI approximation} enables a more efficient design of $\BP$.
\par
By incorporating $\BP$ into $\BC$, the two operators are given by $ \cL_{P}(\BX) = \BP^{1/2}\cL(\BX)\BP^{1/2} $ and  $ \widetilde{\cL}_{P}(\BY)= \widetilde{\cL}(\BP^{1/2}\BY\BP^{1/2})$, where $(\cL, \widetilde{\cL})$ are defined in \eqref{eq: def of two operators}.
Let $(\BT_P(-\sigma^2),\widetilde{\BT}_P(-\sigma^2))$
denote the solution of \eqref{eq: DE systems} with $(\cL,\widetilde{\cL})$ replaced by $(\cL_{P},\widetilde{\cL}_{P})$. When $\mathsf{E}_N$ is small, the  EMI maximization problem can be approximated by the following deterministic optimization problem
\begin{align*}
    \mathrm{P1:} \qquad \max_{\BP \in \cP} ~ \overline{\cI}(\BP;\sigma^2) &=
    \log\det\left( \BI_N+ \widetilde{\cL}_P ( \widetilde{\BT}_P(-\sigma^2))\right) + \log\det\left( \BI_M+\BP \cL(\BT_P(-\sigma^2))\right) \\
    &- 
    \sigma^2
    \innerprod{
        \cL_{P}(\BT_P(-\sigma^2))
    }{
        \widetilde{\BT}_P(-\sigma^2)
    }. \numberthis
\end{align*}
Problem $\mathrm{P1}$ can be solved using various optimization methods \cite{boyd2004convex} or machine learning techniques. We provide one approach here. Since the operator $\cL$ is independent of $\BP$, $\mathrm{P1}$ can be solved by an iterative method. In particular, at the $t$-th iteration, we fix the solution $(\BT^{(t)}_P, \widetilde{\BT}^{(t)}_P)$ and update $\BP$ by solving
\begin{align}
\BP^{(t+1)} =  \underset{\BP \in \cP}{\arg\max} ~ \log\det\left(
        \BI_M +
        \BP {\cL} ({\BT}^{(t)}_P(-\sigma^2))\right). \label{eq: P t_1}
\end{align}
We note that when the feasible set is $\cP = \{\BP \succeq \mathbf{0}: \Tr \BP \leq M P_{\mathrm{tot}}\}$, where $P_{\mathrm{tot}}$ denotes the total power budget, \eqref{eq: P t_1} can be solved using the standard water-filling method. Next, Algorithm~\ref{alg: solve for de} is applied to compute $(\BT^{(t+1)}_P, \widetilde{\BT}^{(t+1)}_P)$. For many structured MIMO channels, this iterative procedure yields an asymptotically capacity-achieving transmit covariance \cite{Dumont2010rician,Couillet2011DE}.
\end{remark}

\subsection{Recovery of Existing Results for Structured Channel Models}
\label{subsec: discusssion}
In this subsection, we demonstrate the broad applicability of Theorems \ref{thm: de for the resolvent} and \ref{thm: EMI approximation} by showing how they recover existing deterministic approximations for various channel models. Specifically, by expressing the correlation of each MIMO channel in the form of $\mathbf{C}$, the two theorems directly yield deterministic approximations for the corresponding resolvent and EMI, together with their convergence rates $\mathsf{E}_N$.


\textit{1) Separable Correlated Model:}  In this case, the distribution of $\BH$ is given by 
\begin{align}
    \BH = \BD^{\frac{1}{2}} \BZ \widetilde{\BD}^{\frac{1}{2}}, \label{Eq_Model_Separable}
\end{align}
where deterministic matrices $\BD \in \C^{N \times N}$ and $\widetilde{\BD} \in \C^{M \times M}$ are Hermitian non-negative and the elements of $\BZ$ are i.i.d. complex Gaussian random variables with $Z_{1, 1} \sim \cC\cN(0, \frac{1}{M})$.  The convergence of the EMI with this model has been extensively studied in the context of MIMO communications \cite{Kermoal2002, Chen2002MIMOtit, Hachem2008ANewApproach} and statistics \cite{PAUL2009Noeigen, DingYang2021separable, Zhang2024nips}. In this setting, we have 
$\Bh = \Vecm(\BH) = (\widetilde{\BD}^{\frac{\top}{2}} \otimes \BD^{\frac{1}{2}}) \Vecm(\BZ)$ and the correlation matrix $\BC$ can be expressed as $\BC = \widetilde{\BD}^{\top} \otimes \BD$. Consequently, $\BC_{ij} = \widetilde{D}_{ji} \BD$ and the dimension $\ell = 1$. Under this setting and given that Assumptions \ref{assumpt: A1}-\ref{assumpt: A3} hold, we obtain $\mathsf{E}_N \asymp \frac{1}{N}$. We note this bound matches that in \cite[Theorem 3, Proposition 5]{Hachem2008ANewApproach}. 

\textit{2) Column-Independent Correlation Model:} For this model, the random matrix $\BH$ is given by
\begin{align}
    \BH = \begin{bmatrix}
        \BD^{\frac{1}{2}}_1 \Bz_1, \BD^{\frac{1}{2}}_2 \Bz_2, \ldots,  \BD^{\frac{1}{2}}_M \Bz_M
    \end{bmatrix}, 
\end{align}
where the $\BD_j$s are $N \times N$ Hermitian non-negative deterministic matrices and the $\Bz_j$s are i.i.d. Gaussian vectors with $\Bz_1 \sim \cC\cN(0, \BI_N / M)$. The column-independent model has rich applications in multi-user MIMO communication systems \cite{Adhikary2013, WagnerMISO2012,  AblaNoEigen, Zhuang2025Decentralized, zhuang2026noeigen} and multivariate statistics \cite{KAMMOUN2017Mestimator, Yin2020singular, louart2021spectral, MeiTianxing2023singular}. It can be verified that the correlation matrix is block diagonal and is given by $\BC = \mathrm{diag}(\BD_j: 1 \leq j \leq M)$. In this case, the dimension $\ell \leq M$ and the bandwidth $\cW(\cK(\BC)) = 0$. When the spectral norms of $\BD_j$s are uniformly bounded, we have 
$
    \mathsf{E}_N \leq \frac{\min(  \sqrt{N}, 1 + \max_j \norm{\BD_j}^2)}{N} \asymp \frac{1}{N}.
$
We note that the order of this error bound coincides with that in \cite[Proposition 3]{AblaNoEigen}. 

\textit{3) Mixed-Separable Correlation Model:} In cell-free mMIMO or distributed MIMO systems \cite{Wen2013NonGaussian, Couillet2011DE, zhang2013jsacMIMO}, the channel spatial correlations between different antenna sets may vary. As a result, the channel $\BH$ can be written as a block matrix $\BH = (\BH_{ab})$ with $a \leq B_N$, $b \leq B_M$, and $\BH_{ab} \in \mathbb{C}^{N_a \times M_b}$. The blocks are independent and the distribution of each block follows the same form as that in \eqref{Eq_Model_Separable}, i.e., $\BH_{ab} = \BD^{\frac{1}{2}}_{ab} \BZ_{ab} \widetilde{\BD}^{\frac{1}{2}}_{ab}$. Assume that the dimensions $B_N$ and $B_M$ are fixed, and $N_a$, $M_b$
grow with $N$ at the same ratio, i.e., $0 < \inf_a \lim\inf_N \frac{N_a}{N} \leq \sup_a\limsup_N \frac{N_a}{N} < \infty $ and $0 < \inf_b \lim\inf_N \frac{M_b}{N} \leq \sup_b \limsup_N \frac{M_b}{N} < \infty $. Moreover, the spectral norms of $\BD_{ab}$ and $\widetilde{\BD}_{ab}$ are uniformly bounded. Under these conditions, it can be verified that 
the dimension $\ell \leq B_NB_M$. Consequently, we obtain $\mathsf{E}_N \asymp \frac{1}{N}$, and \eqref{eq: resolvent converge} and \eqref{eq: de MI convergence rate} reduce to the zero-mean results in Eqs. (17) and (20) of \cite{zhang2013jsacMIMO}, respectively.

\textit{4) Additive-Separable Correlation Model:}
In MIMO-OFDM systems, the transceivers are connected through multiple paths with distinct delays. Suppose that there are $L$ paths. Then, the equivalent channel can be expressed as
$
    \BH
    =
    \sum_{r=1}^{L}
    \BH_r,
$
where $\BH_r$ denotes the channel matrix of the $r$-th path and follows the Kronecker Rayleigh model \cite{Bol2002OFDM, Moustakas2007, Mckay2008variance, Dupuy2011freqselec}, i.e., 
$
    \BH_r
    =
    \BD_r^{1/2}\BZ_r\widetilde{\BD}_r^{1/2}
$,
which is the same as \eqref{Eq_Model_Separable}. We note that the LSS of $\BH\BH^H$ has also been used to characterize performance metrics of task-oriented MIMO systems \cite{li2026task}. Under such circumstances, we have
$
    \BC
    =
    \sum_{r=1}^{L}
    \widetilde{\BD}_r^{\top}\otimes\BD_r.
$ Hence, if $L$ is fixed and
$
    \limsup_{N}
    \max_{r}
    \{
        \|\BD_r\|,
        \|\widetilde{\BD}_r\|
    \}
    <\infty,
$
there holds
$
    \mathsf{E}_N\asymp\frac{1}{N}$.
This convergence rate recovers those obtained in \cite[Propositions 1, 2]{Dupuy2011freqselec} and \cite[Theorems 1, 2]{li2026task} as special cases.

\section{Applications to Specific MIMO Systems}
\label{Sec: App comm}


In this section, we apply Theorem \ref{thm: EMI approximation} to derive deterministic approximations for the EMI of distributed and heterogeneous MIMO systems. These systems represent emerging trends toward distributed antenna deployments and heterogeneous hardware architectures, resulting in complex spatial correlations. Consequently, existing EMI analyses based on conventional structured channel models are not applicable.


\subsection{ Downlink Distributed MIMO System}
\label{sec: dist MIMO system}
Consider a distributed MIMO system, where the BS with $M$ antennas transmits to one user equipped with $N$ antennas. Specifically, there are $L + 1$ antenna sets at the BS, labeled as $\mathrm{BS}_0$, $\mathrm{BS}_1$, $\ldots$, $\mathrm{BS}_{L}$ and they are all connected to a central coordinator. We assume that $\mathrm{BS}_0$ is equipped with $M_0$ antennas, while each of the remaining antenna sets is equipped with a single antenna. Then, the number of antennas satisfies $M_0 + L = M$ and the received signal at the user is given by
\begin{align}
    \By = \BH_0 \Bx_0 + \sum_{j=1}^L \Bh_j x_j + \Bn,
\end{align}
where $\Bx_0 \in \C^{M_0}$ and $x_j \in \C$, $j \geq 1$, denote the zero-mean transmitted signals at $\mathrm{BS}_0$ and $\mathrm{BS}_j$, $j \geq 1$, respectively. Denote the concatenated signal as $\Bx = [\Bx_0^\top, x_1, \ldots, x_L]^\top \in \C^{M}$ and $\BP$ as the transmit covariance matrix with $\E [\Bx\Bx^H] = \BP$. Let $\BH_0$ denote the channel from \(\mathrm{BS}_0\) to the user, and let \(\Bh_j\) denote the channel from \(\mathrm{BS}_j\) to the user, for \(1\le j\le L\).  $\Bn \in \C^N$ represents the AWGN at the user with $\Bn \sim \cC\cN(0, \sigma^2 \BI_N)$. Assume the channels are in Rayleigh fading and the Kronecker channel model is adopted with
\begin{align}
    \BH_0 = \BD_0^{\frac{1}{2}} \BX_0 \widetilde{\BD}_0^{\frac{1}{2}}, ~~ \Bh_j = \BD_j^{\frac{1}{2}} \Bx_j, ~~ 1 \leq j \leq L, 
\end{align}
where $\BD_j \in \C^{N \times N}$, $j \geq 0$, and $\widetilde{\BD}_0 \in \C^{M_0 \times M_0}$ represent the spatial correlations and are Hermitian non-negative. The elements of $\BX_0$ and $\Bx_j$, $j \geq 1$ are i.i.d. complex Gaussian random variables with mean zero and variance $\frac{1}{M}$. 
\par
Suppose the BS only has statistical CSI. Define the concatenated channel matrix $\BH = [\BH_0, \Bh_1, \ldots, \Bh_L] \in \C^{N \times M}$. Then, the MI between the BS and the user is given by 
\begin{align}
    \cI(\BP; \sigma^2) = \log\det\left( \BI_N + \frac{1}{\sigma^2}\BH \BP \BH^H\right).
\end{align}
By applying Theorem \ref{thm: EMI approximation}, we obtain the following deterministic approximation for the EMI.
\begin{corollary} 
\label{coro: dist MIMO EMI approx}
Assume that the dimensions $N$, $M_0$, and $L$ approach infinity with $0 < \liminf \frac{M_0}{N} \leq \limsup \frac{M_0}{N} < \infty$, $0 < \liminf \frac{L}{N} \leq \limsup \frac{L}{N} < \infty$ and matrices $\widetilde{\BD}_0$, $\{\BD_j\}_{0 \leq j \leq L}$, and $\BP$ satisfy Assumptions \ref{assumpt: A2} and \ref{assumpt: A3}. Then, we have 
\begin{align}
    \E \left[ \cI(\BP; \sigma^2) \right] - \overline{\cI}(\BP; \sigma^2) = \mathcal{O}\left( N^{-\frac{1}{2}}\right),
\end{align}
where 
\begin{align*}
    \overline{\cI}(\BP; \sigma^2) &= \log\det\left[ \sigma^{-2} \BT^{-1} \right] + \log\det\left[ \sigma^{-2} \widetilde{\BT}^{-1}\right] - \sigma^2 M\delta_0 \widetilde{\delta}_0 - \sigma^2\sum_{i = 1}^L \delta_i \widetilde{\delta}_i, \numberthis \label{eq: de for MI dist MIMO system}
\end{align*}
with $\{\delta_i\}_{0 \leq i \leq L}$  and $\{\widetilde{\delta}_j\}_{0 \leq j \leq L}$ being positive solutions to the following system of equations, and matrices $\BT$ and $\widetilde{\BT}$ defined accordingly
\begin{align}
    \left\{
\begin{array}{rcl}
    \delta_i &=& \frac{\Tr \BD_i \BT}{M} ,~~~ 0 \leq i \leq L, \\
    \widetilde{\delta}_0 &=& \frac{ \Tr \widetilde{\BD}_0 [\BP^{\frac{1}{2}}\widetilde{\BT}\BP^{\frac{1}{2}}]_{[1:M_0, 1:M_0]}}{M}, \\
    \widetilde{\delta}_j &=& [\BP^{\frac{1}{2}}\widetilde{\BT} \BP^{\frac{1}{2}}]_{M_0 + j, M_0 + j}, ~~~ 1 \leq j \leq L, \\
    \BT &=& \frac{1}{\sigma^2}\left[\BI_N + \BD_0 \widetilde{\delta}_0 + \frac{1}{M} \sum_{j = 1}^L \BD_j \widetilde{\delta}_j \right]^{-1}, \\
    \widetilde{\BT} &=& \frac{1}{\sigma^2} \left[\BI_M + \BP^{\frac{1}{2}} \mathrm{blkdiag}\left(\delta_0 \widetilde{\BD}_0, \delta_1, \delta_2, \ldots, \delta_L \right) \BP^{\frac{1}{2}}\right]^{-1}.
\end{array} \right.
\end{align}
\end{corollary}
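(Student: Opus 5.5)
This corollary follows from Theorem~\ref{thm: EMI approximation} applied to the precoded channel $\BH\BP^{1/2}$. After that, three things remain: check the assumptions, specialize the maps $\cL,\widetilde{\cL}$ to the block structure, and bound $\mathsf{E}_N$ by $\mathcal{O}(N^{-1/2})$.

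\textbf{Setting up.} Since $\BP$ is deterministic, $\BH\BP^{1/2}$ is again generally correlated Gaussian. Its covariance is
\begin{align*}
\BC_P=(\BP^{\top/2}\otimes \BI_N)\,\BC\,(\BP^{*/2}\otimes \BI_N),\qquad \BC=\mathrm{blkdiag}\big(\widetilde{\BD}_0^\top\otimes\BD_0,\ \BD_1,\ldots,\BD_L\big),
\end{align*}
with variance normalization $1/M$. Note that $\BC$ is block diagonal because the blocks $\BH_0,\Bh_1,\ldots,\Bh_L$ are independent. The assumed bounds on $\widetilde{\BD}_0$, $\BD_j$ and $\BP$ give Assumptions~\ref{assumpt: A2}--\ref{assumpt: A3} for $\BC_P$. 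The ratio conditions on $M_0/N$ and $L/N$ give Assumption~\ref{assumpt: A1}. Theorem~\ref{thm: EMI approximation} then yields $\E\cI=\overline{\cI}+\mathcal{O}(\mathsf{E}_N)$, with $\cL_P(\BX)=\BP^{1/2}\cL(\BX)\BP^{1/2}$ and $\widetilde{\cL}_P(\BY)=\widetilde{\cL}(\BP^{1/2}\BY\BP^{1/2})$.

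\textbf{Specializing the equations.} From $[\cL(\BX)]_{ij}=\frac1M\Tr\BC_{ji}\BX$ we get
\begin{align*}
\cL(\BX)=\mathrm{blkdiag}\Big(\widetilde{\BD}_0\tfrac{\Tr\BD_0\BX}{M},\ \tfrac{\Tr\BD_1\BX}{M},\ldots,\tfrac{\Tr\BD_L\BX}{M}\Big),
\end{align*}
\begin{align*}
\widetilde{\cL}(\BY)=\BD_0\,\tfrac{\Tr\widetilde{\BD}_0\BY_{[1:M_0,1:M_0]}}{M}+\tfrac1M\textstyle\sum_j\BD_j Y_{M_0+j,M_0+j}.
\end{align*}
Evaluate \eqref{eq: DE systems} at $z=-\sigma^2$ and set $\delta_i=\Tr\BD_i\BT/M$ and $\widetilde{\delta}_0,\widetilde{\delta}_j$ as in the corollary, with $\widetilde{\BT}$ replaced by $\BP^{1/2}\widetilde{\BT}\BP^{1/2}$. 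This reproduces exactly the stated fixed-point system. Positivity of the $\delta$'s follows from $\BT,\widetilde{\BT}\succ0$ at $z=-\sigma^2$, using the Stieltjes property in Proposition~\ref{prop: exist unique DE}.

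For the objective, write $\BI_N+\widetilde{\cL}_P(\widetilde{\BT})=\sigma^{-2}\BT^{-1}$, and likewise for the second log-det. The cross term is
\begin{align*}
\innerprod{\cL_P(\BT)}{\widetilde{\BT}}=\Tr\,\cL(\BT)\,\BP^{1/2}\widetilde{\BT}\BP^{1/2}.
\end{align*}
It splits into $\delta_0\Tr\widetilde{\BD}_0[\cdot]_{11}=M\delta_0\widetilde{\delta}_0$ plus $\sum_j\delta_j\widetilde{\delta}_j$, giving \eqref{eq: de for MI dist MIMO system}.

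\textbf{Bounding $\mathsf{E}_N$ (the main obstacle).} The $\ell$-branch of $\mathsf{E}_N$ is useless here: the blocks $\BD_1,\ldots,\BD_L$ are arbitrary, so $\ell$ can be of order $N^2$. I would therefore use the second branch with $\BU=\BI$. For general $\BP$, mixing across blocks destroys the block structure. I would first remove $\BP$ by using that Theorem~\ref{thm: EMI approximation} is stated for any admissible covariance, absorbing $\BP$ through the unitary freedom and an $\mathcal{O}(1)$-conditioned change of variables. If this fails, I would restrict to $\BP$ block-diagonal compatible with the partition, which I expect is what is intended.

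Assume then that $\BC$ keeps the block form above. In $\cK(\BC)$, the first $M_0\times M_0$ block has entries $|\widetilde{D}_{0,ji}|\,\norm{\BD_0}$, and the remaining entries are diagonal with values $\norm{\BD_j}$. The bandwidth is therefore up to $M_0\asymp N$, so the $\min(N,\cdot)$ factor is of order $N$. That would only give $\mathsf{E}_N=\mathcal{O}(1)$, so the choice of $\BU$ is essential. Take $\BU=\mathrm{blkdiag}(\BU_0,\BI_L)$ with $\BU_0$ diagonalizing $\widetilde{\BD}_0^\top$. Then $\BC_U$ is block diagonal with blocks $\lambda_k\BD_0$ and $\BD_j$. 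So $\cK(\BC_U)$ is diagonal, $\cW=0$, and $\norm{\cK(\BC_U)}$ is bounded, which gives $\mathsf{E}_N=\mathcal{O}(1/N)\le\mathcal{O}(N^{-1/2})$.

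The stated $N^{-1/2}$ rate suggests the authors handle a general $\BP$ differently. The likely route is to treat the $\BP$-mixed case through a combination of the $\sqrt{\ell}$ branch on the $\BH_0$ part and a perturbation argument. For that reason I would carefully verify the precise form of $\BC_P$ at this step.
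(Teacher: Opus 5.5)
Your verification of the assumptions, the specialization of $\cL$ and $\widetilde{\cL}$, the resulting fixed-point system, and the rewriting of the three terms of $\overline{\cI}$ (including $\innerprod{\cL_P(\BT)}{\widetilde{\BT}}=M\delta_0\widetilde{\delta}_0+\sum_j\delta_j\widetilde{\delta}_j$) are all correct. The gap is in the only nontrivial step, the bound on $\mathsf{E}_N$. It comes from a false claim: the $\ell$-branch is not useless. Here $\ell$ is the dimension of the span of the blocks $\BC_{ij}$. Every block of $\BC$ is either $[\widetilde{\BD}_0]_{ji}\BD_0$, some $\BD_k$, or zero, so $\ell\le L+1$ however arbitrary the $\BD_k$ are. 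This count survives precoding by any $\BP$, because
\begin{align*}
[\BC_P]_{ij}=\sum_{a,b=1}^{M}[\BP^{\top/2}]_{ia}\,\BC_{ab}\,[\BP^{*/2}]_{bj}
\end{align*}
is a linear combination of blocks of $\BC$ and hence lies in $\Span\{\BD_0,\BD_1,\ldots,\BD_L\}$. Since $L\asymp N$, this gives $\mathsf{E}_N\le\sqrt{L+1}/N=\mathcal{O}(N^{-1/2})$ for every admissible $\BP$, with no unitary rotation or perturbation argument needed. This is exactly why the corollary states the rate $N^{-1/2}$ rather than $N^{-1}$, and it is how the statement follows directly from Theorem~\ref{thm: EMI approximation}.

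Without this count, your argument does not prove the corollary as stated. The infimum in $\mathsf{E}_N$ runs only over unitary $\BU\in\mathbb{U}(M)$, so a non-unitary $\BP^{1/2}$ cannot be absorbed by that freedom. Nothing in Theorem~\ref{thm: EMI approximation} gives invariance under a merely well-conditioned change of variables. Once $\BP^{1/2}$ mixes the singleton columns $\Bh_1,\ldots,\Bh_L$, the bandwidth branch generally degrades to $\mathcal{O}(1)$. For instance, suppose the lower-right $L\times L$ block of $\BP$ has no zero entries and $\BD_0,\ldots,\BD_L$ are linearly independent. Then the vectors $\BU^\top\BP^{\top/2}\Be_{M_0+k}$ are pairwise non-orthogonal and linearly independent, so their supports cannot all fit in short windows, and $\cW[\cK(\BC_U)]\gtrsim L$ for every unitary $\BU$.

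Restricting to $\BP$ that is block diagonal with respect to the partition (diagonal on the last $L$ coordinates) changes the hypotheses of the corollary. In that special case your choice $\BU=\mathrm{blkdiag}(\BU_0,\BI_L)$ is valid and even yields the sharper $\mathsf{E}_N=\mathcal{O}(1/N)$. That is a legitimate refinement, but it is not a proof of the general statement.
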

\begin{remark}
    Sufficient conditions for matrices $\widetilde{\BD}_0$, $\{\BD_j\}_{0 \leq j \leq L}$, and $\BP$ to satisfy Assumptions \ref{assumpt: A2} and \ref{assumpt: A3} are 
    \begin{align}
    \limsup_{N \to \infty}  \max \{ \max_j \lnorm{\BD_j}, \lnorm{\widetilde{\BD}_0}, \norm{\BP} \}< \infty, ~~  \liminf_{N \to \infty}   \min \{ \min_j \lambda_{\min}({\BD_j}), \lambda_{\min}(\widetilde{\BD}_0), \lambda_{\min}(\BP)\} > 0.
    \end{align}
\end{remark}
\begin{remark}
In \cite{Wen2013NonGaussian, zhang2013jsacMIMO, Couillet2011DE}, the number of antennas in each antenna set $\mathrm{BS}_k$ is assumed to grow with the system dimensions, while the total number of antenna sets, $L+1$, remains fixed. In contrast, we consider a regime in which both the number of antennas in $\mathrm{BS}_0$ and the total number of antenna sets can grow large. This regime captures a general class of distributed MIMO systems in which antennas are partitioned in a highly imbalanced manner across BS antenna sets. In the presence of spatial correlation, such imbalance typically leads to performance degradation \cite{Zhuang2025Decentralized}.
\end{remark}
\subsection{Uplink Heterogeneous MIMO System}
\label{sec: uplink heter MIMO}
Consider a user with $M$ antennas that transmits to a distributed BS comprising two antenna sets, $\mathrm{BS}_1$ (with $N_1$ antennas) and $\mathrm{BS}_2$ (with $N_2$ antennas). Denote the zero-mean transmit signal of the user as $\Bx \in \C^M$. Define the transmit covariance as $\E [\Bx\Bx^H] = \BP$.  The channel between the user and $\mathrm{BS}_k$ is denoted by $\BH_k \in \C^{N_k \times M}$, $k = 1, 2$. Then, 
the received signal at $\mathrm{BS}_k$ is given by
\begin{align}
    \By_k = \BH_k \Bx + \Bn_k,
\end{align}
with $\By_k \in \C^{N_k}$ and $\Bn_k \sim \cC\cN(0, \sigma^2 \BI_{N_k})$ being the AWGN at the receiver. 
\par
Due to the different locations of the two antenna sets, the scattering environments between each set and the user may differ. As a result, the statistical properties of $\BH_1$  and $\BH_2$ can be different. Accordingly, we assume that the distributions of $\BH_1$ and $\BH_2$ are given by
\begin{align}
    \BH_1 = \BS_r \left( \BSigma \odot \BX_1 \right) \BS_t^H, ~~ \BH_2 = \BD^{\frac{1}{2}} \BX_2 \widetilde{\BD}^{\frac{1}{2}},
\end{align}
where $\BS_r \in \C^{N_1 \times N_1}$ and $\BS_t \in \C^{M \times M}$ are deterministic matrices. The deterministic matrix $\BSigma = (\sigma_{ij}) \in \R_+^{N_1 \times M}$ denotes the amplitude profile and has non-negative elements. $\BD \in \C^{N_2 \times N_2}$ and $\widetilde{\BD} \in \C^{M \times M}$ are deterministic Hermitian non-negative matrices. The elements of $\BX_1$ and $\BX_2$ are i.i.d. complex Gaussian random variables with mean zero and variance $\frac{1}{M}$. We note that $\BH_2$ follows the Kronecker model and $\BH_1$ corresponds to the Weichselberger model when $\BS_r$ and  $\BS_t$ are unitary matrices \cite{Weichs2006channel, Luannan2016freemimo}. This scenario can arise when $\mathrm{BS}_1$ employs a holographic antenna array with mutual coupling (MC) \cite{Pizzo2022Holo, Zhang2025HolographicMIMO, Zhang2025HolographicMC}, while  $\mathrm{BS}_2$ is a uniform linear array.
Assume the user only has access to statistical CSI. Let $\BH = [\BH_1^\top, \BH_2^\top]^\top$. Then, the MI between the BS and the user is given by
\begin{align}
    \cI(\BP; \sigma^2) = \log\det\left( \BI_N + \frac{1}{\sigma^2}\BH \BP \BH^H\right) .
\end{align}
Denote $N = N_1 + N_2$, $\BR_a = \diag(\sigma_{ia}^2; 1 \leq i \leq N_1)$, and  $\widetilde{\BR}_b = \diag(\sigma_{bj}^2; 1 \leq j \leq M)$. By Theorem \ref{thm: EMI approximation}, we have the following deterministic approximation for $\E[\cI(\BP; \sigma^2)]$.
\begin{corollary}
\label{coro: heter MIMO}
Assume that the dimensions $N_1, N_2$ and $M$ approach infinity with $0 < \liminf \frac{N_k}{N} \leq \limsup \frac{N_k}{N} < \infty$, $k = 1, 2$, $0 < \liminf \frac{M}{N} \leq \limsup \frac{M}{N} < \infty$, and matrices $\BS_r$, $\BS_t$, $\BSigma$, $\BD$, $\widetilde{\BD}$, and $\BP$ satisfy Assumptions \ref{assumpt: A2} and \ref{assumpt: A3}. Then, we have 
\begin{align}
    \E \left[ \cI(\BP; \sigma^2) \right] - \overline{\cI}(\BP; \sigma^2) = \cO\left( N^{-\frac{1}{2}}\right),
\end{align}
where 
\begin{align*}
    \overline{\cI}(\BP; \sigma^2) &= \log\det\left[ \sigma^{-2}\BT_1^{-1} \right] + \log\det\left[ \sigma^{-2}\BT_2^{-1} \right] + \log\det\left[ \sigma^{-2}\widetilde{\BT}^{-1} \right]   \\
    &- M \sigma^2\delta_2 \widetilde{\delta}_2 - \sigma^2\sum_{i=1}^M \delta_{1i} [\BS_t^H \BP^{\frac{1}{2}}\widetilde{\BT} \BP^{\frac{1}{2}} \BS_t]_{i, i}. \numberthis
\end{align*}
with $\{\delta_{1i}\}_{1 \leq i \leq M}$, $\delta_2$, $\{\widetilde{\delta}_{1j}\}_{1 \leq j \leq N_1}$, and $\widetilde{\delta}_2$ being positive solutions of the following system of equations, and matrices $\BT_1$, $\BT_2$, and $\widetilde{\BT}$ defined accordingly
\begin{align}
\left\{
\begin{array}{rcl}
\delta_{1i} &=& \frac{\Tr \BR_i \BS_r^H \BT_1 \BS_r}{M} , ~~ 1 \leq i \leq M, \\
\delta_2 &=& \frac{ \Tr \BD \BT_2}{M}, \\
\widetilde{\delta}_{1j} &=& \frac{\Tr \widetilde{\BR}_j \BS_t^H \BP^{\frac{1}{2}}\widetilde{\BT} \BP^{\frac{1}{2}} \BS_t }{M}, ~~ 1 \leq j \leq N_1, \\
\widetilde{\delta}_2 &=& \frac{\Tr \widetilde{\BD} \BP^{\frac{1}{2}}\widetilde{\BT} \BP^{\frac{1}{2}} }{M}, \\
\BT_1 &=& \frac{1}{\sigma^2 }\left[ \BI_{N_1} + \BS_r \diag( \widetilde{\delta}_{1j}; j \leq N_1) \BS_r^H \right]^{-1}, \\
\BT_2 &=& \frac{1}{\sigma^2} \left[ \BI_{N_2} + \widetilde{\delta}_2 \BD \right]^{-1}, \\
\widetilde{\BT} &=& \frac{1}{\sigma^2} \left[ \BI_M + \BP^{\frac{1}{2}} \left( \delta_2 \widetilde{\BD} + \BS_t \diag(\delta_{1i}; i \leq M) \BS_t^H \right) \BP^{\frac{1}{2}} \right]^{-1}.
\end{array}
\right.
\end{align}
\end{corollary}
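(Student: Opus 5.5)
The plan is to realize the heterogeneous channel as a special case of the general model \eqref{eq: R channel model} and read off the result from Theorem~\ref{thm: EMI approximation}. The work splits into three parts: verify the assumptions, specialize the fixed-point system \eqref{eq: DE systems} and the formula for $\overline{\cI}$, and bound $\mathsf{E}_N$ by $\cO(N^{-1/2})$.

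\textbf{Embedding the channel.} Absorb the precoder by setting $\BH_P=\BH\BP^{1/2}$, so that $\cI(\BP;\sigma^2)=\log\det(\BI_N+\BH_P\BH_P^H/\sigma^2)$. Since $\BH_1$ and $\BH_2$ are independent, $\BC$ is the covariance of $\Vecm(\BH_P)$ with the rows permuted so that each column block is split into $N_1$ and $N_2$ parts. A direct computation gives the two maps in block form, with $\cL$ summing over the two row blocks and $\widetilde{\cL}$ block diagonal:
\begin{align*}
\cL(\BX)&=\BP^{\frac12}\Big(\BS_t\diag\big(\tfrac1M\Tr\BR_i\BS_r^H\BX_{11}\BS_r;\,i\le M\big)\BS_t^H+\tfrac{1}{M}\Tr(\BD\BX_{22})\,\widetilde{\BD}\Big)\BP^{\frac12},\\
\widetilde{\cL}(\BY)&=\mathrm{blkdiag}\Big(\BS_r\diag\big(\tfrac1M\Tr\widetilde{\BR}_j\BS_t^H\BP^{\frac12}\BY\BP^{\frac12}\BS_t;\,j\le N_1\big)\BS_r^H,\ \tfrac1M\Tr\big(\widetilde{\BD}\BP^{\frac12}\BY\BP^{\frac12}\big)\,\BD\Big).
\end{align*}
The entrywise variance profile $\BSigma$ enters through $\E[(\BSigma\odot\BX_1)\BA(\BSigma\odot\BX_1)^H]=\frac1M\diag_j(\Tr\widetilde{\BR}_j\BA)$, and similarly for the other side.

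\textbf{Verifying the assumptions.} Assumptions~\ref{assumpt: A2} and \ref{assumpt: A3} for $\BC$ are exactly the hypotheses on $\BS_r,\BS_t,\BSigma,\BD,\widetilde{\BD},\BP$. The eigenvalues of $\BC$ lie between the extreme eigenvalues of the two diagonal blocks, each of which is bounded above and away from zero.

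\textbf{Specializing the equations and $\overline{\cI}$.} Because $\widetilde{\cL}$ is block diagonal, $\BT(-\sigma^2)$ is block diagonal, equal to $\mathrm{blkdiag}(\BT_1,\BT_2)$. Writing $\widetilde{\BT}$ for $\widetilde{\BT}(-\sigma^2)$ and introducing $\delta_{1i},\delta_2,\widetilde{\delta}_{1j},\widetilde{\delta}_2$ as the scalar traces above, \eqref{eq: DE systems} at $z=-\sigma^2$ becomes exactly the stated system. Positivity of these scalars follows from Proposition~\ref{prop: exist unique DE}, since $\BT$ and $\widetilde{\BT}$ are positive definite at negative $z$.

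Substituting into $\overline{\cI}$ is mechanical:
\begin{itemize}
\item $\BI_N+\widetilde{\cL}(\widetilde{\BT})=\sigma^{-2}\BT^{-1}$ yields the first two $\log\det$ terms.
\item $\BI_M+\cL(\BT)=\sigma^{-2}\widetilde{\BT}^{-1}$ yields the third.
\item The inner product $\langle\cL(\BT),\widetilde{\BT}\rangle$ splits into the $\delta_2\widetilde{\delta}_2$ piece, with factor $M$, and the piece $\sum_i\delta_{1i}[\BS_t^H\BP^{1/2}\widetilde{\BT}\BP^{1/2}\BS_t]_{i,i}$.
\end{itemize}

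\textbf{Bounding $\mathsf{E}_N$.} This is the main step. I would use the $\sqrt{\ell}$ branch of \eqref{eq: convergence E_N}. Every block $\BC_{ij}$ has the form $\mathrm{blkdiag}(\BS_r\BR^{(ij)}\BS_r^H,\ c_{ij}\BD)$, where $\BR^{(ij)}=\sum_k (\BS_t^H\BP^{1/2})^*_{k i}(\BS_t^H\BP^{1/2})_{kj}\BR_k$ is a diagonal $N_1\times N_1$ matrix. The span therefore lies in $\{\mathrm{blkdiag}(\BS_r\Lambda\BS_r^H,c\BD):\Lambda\ \text{diagonal},\ c\in\C\}$, so $\ell\le N_1+1$. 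Hence $\mathsf{E}_N\le\sqrt{N_1+1}/N=\cO(N^{-1/2})$, and Theorem~\ref{thm: EMI approximation} gives the claim.

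The main obstacle is checking the inner-product term and the $\cL$ formula. The variance-profile expectation must be written in $\BS_t$-coordinates, and the index placement $[\cL(\BX)]_{i,j}=\frac1M\Tr\BC_{ji}\BX$ must be handled carefully so that the sum matches the $[\cdot]_{i,i}$ form stated in the corollary.
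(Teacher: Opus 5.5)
Your proposal is correct and follows the route the paper implicitly takes. You absorb $\BP^{1/2}$ into the channel and specialize $\cL$ and $\widetilde{\cL}$; your formulas are right, including the block-diagonal structure that gives $\BT=\mathrm{blkdiag}(\BT_1,\BT_2)$ and the split of the inner product into $M\delta_2\widetilde{\delta}_2$ and $\sum_i\delta_{1i}[\cdot]_{i,i}$. You then invoke Theorem~\ref{thm: EMI approximation}, and the $\cO(N^{-1/2})$ rate comes, as you argue, from the $\sqrt{\ell}$ branch of $\mathsf{E}_N$ with $\ell\le N_1+1$.
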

\begin{remark}
    Sufficient conditions for matrices $\BS_r$, $\BS_t$, $\BSigma$, $\BD$, $\widetilde{\BD}$, and $\BP$ to satisfy Assumptions \ref{assumpt: A2} and \ref{assumpt: A3} are 
    \begin{align}
    &\limsup_{N \to \infty} \max \{ \lnorm{\BS_r }, \lnorm{\BS_t}, \lnorm{\widetilde{\BD}}, \norm{\BD}, \norm{\BP}, \max_{i,j}\sigma_{ij}^2 \}< \infty,  \\
  &\liminf_{N \to \infty}  \min \{  \lambda_{\min}({\BS_r} \BS_r^H), \lambda_{\min}({\BS_t} \BS_t^H), \lambda_{\min}({\BD}), \lambda_{\min}(\widetilde{\BD}), \lambda_{\min}(\BP), \min_{i, j} \sigma_{ij}^2 \} > 0.
    \end{align}
\end{remark}

\begin{remark}
In contrast to the $\mathcal{O}(N^{-1})$ convergence rate for the mixed-separable correlated model discussed in Section \ref{subsec: discusssion}, the correlations $\BS_r$ and $\BS_t$, together with the amplitude profile $\BSigma$, induce a non-separable correlation structure. This additional correlation results in the slower convergence rate $\mathcal{O}(N^{-1/2})$.
\end{remark}

\section{Numerical Results}
\label{Sec: simulations}
In this section, we validate the accuracy of the theoretical results by simulations. In particular, we consider the following correlation model 
\begin{align}
    [\BCD(d_{\lambda}, \theta, \delta, N)]_{m, n} = \int_{-180}^{180}  \frac{\dd \phi}{\sqrt{2 \pi \delta^2}} e^{ 2 \pi \jmath (m-n) d_{\lambda} \sin\left( \frac{\pi \phi}{180}\right)  - \frac{(\phi - \theta)^2}{2 \delta^2} },
\end{align}
which is commonly used to model the spatial correlation for uniform linear arrays \cite{Moustakas2003titcorrelated}. Here, $m, n \in [N]$ denote the antenna indices, $d_\lambda$ represents the nearest neighbor antenna spacing (in wavelengths), and $\theta$ and $\delta$ are the mean angle and the root-mean-square angular spread, respectively, both measured in degrees.

\begin{figure}[t]
    \centering
    \includegraphics[width=3.8in]{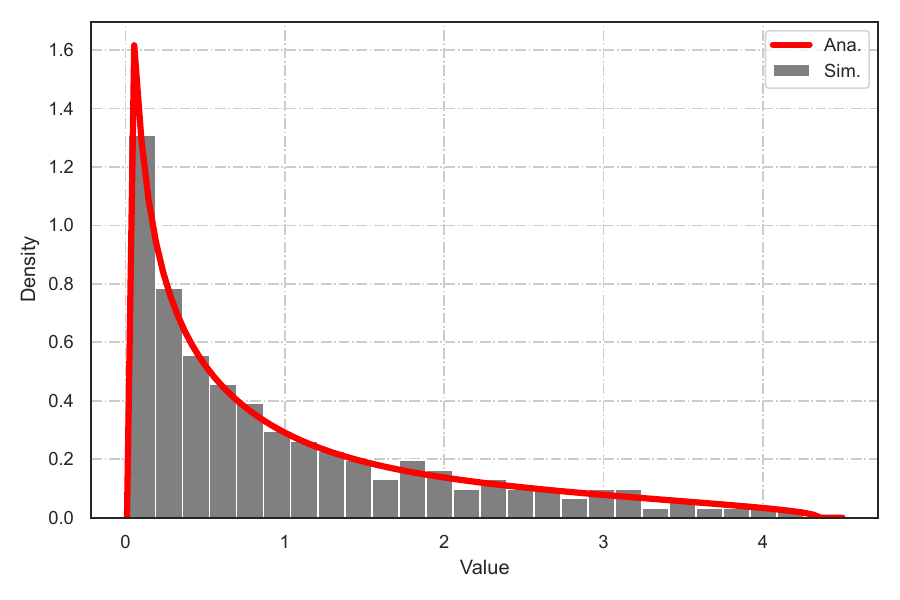}
    \caption{ESD versus LSD.}
    \label{fig: ESD vs LSD}
\end{figure}

\subsection{Validation of the ESD Approximation}
In Fig. \ref{fig: ESD vs LSD}, we compare the ESD with the corresponding LSD defined in \eqref{eq: LSD definition} of general Gaussian matrices. In particular, the density of the LSD can be numerically approximated by solving \eqref{eq: DE systems} using Algorithm \ref{alg: solve for de} and evaluating
$
\rho_{\mathrm{LSD}}(x) \approx \frac{1}{\pi}\Im m_N(x+\imath\varepsilon)
$
over the interval of interest, where $\varepsilon>0$ is sufficiently small. The dimensions are set as $N = 180$ and $M = 250$. Write $\BH = [\Bh_1, \ldots, \Bh_M]$. For \(i=1,\ldots,M/2\), the joint distribution of each column pair is specified by $[\Bh_i^\top, \Bh_{i + M / 2}^\top]^\top \sim \cC\cN(0, \BC_i / M)$, where $[\BC_i]_{k, \ell} = (0.3 + 0.2i / M)^{\abs{k - \ell}}$. It can be observed that the ESD of the covariance matrix $\BH\BH^H$ is close to the LSD, which validates the accuracy of Theorem \ref{thm: de for the resolvent}.

\begin{figure}[t]
    \centering
    \includegraphics[width=3.8in]{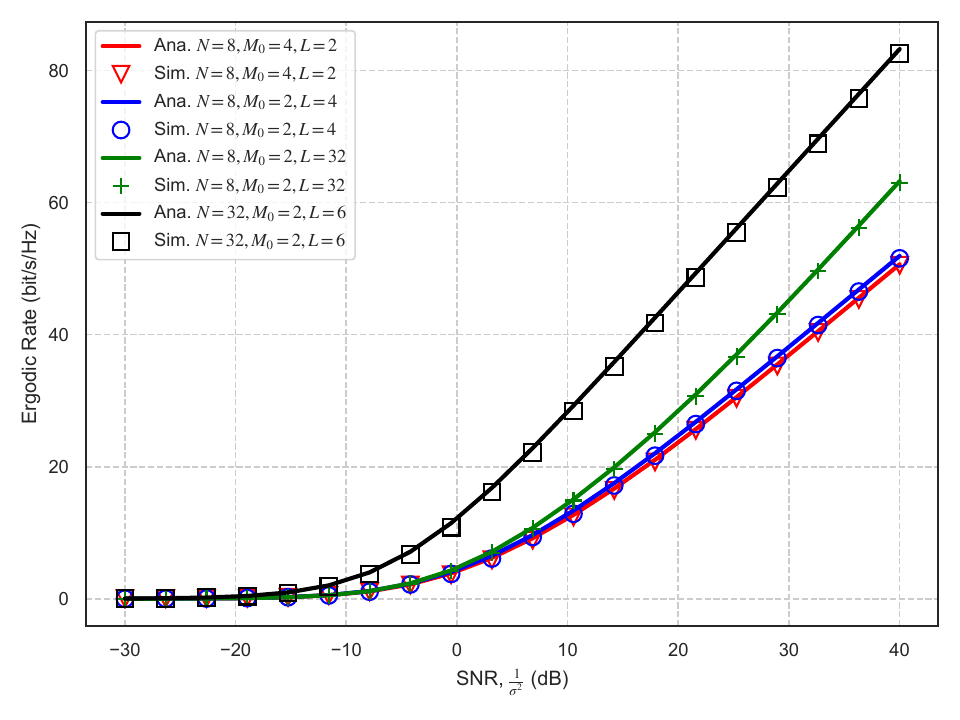}
    \caption{EMI versus SNR (Downlink Distributed MIMO System).}
    \label{fig: EMI SNR Dist MIMO}
\end{figure}

\subsection{EMI Approximation for Downlink Distributed MIMO} 

Fig. \ref{fig: EMI SNR Dist MIMO} illustrates the EMI performance at different SNRs in the downlink distributed MIMO system as discussed in Section \ref{sec: dist MIMO system}. The spatial correlation matrices are set as $\BD_0 = \BCD(\frac{1}{2}, 10, 20, N)$, $\widetilde{\BD}_0 = \BCD(\frac{1}{2}, 10, 20, M_0)$, and $\BD_j = \BCD(\frac{1}{2}, \theta_j,\delta_j, N)$ with $\theta_j = \frac{60j}{L}$ and $\delta_j = 5 + \frac{5j}{L}$. The transmit covariance matrix is set to $\BP = \BI_M$. The Monte Carlo simulation results are obtained by averaging over 500 independent trials. It can be observed that the deterministic approximation remains accurate even with small dimensions, thereby validating Corollary~\ref{coro: dist MIMO EMI approx}. Moreover, the EMI increases with the number of transmit antennas, demonstrating the performance gain provided by spatial DoF.


\begin{figure}[t]
    \centering
    \includegraphics[width=3.8in]{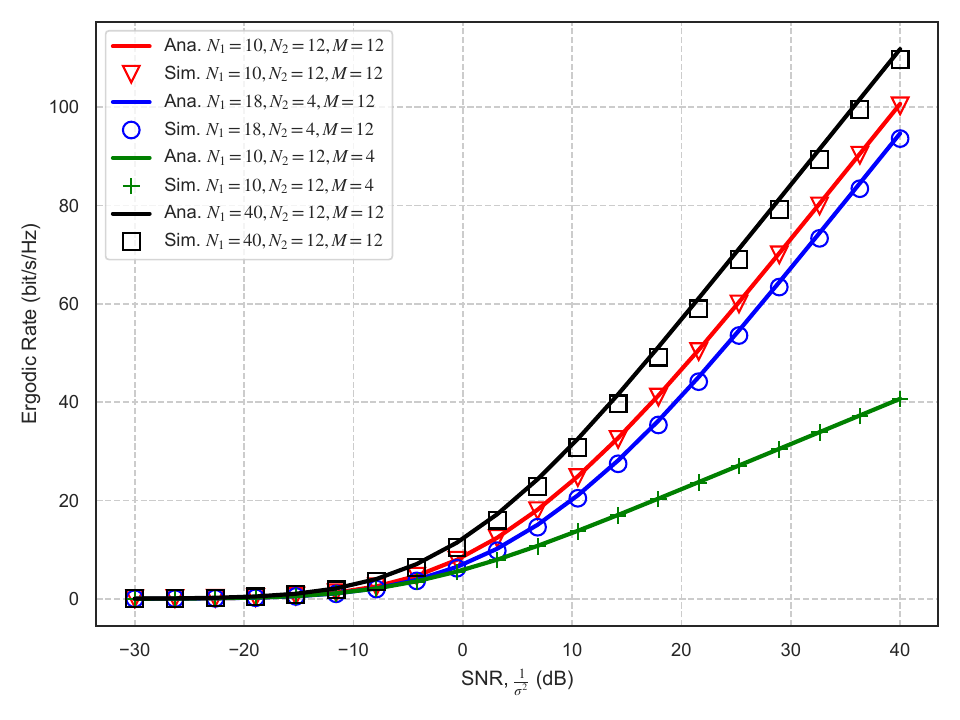}
    \caption{EMI versus SNR (Uplink Heterogeneous MIMO System).}
    \label{fig: EMI SNR Heter MIMO}
\end{figure}

\subsection{EMI Approximation for Uplink Heterogeneous MIMO} 
Fig. \ref{fig: EMI SNR Heter MIMO} illustrates the performance of EMI at different SNRs in the uplink heterogeneous MIMO system as discussed in Section \ref{sec: uplink heter MIMO}. The parameters are set to $\BS_r = \BS_{r, 2}^{\frac{1}{2}}$ and $\BS_t = \BS_{t, 2}^{\frac{1}{2}}$, where $[\BS_{r, 2}]_{i, j} = 0.9^{|i - j|}$ and $[\BS_{t, 2}]_{i, j} = 0.8^{|i-j|}$,  $\sigma_{ij} = \sqrt{0.95^{|i-j|} + 10^{-4}}$, $\BD = \BCD(\frac{1}{2}, 10, 50, N_2)$, $\widetilde{\BD} = \BCD(\frac{1}{2}, 50, 10, M)$, and $\BP = \BI_M$. The Monte Carlo simulation results are obtained by averaging over 500 independent trials. It can be observed that the deterministic approximation closely matches the Monte Carlo results, thereby validating Corollary~\ref{coro: heter MIMO}. When the total number of BS antennas and user antennas is fixed, allocating more antennas to $\mathrm{BS}_2$, whose channel follows the Kronecker model, improves the EMI performance. Moreover, increasing the number of user antennas provides an EMI gain.

\section{Conclusion}
\label{Sec: Conclu}

The increasingly diverse MIMO architectures give rise to more general and intricate channel correlation structures, rendering existing separable or block-structured model-based analyses   inadequate for characterizing the EMI. A central difficulty is that the fundamental self-consistent equations are matrix-valued, for which conventional scalar- or vector-based methods cannot be directly applied to analyze stability, which is  essential for establishing the convergence rate for the resolvent approximation. We addressed these challenges by establishing spectral norm control for general Gaussian matrices and developing a framework based on positive linear maps to analyze the fundamental equations. These developments enabled a unified deterministic approximation for the resolvent and EMI with an explicit convergence rate. The theoretical results were subsequently utilized to analyze the performance of downlink distributed MIMO and uplink heterogeneous MIMO systems, which cannot be tackled by existing methods. Numerical results demonstrated the accuracy of the approximations. More broadly, the proposed framework can support the analysis of emerging MIMO architectures: different antenna deployments and propagation environments can be directly incorporated through the covariance matrix $\BC$. Beyond performance evaluation, the proposed framework provides a theoretical basis for correlation-aware system design in modern MIMO systems. Future research directions include a second-order analysis of the LSSs for generally correlated Gaussian matrices and extensions of the proposed framework to non-Gaussian ensembles.


\appendices

\section{Mathematical Tools}
This section collects several mathematical tools that will be used in the proof of Theorem \ref{thm: de for the resolvent}.
\subsection{Results on Band Matrices}
\begin{lemma}
\label{lemm: band berstein bound}
    Let $\BA \in \C^{N \times N}$ be a band matrix with bandwidth $W$. Define the diagonal matrix with integer entries $\BCI = \diag(0, 1, 2, \ldots, N-1)$ and the map $\cB_I: \C^{N \times N} \to \C^{N \times N}$ as $\cB_I(\BX) = \BCI \BX - \BX \BCI$. Then, we have $\norm{\cB_I(\BA)} \leq W \norm{\BA}$.
\end{lemma}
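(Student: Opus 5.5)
The plan is to realize the commutator map $\cB_I$ as the derivative of a unitary conjugation, and then apply Bernstein's inequality for trigonometric polynomials. The lemma's label hints at this.

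\textbf{Step 1: entrywise form.} Since $\BCI$ is diagonal with entries $0,1,\ldots,N-1$, we have $[\cB_I(\BA)]_{i,j}=(i-j)A_{i,j}$. So $\cB_I$ is the Schur multiplier with symbol $(i-j)$. Because $A_{i,j}=0$ whenever $|i-j|>W$, only the multipliers with $|i-j|\leq W$ matter.

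A naive bound splits $\BA$ into its $2W+1$ diagonals and uses the fact that each diagonal has norm at most $\norm{\BA}$. This only gives roughly $W^2\norm{\BA}$, so we need a smarter argument.

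\textbf{Step 2: conjugation curve.} For $\theta\in\R$, set $\BD_\theta=\diag(e^{\jmath k\theta};\,0\leq k\leq N-1)=e^{\jmath\theta\BCI}$, which is unitary, and define
\begin{align*}
\BF(\theta)=\BD_\theta\BA\BD_\theta^H .
\end{align*}
Then:
\begin{itemize}
\item $[\BF(\theta)]_{i,j}=e^{\jmath(i-j)\theta}A_{i,j}$, so $\BF$ is a matrix-valued trigonometric polynomial of degree at most $W$.
\item $\norm{\BF(\theta)}=\norm{\BA}$ for all $\theta$, since conjugation by a unitary preserves the spectral norm.
\item $\BF'(0)=\jmath(\BCI\BA-\BA\BCI)=\jmath\,\cB_I(\BA)$.
\end{itemize}

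\textbf{Step 3: Bernstein's inequality.} Bernstein's inequality states that a complex trigonometric polynomial $p$ of degree $\leq W$ satisfies $\sup_\theta|p'(\theta)|\leq W\sup_\theta|p(\theta)|$.

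To pass from the matrix-valued $\BF$ to scalars, fix unit vectors $\Bu,\Bv\in\C^N$ and set $p(\theta)=\Bu^H\BF(\theta)\Bv$. This is a scalar trigonometric polynomial of degree $\leq W$ with $\sup_\theta|p(\theta)|\leq\norm{\BA}$. Bernstein's inequality then gives
\begin{align*}
|\Bu^H\cB_I(\BA)\Bv|=|p'(0)|\leq W\norm{\BA}.
\end{align*}
Taking the supremum over unit vectors $\Bu,\Bv$ yields $\norm{\cB_I(\BA)}\leq W\norm{\BA}$.

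\textbf{Main obstacle.} The only delicate point is applying Bernstein's inequality in its sharp form, with constant $W$, to complex-valued trigonometric polynomials that contain both positive and negative frequencies. This is the classical statement, for example via the Riesz interpolation formula or the Fejér-kernel argument, and I would cite it rather than reprove it.

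If a self-contained route is preferred, there is a fallback: write $\cB_I(\BA)$ as a convex-like combination of conjugates $\BD_{\theta_k}\BA\BD_{\theta_k}^H$ using the M. Riesz interpolation formula, whose coefficients have absolute values summing to $W$. The triangle inequality then gives the same bound.
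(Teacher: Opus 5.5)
Your proof is correct and is essentially the paper's argument: conjugate by the unitary $e^{\jmath\theta\BCI}$, apply Bernstein's inequality to the scalar trigonometric polynomial $\Bu^H\BF(\theta)\Bv$ of degree at most $W$, and read off $\cB_I(\BA)$ from the derivative at $\theta=0$. Your M.~Riesz interpolation fallback is also valid. It would even skip the scalarization step, because the formula is linear and applies directly to the matrix-valued $\BF$.
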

\textit{Proof:} Denote the matrices $e^{\jmath \theta \BCI} = \diag(1, e^{\jmath \theta}, e^{2 \jmath \theta}, \ldots, e^{(N-1) \jmath \theta})$ and $\BA(\theta) = e^{\jmath \theta \BCI} \BA e^{- \jmath \theta \BCI}$. Then, we have $\norm{\BA(\theta)} = \norm{\BA}$ and $\frac{\dd \BA(\theta)}{ \dd \theta} = \jmath \cB_I(\BA(\theta))$. Write $\BA(\theta) = \sum_{k = -W}^W e^{\jmath k\theta}\BA_k$, where $\BA_k$ is a matrix with non-zero elements only on the $k$-th diagonal. Then, for any unit vectors $\Bx, \By \in \C^N$, we have
\begin{align}
   \abs{\Bx^H \cB_I(\BA(\theta)) \By} = \abs{ \Bx^H \frac{\dd \BA (\theta)}{\dd \theta} \By} = \abs{\sum_{k = -W}^W k e^{\jmath k \theta} \Bx^H \BA_k \By} \overset{(a)}{\leq} W  \sup_{\theta \in \R}\abs{ \sum_{k = -W}^W e^{\jmath k \theta} {\Bx^H \BA_k \By}} \leq W \norm{\BA},
\end{align}
where step $(a)$ follows from the Bernstein inequality for trigonometric polynomials.
By taking the supremum over all unit vectors $\Bx$ and $\By$ and letting $\theta = 0$ in the left-hand side (LHS) of the above inequality, we have $\norm{\cB_I(\BA)} \leq W \norm{\BA}$. \qed
\begin{lemma}
\label{lem: bound on banded Stieltjes transform}
    Let $\BL(z) \in \mathcal{S}_N(\R_+)$ with underlying measure $\Bmu_L$, i.e., $\BL(z) = \int_{\R_+} \frac{\Bmu_L(\dd \lambda)}{\lambda - z}$. Assume the bandwidth of $\Bmu_L$ is $W$, i.e., $[\Bmu_L]_{i, j} = 0$ for all $|i - j| > W$. Define the matrix $\BG(z) = -\frac{1}{z}[ \BI_N + \BL(z)]^{-1}$. Then,  for $z \in  \C - \R_+$,  we have 
    \begin{align}
        \abs{[\BG(z)]_{i, j}} \leq \frac{2}{\dz}\exp\left( -\frac{\mu_z|i-j|}{W + 1}\right),
    \end{align}
    where $\mu_z =  \min( \frac{\dz^2}{4|z|\norm{\Bmu_L(\R_+)}}, \frac{1}{2})$. If $\boldsymbol\mu_L(\mathbb R_+)=0$, we set $\mu_z=1/2$.
\end{lemma}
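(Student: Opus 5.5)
The plan is a Combes--Thomas argument: conjugate $\BG(z)$ by an exponential diagonal weight, show that the conjugated matrix still has norm $\mathcal{O}(1/\dz)$, and read off the exponential decay of the entries. Set $s=\pm\mu_z/(W+1)$ and $\BD_s=e^{s\BCI}=\diag(1,e^{s},\ldots,e^{(N-1)s})$, which is real, positive and invertible. Put $\BG_s=\BD_s\BG(z)\BD_s^{-1}$. Since $[\BG]_{i,j}=e^{-s(i-j)}[\BG_s]_{i,j}$, a bound $\norm{\BG_s}\leq 2/\dz$ gives $|[\BG]_{i,j}|\leq \frac{2}{\dz}e^{-s(i-j)}$. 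Choosing the sign of $s$ according to the sign of $i-j$ then yields the claimed bound $\frac{2}{\dz}\exp(-\mu_z|i-j|/(W+1))$. The degenerate case $\Bmu_L(\R_+)=0$ is immediate: then $\BG=-\frac1z\BI_N$ is diagonal and $1/|z|\leq 1/\dz$.

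To bound $\norm{\BG_s}$, I would write $\BG_s=-\frac1z[\BI_N+\BL_s(z)]^{-1}$ with $\BL_s(z)=\BD_s\BL(z)\BD_s^{-1}$, and treat it as a perturbation:
\begin{align*}
-z(\BI_N+\BL_s) = -z(\BI_N+\BL) - z(\BL_s-\BL).
\end{align*}
The first ingredient is the standard estimate $\norm{[-z(\BI_N+\BL(z))]^{-1}}\leq 1/\dz$ for $\BL\in\mathcal S_N(\R_+)$. For unit $\Bx$, write $\nu=\Bx^H\Bmu_L\Bx\geq0$, so that $\Bx^H(-z(\BI+\BL))\Bx=-z-\int \frac{z\,\nu(\dd\lambda)}{\lambda-z}$. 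If $\Im z>0$, then $\Im\frac{\lambda}{\lambda-z}\geq 0$ gives $\Im(\cdot)\leq-\Im z$. If $\Re z\leq 0$, the quantity $-z/(\lambda-z)$ lies in a half-plane that keeps the modulus at least $|z|$. In both cases the numerical range of $-z(\BI+\BL)$ stays at distance at least $\dz$ from $0$, which gives the estimate. I would cite this as the known argument of Hachem \textit{et al.} rather than redo it.

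The second ingredient is $\norm{\BL_s-\BL}$. Because $\Bmu_L$ has bandwidth $W$, so does $\BL(z)$, and $\norm{\BL(z)}\leq \norm{\Bmu_L(\R_+)}/\dz$. For unit $\Bx,\By$, the function $\zeta\mapsto \Bx^H e^{\zeta\BCI}\BL e^{-\zeta\BCI}\By=\sum_{|k|\leq W}e^{k\zeta}\Bx^H\BL_k\By$ is a trigonometric polynomial of degree $W$ in $\Im\zeta$. I would argue as in Lemma~\ref{lemm: band berstein bound}, combining the Bernstein inequality with the growth bound $e^{W|\Re\zeta|}$ off the real axis. This gives $\norm{\frac{\dd}{\dd r}\BL_r}=\norm{\cB_I(\BL_r)}\leq We^{W|r|}\norm{\BL}$. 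Integrating from $0$ to $s$, and using $W|s|\leq \mu_z\leq 1/2$, gives
\begin{align*}
\norm{\BL_s-\BL}\leq W|s|e^{1/2}\norm{\BL}\leq 2\mu_z\frac{\norm{\Bmu_L(\R_+)}}{\dz}.
\end{align*}
Since $\mu_z\leq \dz^2/(4|z|\norm{\Bmu_L(\R_+)})$, this yields $|z|\norm{\BL_s-\BL}\leq \dz/2$.

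A Neumann-series argument then gives $\norm{[-z(\BI+\BL_s)]^{-1}}\leq (\dz-\dz/2)^{-1}=2/\dz$, which completes the proof. I expect the main obstacle to be the growth estimate for the banded matrix under the non-unitary conjugation, i.e.\ showing $\norm{e^{r\BCI}\BA e^{-r\BCI}}\leq e^{W|r|}\norm{\BA}$ for a band matrix $\BA$ and controlling the resulting constant. My first attempt would be the scalar reduction above, applying Phragm\'en--Lindel\"of (the exponential-type growth bound for trigonometric polynomials) to $\Bx^H\BA(\zeta)\By$ on the strip $|\Re\zeta|\leq |s|$, and then Bernstein's inequality on the shifted line, so that $\cB_I$ is controlled uniformly for $|r|\leq|s|$.
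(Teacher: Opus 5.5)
Your proposal is correct and follows essentially the same Combes--Thomas argument as the paper: conjugate by $e^{s\BCI}$, bound $\norm{\BL_s-\BL}$ using the band structure and the Bernstein-type Lemma~\ref{lemm: band berstein bound}, and use $\norm{\BG}\leq 1/\dz$ (the paper gets this from $\BG\in\mathcal{S}_N(\R_+)$ with unit mass, you from the numerical range). A Neumann series then gives $\norm{\BG_s}\leq 2/\dz$. The growth estimate you flag as the main obstacle needs no Phragm\'en--Lindel\"of: $\cB_I$ preserves bandwidth, so iterating Lemma~\ref{lemm: band berstein bound} gives $\norm{\cB_I^k(\BL)}\leq W^k\norm{\BL}$, and the expansion $\BL_s-\BL=\sum_{k\geq 1}\frac{s^k}{k!}\cB_I^k(\BL)$ yields $\norm{\BL_s-\BL}\leq (e^{W|s|}-1)\norm{\BL}$ directly, which is exactly how the paper proceeds.
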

\textit{Proof:} The proof of Lemma \ref{lem: bound on banded Stieltjes transform} is given in Appendix \ref{app: lemma bound on banded Stieltjes transform}. \qed

\subsection{Results on Positive Linear Maps}
Let  $\cT: \C^{n \times n} \to \C^{m \times m}$ be a linear map. Define its operator norm as $\norm{\cT}_{op} = \sup_{\norm{\BX} = 1} \norm{\cT(\BX)}$. The linear map $\cT$ is called positive if $\cT(\mathbb{S}^n_+) \subset \mathbb{S}^m_+$, where $\mathbb{S}^n_+ = \{ \BX \in \C^{n \times n}: \BX \succeq 0 \}$ denotes the positive semidefinite cone. The following two lemmas describe properties of positive linear maps.
\begin{lemma}
\label{lem: bound on positive operator}
Let $\cT: \C^{N \times N} \mapsto \C^{N \times N}$ be a positive linear map. Assume there exist $\BX, \BY \succ 0$ such that $\BX = \cT(\BX) + \BY$. Then, we have
\begin{align}
    \rho(\cT) &\leq 1 - \frac{\lambda_{\min}(\BY)}{\lambda_{\max}(\BX)}, \label{eq: spectral radius positive linear operator} \\
    \norm{(\mathrm{Id}_N - \cT)^{-1}}_{op} &\leq \frac{\lambda_{\max}(\BX)} {\lambda_{\min}(\BY)}, \label{eq: spectral norm positive linear operator inverse}
\end{align}
where $\mathrm{Id}_N$ denotes the identity mapping on $\C^{N \times N}$. 
\end{lemma}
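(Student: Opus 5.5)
The plan is to obtain both estimates by comparing everything to the fixed-point identity $\BX = \cT(\BX) + \BY$ in the Loewner order. Positivity of $\cT$ preserves $\preceq$ between Hermitian matrices. It also preserves Hermiticity: writing a Hermitian matrix as a difference of two positive semidefinite ones shows that $\cT$ maps Hermitian matrices to Hermitian matrices, and hence $\cT(\BZ^H)=\cT(\BZ)^H$ for every $\BZ$.

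\emph{Spectral radius.} Set $c = 1 - \lambda_{\min}(\BY)/\lambda_{\max}(\BX)$. Since $\BY \preceq \BX$ (because $\cT(\BX)\succeq 0$) and $\BY \succ 0$, we have $c\in[0,1)$. From
\[
\BY \succeq \lambda_{\min}(\BY)\BI \succeq \frac{\lambda_{\min}(\BY)}{\lambda_{\max}(\BX)}\BX
\]
we get $\cT(\BX) = \BX - \BY \preceq c\,\BX$. Applying $\cT$ repeatedly and using positivity gives $0 \preceq \cT^k(\BX) \preceq c^k \BX$ for all $k$.

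Now take a Hermitian $\BZ$ with $\norm{\BZ}\le 1$. Then
\[
-\frac{\BX}{\lambda_{\min}(\BX)} \preceq \BZ \preceq \frac{\BX}{\lambda_{\min}(\BX)},
\]
so positivity gives
\[
\norm{\cT^k(\BZ)} \le c^k\,\frac{\lambda_{\max}(\BX)}{\lambda_{\min}(\BX)}.
\]
A general $\BZ$ splits into Hermitian and skew-Hermitian parts, each of norm at most $\norm{\BZ}$, and $\cT^k$ acts on each part separately. Hence $\norm{\cT^k}_{op} \le 2c^k\lambda_{\max}(\BX)/\lambda_{\min}(\BX)$. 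Gelfand's formula $\rho(\cT)=\lim_k \norm{\cT^k}_{op}^{1/k}$ then yields $\rho(\cT)\le c$, which is \eqref{eq: spectral radius positive linear operator}.

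\emph{Inverse bound.} Since $\rho(\cT)<1$, the Neumann series $\mathcal{S} := (\mathrm{Id}_N-\cT)^{-1} = \sum_{k\ge0}\cT^k$ converges in operator norm. Each $\cT^k$ is positive and the positive cone is closed, so $\mathcal{S}$ is a positive linear map. Telescoping the identity gives
\[
\sum_{k=0}^n \cT^k(\BY) = \BX - \cT^{n+1}(\BX) \to \BX,
\]
so $\mathcal{S}(\BY)=\BX$. Using $\BI \preceq \BY/\lambda_{\min}(\BY)$ and positivity of $\mathcal{S}$, we obtain $0\preceq \mathcal{S}(\BI) \preceq \BX/\lambda_{\min}(\BY)$, and therefore $\norm{\mathcal{S}(\BI)} \le \lambda_{\max}(\BX)/\lambda_{\min}(\BY)$.

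The main obstacle is passing from the norm at $\BI$ to the full operator norm $\norm{\mathcal{S}}_{op}$ without losing the factor $2$ that the Hermitian/skew-Hermitian splitting would introduce. For this I invoke the Russo--Dye theorem: a positive linear map $\phi$ between the unital C$^*$-algebras $\C^{N\times N}$ satisfies $\norm{\phi}_{op}=\norm{\phi(\BI)}$. Applied to $\phi=\mathcal{S}$, this gives \eqref{eq: spectral norm positive linear operator inverse}. If a self-contained argument is preferred, the Russo--Dye step can be replaced as follows. For unitary $\BU$, use the spectral decomposition $\BU=\sum_i e^{\jmath\theta_i}\BP_i$; positivity of $\mathcal{S}$ yields $\norm{\mathcal{S}(\BU)}\le\norm{\mathcal{S}(\BI)}$. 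The Russo--Dye convex-hull argument over unitaries then extends the bound to all contractions.
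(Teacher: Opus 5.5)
Your proposal is correct and follows essentially the same route as the paper: the Loewner comparison $\cT^k(\BX)\preceq c^k\BX$ combined with Gelfand's formula, then $\BX=(\mathrm{Id}_N-\cT)^{-1}\BY\succeq\lambda_{\min}(\BY)\,(\mathrm{Id}_N-\cT)^{-1}(\BI_N)$ together with Russo--Dye. The only differences are minor. In the spectral-radius step you avoid Russo--Dye by sandwiching $\BZ$ between $\pm\BX/\lambda_{\min}(\BX)$ and splitting into Hermitian and skew-Hermitian parts; the resulting factor $2$ disappears under the $k$-th root. You also justify the invertibility and positivity of $(\mathrm{Id}_N-\cT)^{-1}$ through the Neumann series, which the paper only asserts.
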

\textit{Proof:} From $\BY  \succeq \frac{\lambda_{\min}(\BY)}{\lambda_{\max}(\BX)} \BX$, we obtain $\cT(\BX) = \BX - \BY  \preceq (1 - \frac{\lambda_{\min}(\BY)}{\lambda_{\max}(\BX)}) \BX := \beta \BX$. Moreover, the inequality $\BX \succeq \lambda_{\min}(\BX) \BI_N$ implies $\lambda_{\min}(\BX) \cT(\BI_N) \preceq \cT(\BX)$. Applying the operator iteratively $k$ times gives
\begin{align}
    \lambda_{\min}(\BX) \cT^k(\BI_N) \preceq \cT^k(\BX) \preceq \beta^k \BX.
\end{align}
Hence, 
\begin{equation}
    \norm{\cT^k(\BI_N)}^{\frac{1}{k}} \leq \beta \left( \frac{\norm{\BX}}{\lambda_{\min}(\BX)} \right)^{\frac{1}{k}}. 
\end{equation}
According to the Russo–Dye Theorem \cite{russo1966note}, the spectral norm satisfies $\lnorm{\cT^k}_{op} = \lnorm{ \cT^k(\BI_N) } $. Then, by Gelfand's formula, we have 
\begin{align}
    \rho(\cT) = \lim_{k \to \infty} \norm{\cT^k}^{\frac{1}{k}}_{op} \leq \beta,
\end{align}
which establishes \eqref{eq: spectral radius positive linear operator}. For the second assertion, note that the map $\mathrm{Id}_N - \cT$ is invertible and $(\mathrm{Id}_N - \cT)^{-1}$ is positive, and 
\begin{align}
    \BX = (\mathrm{Id}_N - \cT)^{-1}\BY \succeq \lambda_{\min}(\BY)(\mathrm{Id}_N - \cT)^{-1}(\BI_N).
\end{align}
Taking the spectral norm on both sides of the above inequality proves \eqref{eq: spectral norm positive linear operator inverse}. \qed

\begin{lemma}[Choi's theorem]
\label{lem: Choi theorem}
    Let $\cT: \C^{N \times N} \mapsto \C^{M \times M}$ be a linear map and define $\boldsymbol{\Xi}_{\cT} = \sum_{i, j=1}^N \BE_{ ij} \otimes \cT(\BE_{ij})$, where $\BE_{ij} \in \C^{N \times N}$ denotes the matrix with $1$ at the $(i, j)$-th entry and $0$ otherwise. If $\BXi_\cT$ is Hermitian non-negative, then $\cT$ admits a representation of the form  
    \begin{align}
        \cT(\BX) = \sum_{k=1}^T \BV_k \BX \BV_k^H, 
    \end{align}
    where $T \leq MN$  and each $\BV_k \in \C^{M \times N}$.
\end{lemma}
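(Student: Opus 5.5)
The statement to prove is Choi's theorem. The plan is to read the Kraus form directly off a spectral decomposition of the Choi matrix $\BXi_\cT$. First note that $\BXi_\cT \in \C^{NM \times NM}$ is an $N\times N$ block matrix whose $(i,j)$-th block is $\cT(\BE_{ij}) \in \C^{M\times M}$. Since $\BXi_\cT$ is Hermitian non-negative, the spectral theorem gives
\begin{align*}
\BXi_\cT = \sum_{k=1}^{T} \Bv_k \Bv_k^H, \qquad T = \rank(\BXi_\cT) \leq MN,
\end{align*}
where each $\Bv_k = \sqrt{\lambda_k}\,\Bu_k$ with $\lambda_k>0$ an eigenvalue and $\Bu_k$ a unit eigenvector.

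Next, reshape each $\Bv_k\in\C^{NM}$ into a matrix. Partition $\Bv_k$ into $N$ consecutive sub-vectors $\Bv_k^{(1)},\ldots,\Bv_k^{(N)} \in \C^M$, consistent with the block structure $\BE_{ij}\otimes(\cdot)$. Define $\BV_k = [\Bv_k^{(1)},\ldots,\Bv_k^{(N)}] \in \C^{M\times N}$, so that $\BV_k \Be_i = \Bv_k^{(i)}$ for the canonical basis vectors $\Be_i$ of $\C^N$. Comparing the $(i,j)$-th $M\times M$ blocks of the decomposition then gives
\begin{align*}
\cT(\BE_{ij}) = \sum_{k=1}^T \Bv_k^{(i)} \big(\Bv_k^{(j)}\big)^H = \sum_{k=1}^T \BV_k \Be_i \Be_j^H \BV_k^H = \sum_{k=1}^T \BV_k \BE_{ij} \BV_k^H .
\end{align*}

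Finally, write an arbitrary $\BX\in\C^{N\times N}$ as $\BX=\sum_{i,j} X_{ij}\BE_{ij}$. Both $\cT$ and the map $\BX\mapsto\sum_k \BV_k\BX\BV_k^H$ are linear and agree on the basis $\{\BE_{ij}\}$, so they coincide everywhere. This yields the claimed representation with $T\leq MN$.

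No step is a genuine obstacle. The only point needing care is the index bookkeeping: the Kronecker ordering $\BE_{ij}\otimes\cT(\BE_{ij})$ places $\cT(\BE_{ij})$ in the $(i,j)$-th block, so the sub-vectors must be taken as consecutive length-$M$ chunks of $\Bv_k$. With that ordering fixed, the identity $\Bv_k\Bv_k^H|_{(i,j)\text{ block}}=\BV_k\BE_{ij}\BV_k^H$ is immediate.
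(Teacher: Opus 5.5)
Your proof is correct and is essentially the paper's argument. You decompose $\BXi_\cT=\sum_k \Bv_k\Bv_k^H$, take $\BV_k\in\C^{M\times N}$ with $\Vecm(\BV_k)=\Bv_k$ (your consecutive length-$M$ chunks), read off $\cT(\BE_{ij})=\sum_k \BV_k\BE_{ij}\BV_k^H$ from the $(i,j)$-th block, and conclude by linearity. The only cosmetic differences are that the paper extracts blocks with the selectors $\Be_{N,i}^\top\otimes\BI_M$ and keeps $NM$ terms, while you take $T=\rank(\BXi_\cT)$.
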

\textit{Proof:} Since $\BXi_{\cT} \succeq 0$, it can be written as $\BXi_{\cT} = \sum_{j=1}^{NM} \Bv_j\Bv_j^H$. Let $\BP_j = \Be_{N, j}^\top \otimes \BI_M$. Then,  we have $\BP_a \Bv_j = \BV_j \Be_{N, a}$, where $\BV_j \in \C^{M \times N}$ satisfies $\Vecm(\BV_j) = \Bv_j$. Thus,
\begin{align}
    \cT(\BE_{ab}) = \BP_a  \BXi_{\cT} \BP_b^\top = \sum_{j=1}^{MN} \BP_a \Bv_j\Bv_j^H \BP_b^\top = \sum_{j=1}^{NM} \BV_{j}  \BE_{ab} \BV_j^H.
\end{align}
By linearity of $\cT$, Lemma \ref{lem: Choi theorem} is proved. \qed

\section{Proof of Theorem \ref{thm: bound norm}}
\label{app: proof of thm bound norm}
The proof of Theorem \ref{thm: bound norm} is based on an epsilon net argument \cite{tao2023topics}. To this end, we first establish the following probabilistic bound, which characterizes the random matrix $\BH$ as a random operator evaluated at a specific point $\Bu$. To simplify the notation, we denote $\BX = \sqrt{M}\BH$.
\begin{lemma}
\label{lem: bound inner prod}
    Let $\BX \in \mathbb{C}^{N \times M}$ be a random matrix and denote $\Bx = \Vecm(\BX)$. Assume $\Bx$ is a Gaussian vector such that $\Bx \sim \mathcal{CN}(0, \BC)$. Let $\Bu \in \mathbb{C}^N$ be a deterministic unit vector. Then, for any $t \geq \sqrt{6 \ln(2) \norm{\BC}}$, we have
    \begin{equation}
        \PP\left(\norm{\BX^H \Bu} \geq t\sqrt{M}\right) \leq\exp\left(-\frac{M t^2}{3 \norm{\BC}} \right). \label{eq: prob_inequal_Xu}
    \end{equation}
\end{lemma}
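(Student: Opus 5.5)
The plan is to reduce $\norm{\BX^H\Bu}^2$ to a weighted sum of independent exponential variables and then apply a Chernoff bound. Write $\Bx_j$ for the $j$-th column of $\BX$. Then $[\BX^H\Bu]_j=\Bx_j^H\Bu=\overline{\Bu^H\Bx_j}$, so
\begin{align*}
\norm{\BX^H\Bu}=\norm{\Bw},\qquad \Bw=(\BI_M\otimes\Bu^H)\,\Bx\in\C^M .
\end{align*}
Since $\Bx\sim\mathcal{CN}(0,\BC)$, the vector $\Bw$ is circularly symmetric Gaussian with covariance
\begin{align*}
\BSigma_u=(\BI_M\otimes\Bu^H)\,\BC\,(\BI_M\otimes\Bu).
\end{align*}
The map $\BI_M\otimes\Bu$ is an isometry because $\Bu$ is a unit vector. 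Hence $\norm{\BSigma_u}\leq\norm{\BC}$.

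Next, diagonalize $\BSigma_u=\sum_{k=1}^M\lambda_k\Be_k\Be_k^H$ with $0\leq\lambda_k\leq\norm{\BC}$. This gives the representation
\begin{align*}
\norm{\Bw}^2\overset{d}{=}\sum_{k=1}^M\lambda_k|g_k|^2,
\end{align*}
where the $g_k$ are i.i.d.\ $\mathcal{CN}(0,1)$. Each $|g_k|^2$ is a standard exponential variable, so $\E e^{s|g_k|^2}=(1-s)^{-1}$ for $s<1$. I take $s=1/(2\norm{\BC})$. Then $s\lambda_k\leq 1/2$ for every $k$, and each factor satisfies $\E e^{s\lambda_k|g_k|^2}=(1-s\lambda_k)^{-1}\leq 2$. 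Consequently $\E e^{s\norm{\Bw}^2}\leq 2^M$.

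Markov's inequality applied to $e^{s\norm{\Bw}^2}$ then yields
\begin{align*}
\PP\left(\norm{\BX^H\Bu}\geq t\sqrt M\right)\leq 2^M\exp\left(-\frac{Mt^2}{2\norm{\BC}}\right)=\exp\left(M\ln 2-\frac{Mt^2}{2\norm{\BC}}\right).
\end{align*}
To reach \eqref{eq: prob_inequal_Xu}, we need $M\ln2-\frac{Mt^2}{2\norm{\BC}}\leq-\frac{Mt^2}{3\norm{\BC}}$. This is equivalent to $\ln 2\leq \frac{t^2}{6\norm{\BC}}$, which is exactly the hypothesis $t\geq\sqrt{6\ln(2)\norm{\BC}}$.

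The only step needing care is identifying the covariance of $\BX^H\Bu$ and bounding its norm; the conjugation from $\Bx_j^H\Bu$ does not affect the modulus or the circular symmetry. The rest is a routine Chernoff computation. The degenerate case $\BC=0$ is trivial.
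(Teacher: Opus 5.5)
Your proof is correct and follows essentially the same route as the paper: both reduce $\norm{\BX^H\Bu}^2$ in distribution to $\sum_{k\leq M}\lambda_k|g_k|^2$, with i.i.d.\ standard complex Gaussians $g_k$ and weights $0\leq\lambda_k\leq\norm{\BC}$. Both then apply a Chernoff bound whose parameter is effectively $1/(2\norm{\BC})$, which leads to the same condition $\ln 2\leq t^2/(6\norm{\BC})$. The differences are cosmetic: you project directly onto the $M$-dimensional vector $(\BI_M\otimes\Bu^H)\Bx$, whereas the paper works with the rank-$\leq M$ quadratic form $\Bz^H\BC^{H/2}(\BI_M\otimes\Bu\Bu^H)\BC^{1/2}\Bz$ in dimension $NM$; and you keep the weights $\lambda_k$ inside the moment generating function instead of first bounding them by $\norm{\BC}$.
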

\textit{Proof:} If $\BC = \mathbf{0}_{NM \times NM}$ and $t > 0$, the desired result already holds. Without loss of generality, we assume $\BC \neq \mathbf{0}_{NM \times NM}$. Using the vectorization identity, we have 
\begin{align*}
    \norm{\BX^H \Bu}^2 = \Bu^H \BX\BX^H \Bu = \left[\Vecm(\BX)\right]^H (\BI_M \otimes \Bu\Bu^H) \Vecm(\BX) \overset{d}{=} \Bz^H \BC^{\frac{H}{2}} (\BI_M \otimes \Bu\Bu^H) \BC^{\frac{1}{2}} \Bz := \Bz^H \BG \Bz, \numberthis
\end{align*}
where the notation $\overset{d}{=}$ represents equality in distribution and $\Bz \sim \mathcal{CN}(0, \BI_{NM})$. By the identity $\rank(\BA \otimes \BB) = \rank(\BA) \rank(\BB)$, the rank of the matrix $\BG$ is not larger than $M$. Thus, the eigenvalue decomposition of $\BG$ can be given by $\BG = \BU \diag(\lambda_1, \ldots, \lambda_M, \\ 0, \ldots, 0) \BU^H$. Denote $\widetilde{\Bz} = \BU^H\Bz = [\widetilde{z}_1, \widetilde{z}_2, \ldots, \widetilde{z}_{NM}]^\top$. Due to the rotational invariance of the complex Gaussian distribution, we have $\widetilde{\Bz} \sim \mathcal{CN}(0, \BI_{NM})$ and thus
\begin{align*}
    \norm{\BX^H \Bu}^2 \overset{d}{=} \sum_{j=1}^M \lambda_j \abs{\widetilde{z}_j}^2 \overset{(a)}{\leq} \norm{\BC}\sum_{j=1}^M \abs{\widetilde{z}_j}^2, \numberthis
\end{align*}
where step $(a)$ follows from the identity $\lVert \BA \otimes \BB\rVert = \norm{\BA} \norm{\BB}$ and the bound $\max_j \lambda_j \leq \lVert\BC^{\frac{H}{2}} (\BI_M \otimes \Bu\Bu^H) \BC^{\frac{1}{2}} \rVert \leq \norm{\BC}$. Therefore, denoting $t' = \frac{t^2M}{\norm{\BC}}$, we have 
\begin{align*}
    &\PP \left( \norm{\BX^H \Bu} \geq t\sqrt{M} \right) \leq \PP \left( \sum_{j=1}^M \abs{\widetilde{z}_j}^2 \geq t' \right) = \PP \left( e^{\lambda \sum_{j=1}^M \abs{\widetilde{z}_j}^2} \geq e^{\lambda t'} \right) \leq e^{-\lambda t'} \prod_{j=1}^M \E \left( e^{\lambda \abs{\widetilde{z}_j}^2} \right) \\
    & = e^{-\lambda t'} \prod_{j=1}^M \E \left( e^{\lambda [\Re(\widetilde{z}_j)]^2} \right) \E \left( e^{\lambda [\Im(\widetilde{z}_j)]^2} \right) \overset{(a)}{=}  \frac{e^{-\lambda t'}}{(1 - \lambda)^M} \leq \exp\left[ -M \left( \frac{\lambda t^2}{\norm{\BC}} +\ln(1 - \lambda) \right) \right], ~~ \forall \lambda \in (0, 1). \numberthis
\end{align*}
Here, step $(a)$ follows from the fact that both the real part $\Re(\widetilde{z}_j)$ and the imaginary part  $\Im(\widetilde{z}_j)$ are independently distributed as $\mathcal{N}(0, \frac{1}{2})$.  By setting $\lambda = \frac{1}{2}$ in the above inequality, we have $\frac{t^2}{2 \norm{\BC}} + \ln(1 - \frac{1}{2}) \geq \frac{t^2}{3 \norm{\BC}}$ for any $t \geq \sqrt{6 \ln(2) \norm{\BC}}$ and \eqref{eq: prob_inequal_Xu} holds. Therefore, we have completed the proof. \qed
\par
In the following, we will prove Theorem \ref{thm: bound norm} using Lemma \ref{lem: bound inner prod}.  Let $\mathcal{N}_N$ be a maximal $\varepsilon$-net on the unit sphere $\mathbb{S}^{N-1} = \{ \Bx \in \C^N: \norm{\Bx} = 1\}$,  where $\varepsilon$ is a small constant to be chosen later. By definition, this means $\mathcal{N}_N \subset \mathbb{S}^{N-1} $ is a set that satisfies 
\begin{itemize}
    \item[1)] For any $\Bu \in \mathbb{S}^{N-1}$, there exists $\Bv \in \mathcal{N}_N$ such that $\norm{\Bu - \Bv} \leq \varepsilon$. 
    \item[2)] For any distinct $\Bv_1, \Bv_2 \in \mathcal{N}_N$, we have $\norm{\Bv_1 - \Bv_2} \geq {\varepsilon}$. 
\end{itemize}
Since $\mathbb{S}^{N-1}$ is compact and $\Bu \mapsto \lVert\BX^H \Bu \rVert$ is continuous, there exists $\Bu^{\star} \in \mathbb{S}^{N-1}$ such that $\norm{\BX} = \lVert\BX^H \Bu^{\star} \rVert$. By selecting $\Bv^\star \in \mathcal{N}_N$ such that $\norm{\Bu^{\star} - \Bv^\star} \leq \varepsilon$, we have 
\begin{align*}
 \norm{\BX^H \Bv^\star} = \norm{\BX^H \Bu^\star + \BX^H (\Bv^\star - \Bu^\star)} \geq  \norm{\BX^H \Bu^\star} - \norm{\BX^H (\Bv^\star - \Bu^\star)} \geq (1 - \varepsilon) \norm{\BX}, \numberthis
\end{align*}
since $\lVert{\BX^H (\Bv^\star - \Bu^\star)} \rVert \leq \lVert{\BX^H}\rVert \norm{ \Bv^\star - \Bu^\star} \leq \varepsilon \lVert{\BX^H}\rVert$. Consequently, the union bound gives, for any $a > 0$,
\begin{align*}
\label{eq: card ball}
    \PP \left( \norm{\BX} \geq a \right) \leq \PP \left( \max_{\Bv \in \mathcal{N}_N} \norm{\BX^H \Bv}  \geq (1 - \varepsilon)a \right) \leq \sum_{\Bv \in \mathcal{N}_N} \PP \left(  \norm{\BX^H \Bv}  \geq (1 - \varepsilon)a \right). \numberthis
\end{align*}
To bound the right-hand side (RHS) of \eqref{eq: card ball}, we need to estimate the cardinality of $\mathcal{N}_N$. Note that the open balls of radius $\varepsilon/2$ centered at distinct points of $\mathcal{N}_N$ are disjoint. Moreover, all these balls are contained in the enlarged ball $B(0, 1 + \varepsilon/2)$. By comparing the volumes of these sets, we obtain 
\begin{equation}
    \mathrm{Vol}(B(0, \varepsilon / 2))|\mathcal{N}_N| \leq \mathrm{Vol}(B(0, 1 + \varepsilon / 2)),
\end{equation}
which yields
\begin{equation}
    |\mathcal{N}_N| \leq \frac{\mathrm{Vol}(B(0, 1 + \varepsilon / 2))}{\mathrm{Vol}(B(0, \varepsilon / 2))} = (2 / \varepsilon + 1)^{2N}.
\end{equation}
By setting $\varepsilon = \frac{1}{2}$,  $a = t\sqrt{M} $ and requiring $t \geq 2\sqrt{6 \ln(2) \norm{\BC}}$ in \eqref{eq: card ball},  the following upper tail estimate is obtained by Lemma \ref{lem: bound inner prod}
\begin{equation}
\label{eq: prob bound on H}
    \PP (\norm{\BH} \geq t ) = \PP (\norm{\BX} \geq t \sqrt{M}) \leq (2 / \varepsilon + 1)^{2N} \exp\left( -\frac{(1 - \varepsilon)^2 M t^2}{3 \norm{\BC}} \right) = \exp\left( 2N \ln(5) - \frac{M t^2}{12\norm{\BC}}\right), 
\end{equation}
which implies 
\begin{align}
\PP(\lnorm{\BH} \geq t) \leq \exp(-NK_1t^2)  \label{eq: proba tail bound for norm H}
\end{align}
for $t \geq K$,
 where $K, K_1 > 0$ are constants independent of $N $ and $ M$. Consequently, we can obtain
\begin{equation}
    \sum_{N = 1}^{\infty}\PP (\norm{\BH} \geq K) \leq \sum_{N = 1}^{\infty} \exp(-NK_1K^2) < \infty,
\end{equation}
which establishes \eqref{eq: limsup prob bound H}. Next, using the identity $\E|X|^p = p\int_0^{\infty} \PP (|X| \geq t) t^{p-1} \dd t$, the following moment bound is obtained
\begin{align*}
\label{eq: expecation}
    \E \norm{\BH}^p &= p\int_0^{\infty} t^{p-1}\PP (\norm{\BH} \geq t)  \dd t \leq \int_0^K p t^{p-1} \dd t + \int_K^{\infty}pt^{p-1} \exp(-NK_1t^2) \dd t \leq K^p + \frac{p \Gamma(\frac{p}{2})}{2(NK_1)^{\frac{p}{2}}}, \numberthis
\end{align*}
where $\Gamma(z) = \int_0^{\infty} u^{z-1} e^{-u} \dd u$ denotes the standard Gamma function. Hence, we have $\sup_{N \geq 1} \E \lnorm{\BH}^p\leq K_p$, where $K_p > 0$ is a constant independent of $N$ and $M$. This completes the proof. \qed

\section{Proof of Proposition \ref{prop: exist unique DE}}
\label{app: proof of prop exist unique DE}
We prove the proposition by employing properties of the matrix-valued Stieltjes transform and Picard’s approximation method. To this end, we first give the formal definition of the class $\mathcal{S}_n(\R_+)$. A matrix-valued function $\BF(z) \in \C^{n \times n}$, $z \in \C-\R_+$, belongs to $\mathcal{S}_n(\R_+)$ if it admits the following integral representation
\begin{align}
    \BF(z) = \int_{\R_+} \frac{\Bmu(\dd \lambda)}{\lambda - z}, ~~ z \in \C- \R_+,
\end{align}
where $\Bmu = (\mu_{i,j})_{1 \leq i, j \leq n}: \mathscr{B}(\R_+) \mapsto \C^{n \times n}$ is a non-negative matrix-valued measure. That is, for any Borel set $B \subset \R_+$, the matrix $\Bmu(B) \succeq \mathbf{0}_{n \times n}$, and $\Tr \Bmu(\R_+) < \infty$. The following proposition provides a sufficient condition for a matrix-valued function to belong to the class $\cS_n(\R_+)$.
\begin{proposition}(\cite[Proposition 2.2]{hachem2007deterministic})
\label{prop: matrix-valued Stieltjes transform}
Let $\BF$ be an $n$-by-$n$ matrix-valued function on $\C_+$ satisfying:
\begin{itemize}
    \item[1]  The function $\BF$ is holomorphic on $z \in \C_+$.
    \item[2]  The matrices $\Im(\BF(z))$ and $\Im (z\BF(z))$ satisfy $\Im (\BF(z)) \succeq 0$ and $\Im (z\BF(z)) \succeq 0$, for any $z \in \C_+$.
    \item[3] $\lim_{y \to \infty} -\jmath y \BF(\jmath y) = \BA$, for some $\BA \succeq 0$.
\end{itemize}
Then, there exists a non-negative $n$-by-$n$ matrix-valued measure $\Bmu$ supported on $\R_+$ such that $\Bmu(\R_+) = \BA$ and $ \BF(z) = \int_{\R_+} \frac{\Bmu(\dd \lambda)}{\lambda - z}$. Moreover, $\BF(z)$ has an analytic continuation ${\BF}_a(z)$ on $\C - \R_+$ and ${\BF}_a(z) \in \cS_n(\R_+)$.
\end{proposition}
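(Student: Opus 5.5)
This is the matrix-valued analogue of \cite[Proposition 2.2]{hachem2007deterministic}: the three conditions should characterize the Stieltjes transform of a non-negative matrix measure on $\R_+$. I would reduce it to the scalar Nevanlinna--Stieltjes theory by testing against vectors. For a fixed $\Bu \in \C^n$, define $f_{\Bu}(z) = \Bu^H \BF(z) \Bu$. Then $f_{\Bu}$ is holomorphic on $\C_+$, and condition 2 gives $\Im f_{\Bu}(z) = \Bu^H \Im(\BF(z)) \Bu \geq 0$ and $\Im(z f_{\Bu}(z)) \geq 0$. Condition 3 gives $-\jmath y f_{\Bu}(\jmath y) \to \Bu^H \BA \Bu \geq 0$.

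The scalar characterization (e.g.\ Krein--Nudelman, as used in \cite{hachem2007deterministic}) then applies. A Nevanlinna function with $\sup_y y|f(\jmath y)| < \infty$ is the Stieltjes transform of a finite positive measure $\nu_{\Bu}$ on $\R$ with $\nu_{\Bu}(\R) = \Bu^H \BA \Bu$. The extra condition $\Im(z f) \geq 0$ then forces $\mathrm{supp}\,\nu_{\Bu} \subset \R_+$. To see this, note that $\Im(z f(z)) = \int \lambda\, \Im(z)\, |\lambda - z|^{-2}\, \nu_{\Bu}(\dd\lambda)$. Approaching a negative point of the real axis with Stieltjes inversion shows that any mass on $(-\infty,0)$ would make this quantity negative. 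Since the inversion is applied to the limit $\Im f(x + \jmath\varepsilon)$, this argument is the one delicate step, and I would cite the scalar result rather than redo it. The analytic continuation of $f_{\Bu}$ to $\C - \R_+$ is given by the same integral, because $\lambda \mapsto 1/(\lambda - z)$ is bounded on $\R_+$ for every $z \notin \R_+$.

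Next I build the matrix measure by polarization. For each Borel set $B$, define the sesquilinear form $(\Bu, \Bv) \mapsto \mu(B)[\Bu, \Bv]$ through
\[
\Bu^H \Bmu(B) \Bv = \tfrac14 \textstyle\sum_{k=0}^3 \jmath^{-k}\, \nu_{\Bv + \jmath^k \Bu}(B).
\]
Two facts make this well defined. First, uniqueness of the representing measure of a Stieltjes transform shows that the resulting form is sesquilinear in $(\Bu, \Bv)$. Second, the map $\Bu \mapsto \nu_{\Bu}(B)$ is a quadratic form, because $f_{\Bu}$ is one and the Stieltjes inversion formula is linear. Hence $\Bmu(B)$ is a matrix with $\Bu^H \Bmu(B) \Bu = \nu_{\Bu}(B) \geq 0$, so $\Bmu(B) \succeq 0$. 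Countable additivity of $\Bmu$ follows entrywise from that of the $\nu$'s, and $\Bmu(\R_+) = \BA$ by the mass identity.

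Finally, polarization of the scalar identities gives
\[
\Bu^H \BF(z) \Bv = \int_{\R_+} \frac{\Bu^H \Bmu(\dd\lambda)\, \Bv}{\lambda - z}
\]
for all $\Bu, \Bv$. This is the claimed representation, and since $\Tr \Bmu(\R_+) = \Tr \BA < \infty$, it places the continuation ${\BF}_a$ in $\cS_n(\R_+)$.

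The main obstacle is the scalar step, specifically showing that $\Im(zf) \geq 0$ confines the support to $\R_+$. The rest is bookkeeping through polarization.
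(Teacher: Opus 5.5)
Your argument is correct. The paper gives no proof of its own and simply imports this result from \cite{hachem2007deterministic}, so there is nothing to diverge from: testing $\BF$ against vectors to reduce to the scalar Krein--Nudelman characterization, then rebuilding $\Bmu$ by polarization, is the standard way this matrix extension is obtained.

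There are two small points. First, as written, $\tfrac14\sum_{k=0}^{3}\jmath^{-k}\nu_{\Bv+\jmath^{k}\Bu}(B)$ equals $\Bv^H\Bmu(B)\Bu$, so the subscript should read $\Bu+\jmath^{k}\Bv$. Second, the uniqueness you invoke for the resulting complex measures holds only because they are carried by $\R_+$, so their transforms extend to all of $\C-\R_+$. Values on $\C_+$ alone do not determine a complex measure on $\R$; for example, $\dd\lambda/(\lambda-\jmath)^2$ has identically vanishing transform there. Your second fact, Stieltjes inversion applied to the Hermitian matrix $\Im\BF(x+\jmath\varepsilon)$, sidesteps this issue entirely.
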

Since $\BF$ and its analytic continuation $\BF_a$
coincide on $\C_+$, we will not distinguish between them in what follows. The next two propositions show that certain transformations of elements in 
$\cS_n(\R_+)$ also belong to this class. These facts will be useful in the proof of Proposition \ref{prop: exist unique DE}.
\begin{proposition}
\label{prop: L in Sn}
    Let $\BG \in \cS_n(\R_+)$ and define $\BL(z) = -\frac{1}{z}(\BI_n + \BG(z))^{-1}$. Then, $ \BL \in \cS_n(\R_+)$. Moreover, the underlying measure of  $\BL$ satisfies $\Bmu(\R_+) = \BI_n$.
\end{proposition}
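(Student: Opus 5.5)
The plan is to verify the three conditions of Proposition~\ref{prop: matrix-valued Stieltjes transform} for $\BL(z) = -\frac{1}{z}(\BI_n + \BG(z))^{-1}$ on $\C_+$, and to read off the total mass $\BA = \BI_n$ from condition 3. Write $\BG(z) = \int_{\R_+} \frac{\Bmu_G(\dd\lambda)}{\lambda - z}$.

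We begin with invertibility and holomorphy. For $z \in \C_+$ we have $\Im \BG(z) = \Im(z)\int \frac{\Bmu_G(\dd\lambda)}{|\lambda-z|^2} \succeq 0$. The cleaner object, however, is $z\BG(z)$, for which $\Im(z\BG(z)) = \Im(z)\int \frac{\lambda\,\Bmu_G(\dd\lambda)}{|\lambda - z|^2} \succeq 0$ because $\lambda \geq 0$. We therefore rewrite
\[
\BL(z) = -\bigl(z\BI_n + z\BG(z)\bigr)^{-1}.
\]
The matrix $\BM(z) = z\BI_n + z\BG(z)$ satisfies $\Im \BM(z) \succeq \Im(z)\BI_n \succ 0$. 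A matrix with positive definite imaginary part is invertible: if $\BM\Bx = 0$, then $\Bx^H(\Im \BM)\Bx = \Im(\Bx^H\BM\Bx) = 0$, which forces $\Bx = 0$. Hence $\BL$ is holomorphic on $\C_+$, since $\BG$ is.

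Next we check the two imaginary-part conditions, using the identity
\[
\Im(-\BM^{-1}) = \BM^{-1}(\Im \BM)\BM^{-H},
\]
which follows from $-\BM^{-1} + \BM^{-H} = \BM^{-1}(\BM - \BM^H)\BM^{-H}$. For the first condition, $\Im \BL = \BM^{-1}(\Im \BM)\BM^{-H} \succeq 0$. For the second, we write
\[
z\BL(z) = -\bigl(\BI_n + \BG(z)\bigr)^{-1},
\]
so that $\Im(z\BL) = (\BI_n + \BG)^{-1}(\Im \BG)(\BI_n + \BG)^{-H} \succeq 0$. This step also needs $\BI_n + \BG$ to be invertible, which holds because $\BM = z(\BI_n + \BG)$ is invertible.

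Finally, for condition 3 we compute $-\jmath y\,\BL(\jmath y) = (\BI_n + \BG(\jmath y))^{-1}$. We have $\norm{\BG(\jmath y)} \leq \Tr\Bmu_G(\R_+)/y \to 0$ as $y \to \infty$, so this limit is $\BI_n$. Proposition~\ref{prop: matrix-valued Stieltjes transform} then gives $\BL \in \cS_n(\R_+)$ with $\Bmu(\R_+) = \BI_n$.

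The main obstacle is only the invertibility of $\BI_n + \BG(z)$. The device that handles it is to multiply by $z$ and use the positivity of $\Im(z\BG(z))$, which depends on the support of $\Bmu_G$ lying in $\R_+$. The rest of the argument is routine.
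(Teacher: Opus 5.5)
Your proposal is correct and follows essentially the same route as the paper: you verify the three conditions of Proposition~\ref{prop: matrix-valued Stieltjes transform}, get invertibility from $\Im(z\BI_n + z\BG(z)) \succeq \Im(z)\BI_n$, and get the imaginary-part conditions from identities that coincide with the paper's (your $\BM^{-1}(\Im\BM)\BM^{-H}$ and $(\BI_n+\BG)^{-1}(\Im\BG)(\BI_n+\BG)^{-H}$ are exactly $\Im(z)\BL\BL^H + \BL\,\Im(z\BG)\BL^H$ and $|z|^2\BL\,\Im(\BG)\BL^H$). Your total-mass step, $-\jmath y\BL(\jmath y) = (\BI_n+\BG(\jmath y))^{-1}\to\BI_n$ because $\BG(\jmath y)\to 0$, is also the paper's argument.
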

\textit{Proof: } It suffices to verify that $\BL$ satisfies the three properties stated in Proposition \ref{prop: matrix-valued Stieltjes transform}. Consider property $1$. Let $\BS = -z\BI - z\BG(z)$. For any $z \in \C_+$ and any nonzero $\Bx \in \C^n$,
\begin{align}
\label{eq: analytical of L on up half C}
    \Im(\Bx^H \BS \Bx) = \Bx^H \left(\frac{\BS - \BS^H}{2\jmath} \right) \Bx = - \Im(z) \norm{\Bx}^2- \Bx^H \Im(z \BG(z)) \Bx \leq -\Im(z) \norm{\Bx}^2.
\end{align}
Hence, for any $\Bx \neq 0$, $\Bx^H \BS \Bx \neq 0$ and thus $\BS$ is invertible. Consequently, the function $\BL = \BS^{-1}$ is holomorphic on $\C_+$ and property 1 is verified. Applying the resolvent identity, we can get 
\begin{align}
    \Im(\BL) = \frac{\BL - \BL^H}{2\jmath} = \frac{1}{2\jmath}\BL(z \BI_n - z^* \BI_n + z\BG - z^*\BG^H  )\BL^H = \Im(z) \BL\BL^H + \BL \Im(z\BG) \BL^H \succeq 0.
\end{align}
Similarly, $
    \Im(z\BL) = \frac{z\BL - z^*\BL^H}{2\jmath} = \frac{|z|^2}{2\jmath}\BL(\BG - \BG^H  )\BL^H = |z|^2\BL \Im(\BG) \BL^H \succeq 0$, 
which verifies property 2. Since $\BG \in \cS_n(\R_+)$, we have the limit $\lim_{y \to \infty} -\jmath y \BG(\jmath y)$ exists and $\lim_{y \to \infty} \norm{\BG(\jmath y)} = 0$. Therefore, 
\begin{align}
    \norm{\BI_n + \jmath y \BL(\jmath y)} \leq  \norm{\BG(\jmath y)} \norm{(\BI_n + \BG(\jmath y))^{-1}} \to 0,
\end{align}
which implies $\lim_{y \to \infty} - \jmath y \BL(\jmath y) = \BI_n$ and property 3 is verified. Therefore, Proposition \ref{prop: L in Sn} is proved. \qed
\begin{proposition}
\label{prop: cLG and wcLG in Sn}
    For $\BG \in \cS_N(\R_+)$ and $\widetilde{\BG} \in \cS_M(\R_+)$, we have the matrices $\cL(\BG(z)) \in \cS_M(\R_+)$ and $\widetilde{\cL}(\widetilde{\BG}(z)) \in \cS_N(\R_+)$. 
\end{proposition}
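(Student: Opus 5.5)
The plan is to push the integral representation of $\BG$ through the linear map. Write $\BG(z)=\int_{\R_+}\frac{\Bmu(\dd\lambda)}{\lambda-z}$ with $\Bmu$ a non-negative $N\times N$ matrix-valued measure of finite trace. Define the $M\times M$ matrix-valued set function $\boldsymbol{\nu}(B)=\cL(\Bmu(B))$ for Borel sets $B\subset\R_+$. By linearity of $\cL$,
\begin{align*}
\cL(\BG(z))=\int_{\R_+}\frac{\boldsymbol{\nu}(\dd\lambda)}{\lambda-z}.
\end{align*}
It then suffices to check that $\boldsymbol{\nu}$ is a non-negative matrix-valued measure of finite trace. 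The same argument, with $\widetilde{\cL}$ in place of $\cL$, handles $\widetilde{\cL}(\widetilde{\BG})$.

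The key structural fact is that $\cL$ and $\widetilde{\cL}$ are positive linear maps. This is immediate from the definition \eqref{eq: def of two operators}: for $\BX\succeq 0$ we have $\bar{\BH}^H\BX\bar{\BH}\succeq 0$ pointwise, so its expectation $\cL(\BX)\succeq0$, and likewise for $\widetilde{\cL}$. Alternatively, one can use the explicit form $[\cL(\BX)]_{i,j}=\frac1M\Tr\BC_{ji}\BX$ and observe via Lemma~\ref{lem: Choi theorem} that the Choi matrix is essentially a reshuffling of $\BC\succeq0$. The expectation argument is cleaner, and finiteness of the expectation holds since $\E\norm{\bar\BH}^2<\infty$ by Theorem~\ref{thm: bound norm}, or simply since the entries are Gaussian.

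Given positivity, the measure properties follow quickly:
\begin{itemize}
\item Each entry $\nu_{ij}(B)=\frac1M\Tr\BC_{ji}\Bmu(B)$ is a finite linear combination of the complex measures $\mu_{kl}$, so it is countably additive.
\item $\boldsymbol{\nu}(B)=\cL(\Bmu(B))\succeq0$ by positivity.
\item $\Tr\boldsymbol{\nu}(\R_+)\le M\norm{\cL}_{op}\norm{\Bmu(\R_+)}<\infty$, because $\cL$ is a linear map on a finite-dimensional space.
\end{itemize}
Exchanging $\cL$ with the integral is justified by linearity together with finiteness of the scalar integrals $\int\frac{\mu_{kl}(\dd\lambda)}{\lambda-z}$ for $z\in\C-\R_+$. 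Hence $\cL(\BG)\in\cS_M(\R_+)$, with total mass $\cL(\Bmu(\R_+))$.

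As a fallback, one could instead verify the three conditions of Proposition~\ref{prop: matrix-valued Stieltjes transform} directly. Holomorphy is preserved by a linear map. The identities $\Im\cL(\BG)=\cL(\Im\BG)\succeq0$ and $\Im(z\cL(\BG))=\cL(\Im z\BG)\succeq0$ hold because $\cL$ commutes with taking Hermitian parts: $\cL(\BX)^H=\cL(\BX^H)$ by the expectation form. Finally, $-\jmath y\,\cL(\BG(\jmath y))\to\cL(\Bmu(\R_+))\succeq0$ by continuity.

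The only step that needs any care is this Hermitian-compatibility and positivity of $\cL$. It is immediate from the expectation representation, so I expect no real obstacle.
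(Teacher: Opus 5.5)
Your proposal is correct, but your main argument takes a different route from the paper. The paper omits the proof, saying only that it is similar to Proposition~\ref{prop: L in Sn}. That means verifying the three conditions of Proposition~\ref{prop: matrix-valued Stieltjes transform}: holomorphy on $\C_+$, $\Im\cL(\BG)\succeq 0$ and $\Im(z\cL(\BG))\succeq 0$, and $-\jmath y\,\cL(\BG(\jmath y))\to\cL(\Bmu(\R_+))\succeq 0$. The representing measure then comes from that characterization theorem. This is exactly your fallback.

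Your primary argument instead pushes the measure forward, $\boldsymbol{\nu}=\cL\circ\Bmu$, and checks the definition of $\cS_M(\R_+)$ directly. This is more elementary. It needs no Nevanlinna-type characterization, and it works on all of $\C-\R_+$ at once rather than on $\C_+$ followed by analytic continuation. It also identifies the representing measure and its total mass $\cL(\Bmu(\R_+))$ explicitly.

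The paper's route is natural mainly because it matches Proposition~\ref{prop: L in Sn}. There the map $\BG\mapsto -z^{-1}(\BI_n+\BG)^{-1}$ is nonlinear, so no pushforward formula is available and the characterization theorem is genuinely needed. Both routes rest on the same fact: positivity of $\cL$ and $\widetilde{\cL}$. The paper's route additionally needs $\cL(\BX)^H=\cL(\BX^H)$ to commute $\cL$ with taking imaginary parts. Your expectation representation gives both immediately.
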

\textit{Proof:} The proof uses an argument similar to that of Proposition \ref{prop: L in Sn} and is omitted.
In addition, the following proposition provides a bound on the spectral norm of elements belonging to the class $\cS_n(\R_+)$.
\begin{proposition}
\label{prop: sp norm bound on elements in Sn}
    Let $\BG \in \cS_n(\R_+)$ and denote its underlying measure by $\Bmu$. Then, we have $
     \norm{\BG(z)}   \leq \frac{\norm{\Bmu(\R_+)}}{\dz}$ and $\norm{\Im \BG(z)} \leq \frac{|\Im(z)|\norm{\Bmu(\R_+)}}{\dz^2}$.
\end{proposition}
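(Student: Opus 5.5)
We work directly from the integral representation $\BG(z) = \int_{\R_+} \frac{\Bmu(\dd \lambda)}{\lambda - z}$. Both bounds will follow from one scalar inequality for quadratic forms, combined with the fact that for $z \in \C - \R_+$ and $\lambda \in \R_+$ we have $|\lambda - z| \geq \dist(z, \R_+) = \dz$.

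\textbf{Step 1: reduce to quadratic forms.} We first control $\norm{\BG(z)}$ through bilinear forms $\Bx^H \BG(z) \By$ with unit vectors $\Bx, \By \in \C^n$. Since $\Bmu(B) \succeq 0$ for every Borel set $B$, the map $B \mapsto \Bx^H \Bmu(B) \Bx$ is a finite non-negative scalar measure $\mu_{\Bx}$. The off-diagonal map $B \mapsto \Bx^H \Bmu(B) \By$ is a complex measure $\mu_{\Bx\By}$. Its total variation satisfies, via Cauchy--Schwarz for positive semidefinite matrices applied on every partition,
\begin{align}
|\mu_{\Bx\By}|(B) \leq \sqrt{\mu_{\Bx}(B)\,\mu_{\By}(B)}.
\end{align}

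\textbf{Step 2: bound the resolvent.} It follows that
\begin{align}
|\Bx^H \BG(z) \By| \leq \int_{\R_+} \frac{|\mu_{\Bx\By}|(\dd\lambda)}{|\lambda - z|} \leq \frac{|\mu_{\Bx\By}|(\R_+)}{\dz} \leq \frac{\sqrt{\mu_{\Bx}(\R_+)\mu_{\By}(\R_+)}}{\dz} \leq \frac{\norm{\Bmu(\R_+)}}{\dz}.
\end{align}
Taking the supremum over unit $\Bx, \By$ gives the first claim.

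A simpler route avoids the total-variation step. Because $\frac{1}{\lambda - z}$ is a scalar function, one can use the matrix-valued functional calculus bound $\norm{\int f \,\dd\Bmu} \leq \sup|f| \cdot \norm{\Bmu(\R_+)}$, valid for non-negative operator measures. This bound itself follows from the same Cauchy--Schwarz/partition argument, or from a Naimark dilation $\Bmu = \BV^H \BE \BV$ with $\BE$ a projection-valued measure and $\norm{\BV}^2 = \norm{\Bmu(\R_+)}$.

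\textbf{Step 3: the imaginary part.} We compute
\begin{align}
\Im \BG(z) = \int_{\R_+} \frac{\Im(z)}{|\lambda - z|^2} \Bmu(\dd\lambda).
\end{align}
This is a non-negative (when $\Im z > 0$) or non-positive matrix. Hence $\norm{\Im \BG(z)} = \sup_{\norm{\Bx} = 1} |\Bx^H \Im\BG(z) \Bx|$, and
\begin{align}
|\Bx^H \Im\BG(z) \Bx| \leq |\Im z| \int_{\R_+} \frac{\mu_{\Bx}(\dd\lambda)}{\dz^2} \leq \frac{|\Im z|\,\norm{\Bmu(\R_+)}}{\dz^2}.
\end{align}
This gives the second claim.

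\textbf{Main obstacle.} The only delicate point is the off-diagonal estimate in Step 1, i.e.\ justifying the total-variation Cauchy--Schwarz inequality for the complex measure $\mu_{\Bx\By}$. It can be avoided entirely by using the dilation argument instead.
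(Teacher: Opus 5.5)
Your argument is correct and takes the same approach as the paper, which simply invokes $|\lambda - z|\geq \dz$ for $\lambda\in\R_+$ together with a standard monotone class argument to carry this pointwise bound through the matrix-valued integral. Your Cauchy--Schwarz bound on the total variation of $\Bx^H\Bmu\By$ (or, alternatively, the Naimark dilation) is an explicit way of doing that measure-theoretic step, and your handling of $\Im\BG(z)$ via $\Im\frac{1}{\lambda-z}=\frac{\Im z}{|\lambda-z|^2}$ and Hermitian quadratic forms is exactly what is needed.
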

\textit{Proof:} The proof follows from the inequality $|\lambda - z| \geq \dz$ for $\lambda \geq 0$ together with a standard monotone class argument, and is therefore omitted. \qed
\subsection{Uniqueness}
We prove by contradiction. Assume that there exist two distinct solutions $(\BPhi, \widetilde{\BPhi})$ and $(\BPsi, \widetilde{\BPsi})$ solving \eqref{eq: DE systems} with $\BPhi, \BPsi \in \cS_N(\R_+)$ and $\widetilde{\BPhi}, \widetilde{\BPsi} \in \cS_M(\R_+)$. Define $\cD(z) = \max\{ \lnorm{ \BPhi(z) - \BPsi(z)}, \lnorm{\widetilde{\BPhi}(z) - \widetilde{\BPsi}(z)}\}$. Subtracting the two equations, applying the resolvent identity and Proposition \ref{prop: sp norm bound on elements in Sn}, we obtain, for $z\in \C_+$, 
\begin{align}
    \norm{\BPhi - \BPsi} = |z| \norm{\BPhi \left( \widetilde{\cL}(\widetilde{\BPhi} - \widetilde{\BPsi}) \right) \BPsi} \leq \frac{|z|\lnorm{\widetilde{\cL}}_{op}}{\dz^2} \norm{\widetilde{\BPhi} - \widetilde{\BPsi}}. \label{eq: norm on diff Phi Psi}
\end{align}
Similarly, we have $\lnorm{\widetilde{\BPhi} - \widetilde{\BPsi}} \leq \frac{|z| \norm{\cL}_{op}}{\dz^2} \lnorm{\BPhi - \BPsi}$. Hence, we obtain 
\begin{align}
    \cD(z) \leq \cE(z) \cD(z), ~~ \text{ where } \cE(z) = {|z|(\lnorm{\widetilde{\cL}}_{op} + \lnorm{\cL}_{op})}/{\dz^2}.
\end{align}
Therefore, on the domain $\mathbb{D} \subset \C_+$ where $|z| / \dz \leq 1$ and $\dz \geq 2(\lnorm{\widetilde{\cL}}_{op} + \lnorm{\cL}_{op})$, we have $\cE(z) \leq 0.5$ and $\cD(z) = 0$. Since every entry of $\BPhi$, $\BPsi$, $\widetilde{\BPhi}$, and $\widetilde{\BPsi}$ is holomorphic on $\C_+$, it follows that $(\BPhi, \widetilde{\BPhi}) = (\BPsi, \widetilde{\BPsi})$ for all $z \in \C - \R_+$. This completes the proof of uniqueness.
\subsection{Existence}
To prove existence, we construct the following iterative scheme.  Let $\BT^{(0)} = -\frac{1}{z} \BI_N \in \cS_N(\R_+)$ and $\widetilde{\BT}^{(0)} = -\frac{1}{z}\BI_M \in \cS_M(\R_+)$ be the initial iterates. For $t \geq 1$, let
\begin{align}
    \BT^{(t)}(z) &= -\frac{1}{z} \Big(\BI_N + \widetilde{\cL}(\widetilde{\BT}^{(t - 1)}(z)) \Big)^{-1}, \label{eq: iter for Tt}\\
    \widetilde{\BT}^{(t)}(z) &= -\frac{1}{z}\Big(\BI_M + \cL(\BT^{(t-1)}(z)) \Big)^{-1}. \label{eq: iter for wTt}
\end{align}
By Propositions \ref{prop: L in Sn} and \ref{prop: cLG and wcLG in Sn}, we have $\BT^{(t)} \in \cS_N(\R_+)$ and $\widetilde{\BT}^{(t)} \in \cS_M(\R_+)$ with underlying measures satisfying $\bm{\mu}_{\BT, t}(\R_+) = \BI_N$ and $\bm{\mu}_{\widetilde{\BT}, t}(\R_+) = \BI_M$ for all non-negative integers $t$. Denote $\cD^{(t)}(z) = \max\{ \lnorm{\BT^{(t)}(z) - \BT^{(t-1)}(z)},  \lnorm{\widetilde{\BT}^{(t)}(z) - \widetilde{\BT}^{(t-1)}(z) }  \}$. By a computation similar to that in \eqref{eq: norm on diff Phi Psi}, we have $\cD^{(t)}(z) \leq \cE(z) \cD^{(t-1)}(z)$. As a result, $\{ \BT^{(t)}(z) \}_{t \geq 0}$ and $\{ \widetilde{\BT}^{(t)}(z)\}_{t \geq 0}$ are Cauchy sequences on $\mathbb{D}$ with respect to the spectral norm. Denote their corresponding limits by $\BT(z)$ and $\widetilde{\BT}(z)$.  On the other hand, by Proposition \ref{prop: sp norm bound on elements in Sn}, the entries $[\BT^{(t)}]_{i, j}$ and $[\widetilde{\BT}^{(t)}]_{k, l}$ are bounded on every compact subset of $\C - \R_+$. Therefore, for each set of indices $i, j, k$, and $l$, the families $\{ [\BT^{(t)}]_{i, j} \}_{t \geq 0}$ and $\{ [\widetilde{\BT}^{(t)}]_{k, l} \}_{t \geq 0}$ are normal families. By the normal family theorem \cite{rudin1987real}, every subsequence of $\{ [\BT^{(t)}]_{i, j} \}_{t \geq 0}$ ($\{ [\widetilde{\BT}^{(t)}]_{k, l} \}_{t \geq 0}$) admits a further subsequence that converges to a holomorphic function. Because these limits agree with $[\BT]_{i, j}$ ($[\widetilde{\BT}]_{k, l}$) on $\mathbb{D}$, it follows that 
\begin{align}
    \lim_{t \to \infty }\BT^{(t)}(z) =  \BT(z), ~~  \lim_{t \to \infty} \widetilde{\BT}^{(t)}(z) = \widetilde{\BT}(z), ~~ z \in \C - \R_+,
\end{align}
where $\BT$ and $\widetilde{\BT}$ are holomorphic on $\C- \R_+$.
\par
Next, we prove that $\BT \in \cS_N(\R_+)$ and $\widetilde{\BT} \in \cS_M(\R_+)$ by verifying the 3 properties of Proposition \ref{prop: matrix-valued Stieltjes transform}. Property 1 has already been established. Property 2 follows from $\Im(\BT^{(t)}(z)), \Im(z\BT^{(t)}(z)) \succeq \mathbf{0}_{N \times N}$ and $\Im(\widetilde{\BT}^{(t)}(z)), \Im(z\widetilde{\BT}^{(t)}(z)) \succeq \mathbf{0}_{M \times M}$, which hold for every $t \geq 0$ and $z \in \C_+$. Passing to the limit preserves the inequalities. By an argument similar to that in \eqref{eq: analytical of L on up half C}, we have that matrices $-\frac{1}{z} (\BI_N + \widetilde{\cL}(\widetilde{\BT}(z)))^{-1}$ and $-\frac{1}{z} (\BI_M + {\cL}({\BT}(z)))^{-1}$ are holomorphic on $\C_+$. Hence, 
\begin{align*}
    \BT + \frac{1}{z} (\BI_N + \widetilde{\cL}(\widetilde{\BT}))^{-1} &= \BT - \BT^{(t)} + \frac{1}{z} (\BI_N + \widetilde{\cL}(\widetilde{\BT}))^{-1} - \frac{1}{z} (\BI_N + \widetilde{\cL}(\widetilde{\BT}^{(t-1)}))^{-1} \\
    & = \BT - \BT^{(t)} + (\BI_N + \widetilde{\cL}(\widetilde{\BT}))^{-1} \widetilde{\cL}(\widetilde{\BT} - \widetilde{\BT}^{(t-1)}) \BT^{(t)}, ~~ z \in \C_+. \numberthis
\end{align*}
By taking $\lim_{t \to \infty}$ on both sides of the above equation and using the continuity of $\widetilde{\cL}$ (e.g., $\lim_n \widetilde{\cL} X_n  =  \widetilde{\cL} \lim_n X_n$), we obtain $\BT(z) = -\frac{1}{z} (\BI_N + \widetilde{\cL}\widetilde{\BT}(z))^{-1}$ for $z \in \C_+$. A similar argument on $\widetilde{\BT}(z)$ yields that \eqref{eq: DE systems} holds for $z \in \C_+$. Since $\lnorm{\BT^{(t)}}, \lnorm{\widetilde{\BT}^{(t)}} \leq 1 / \Im(z)$ for $ t \geq 1$ and $z \in \C_+$, we have $\lnorm{\BT}, \lnorm{\widetilde{\BT}} \leq 1 / \Im(z)$. Therefore, 
\begin{align}
\label{eq: pass y limit to T}
    \lim_{y \to \infty }\BI_N + \jmath y \BT(\jmath y) = \lim_{y \to \infty } \widetilde{\cL}(\widetilde{\BT}(\jmath y)) \left(\BI_N + \widetilde{\cL}(\widetilde{\BT}(\jmath y)) \right)^{-1} = 0.
\end{align}
Similarly, we have $\lim_{y \to \infty } -\jmath y \widetilde{\BT}(\jmath y) = \BI_M$, which verifies property 3 in Proposition \ref{prop: matrix-valued Stieltjes transform}. Therefore, we have completed the proof of Proposition \ref{prop: exist unique DE}. \qed

\section{Proof of Theorem \ref{thm: de for the resolvent}}
\label{app: proof of thm de for the resolvent}
The proof of Theorem \ref{thm: de for the resolvent} consists of two parts: a stochastic step and a deterministic step. In the stochastic step, we employ Gaussian tools, i.e., the integration by parts formula \cite{Hachem2008ANewApproach} and the Poincaré-Nash inequality \cite{Hachem2008ANewApproach}, to establish the convergence of the resolvent $\BQ(z)$ to an intermediate quantity $\BT_\alpha(z)$, which is explicitly defined by
\begin{align}
    \widetilde{\BT}_\alpha(z) = -\frac{1}{z}\left(\BI_M + \cL[\E \BQ(z)] \right)^{-1}, ~~ \BT_\alpha(z) = -\frac{1}{z} \left(\BI_N +\widetilde{\mathcal{L}}[\widetilde{\BT}_\alpha (z)] \right)^{-1}.
\end{align}
This intermediate object is defined via the expectation of the resolvent but is more analytically tractable. In the deterministic step, we then show that $\BT_\alpha(z)$ converges to the solution $\BT(z)$ of the fixed-point equation. 
Recall $\dz = \dist(z, \R_+)$. For these two steps, the following spectral norm bounds on $\BT_\alpha(z)$ and $\widetilde{\BT}_\alpha(z)$ are useful, which is a direct application of Proposition \ref{prop: sp norm bound on elements in Sn}
\begin{align}
    \norm{\BT_\alpha(z)} \leq \frac{1}{\dz}, ~~ \norm{\widetilde{\BT}_\alpha(z)} \leq \frac{1}{\dz}. \label{eq: sp norm bound on T Talpha}
\end{align}
For simplicity, we will omit the variable $z$  if there is no confusion, and restore it whenever necessary. 
\subsection{Stochastic Step}
By the integration by parts formula and the derivative identity $\partial (\BX^{-1}) = - \BX^{-1} (\partial \BX) \BX^{-1}$, we have 
\begin{align*}
    \E Q_{i,k} H_{k,l} H_{j, s}^* &= \sum_{a, b} \E [H_{k, l} H_{a, b}^*] \E \frac{\partial Q_{i,k} H_{j, s}^*}{\partial H_{a, b}^*} = \sum_{a, b} \frac{[\BC_{lb}]_{k, a}}{M} \E\left[ -[\BQ\BH]_{i, b} Q_{a, k} H_{j, s}^* + Q_{i, k} \delta(a-j)\delta(b-s)\right] \\
    & = \sum_b \E \left[ -\frac{1}{M}[\BQ \BH]_{i, b}  [\BC_{lb}\BQ]_{k,k} H_{j,s}^*\right] + \frac{1}{M} \E Q_{i, k} [\BC_{ls}]_{k, j}. \numberthis \label{eq: Qik_Hkl}
\end{align*} 
By summing over the subscript $k$ in \eqref{eq: Qik_Hkl}, we can get 
\begin{align*}
    \E [\BQ\BH]_{i,l} H_{j, s}^* = \E \left[ -[\BQ \BH \cL(\BQ)]_{i, l}   H_{j,s}^*\right] + \frac{1}{M} \E  [\BQ\BC_{ls}]_{i, j}. \numberthis \label{eq: QH_il}
\end{align*} 
By writing $\BQ = \E \BQ +  \mathring \BQ$, we immediately get $\E [\BQ\BH (\BI_M + \cL[ \E \BQ ])]_{i,l} H_{j, s}^* = \E [ -[\BQ \BH \cL(  \mathring{\BQ})]_{i, l}   H_{j,s}^*] + \frac{1}{M} \E  [\BQ\BC_{ls}]_{i, j}$.  Multiplying both sides by $-z[\widetilde{\BT}_\alpha ]_{l, s}$ and summing over the subscripts $l$ and $s$ yield
\begin{align*}
    \E [\BQ\BH\BH^H]_{i, j} = \E \left[ z\left[\BQ \BH \cL(\mathring \BQ) \widetilde{\BT}_\alpha\BH^H\right]_{i, j} \right] -z  \E  \left[\BQ \widetilde{\cL}(\widetilde{\BT}_\alpha) \right]_{i, j}. \numberthis \label{eq: QHH_H}
\end{align*} 
The LHS of \eqref{eq: QHH_H} thus can be handled by the resolvent identity. In particular, by $\BQ\BH\BH^H = z\BQ + \BI_N$, we have
\begin{align}
    \E \left[\BQ \BT_\alpha^{-1}\right]_{i, j} = \delta(i-j) -z\E \left[\BQ \BH \cL(\mathring \BQ) \widetilde{\BT}_\alpha\BH^H\right]_{i, j}. \label{eq: QT}
\end{align}
Multiplying both sides of the preceding identity by $[\BT_\alpha\BA]_{j, i}$ and summing over the subscripts $i$ and $j$, the following is obtained
\begin{align}
     \Tr (\E\BQ - \BT_\alpha) \BA =  -z\E 
    \Tr \left[ \BQ \BH \cL(\mathring \BQ) \widetilde{\BT}_\alpha\BH^H \BT_\alpha \BA \right] = \varepsilon. \label{eq: QA}
\end{align}
Thus, we have established the relationship between $\E \BQ$ and $\BT_\alpha$ in the linear form. Ideally, the term $\varepsilon$ should be an infinitesimal quantity, and in the following, we will evaluate its order. We observe that $\varepsilon$ consists of a sum of covariance terms like $\mathrm{Cov}([\cL(\BQ)]_{i, j}, X_{i, j})$. Therefore, it is crucial to bound the moments of the trace forms for the resolvent, which is provided in the following lemma.
\begin{lemma} [Central Moment Control for the Resolvent]
\label{lem: variance control}
Assume Assumptions \ref{assumpt: A1} and \ref{assumpt: A2} hold. Let $\BA \in \C^{N \times N}$ and $\BB \in \C^{M \times M}$ be two deterministic matrices. Then, for any given integer $p > 0$ and $z \in \mathbb{C} - \R_+$, we have
\begin{align}
    &\E \abs{\Tr \BA \BQ(z) - \E \Tr \BA \BQ(z) }^p  = \mathcal{O}_p\left( \frac{\norm{\BA}_F^p}{N^{\frac{p}{2}} \mathsf{d}_z^{2p}}\right). \label{eq: moment bound on trace AQ} \\
    &\E \abs{ \Tr \BA \BQ(z) \BH \BB \BH^H - \E\Tr \BA \BQ(z) \BH \BB \BH^H }^p = \mathcal{O}_p\left( \frac{\min\left\{\norm{\BA} \norm{\BB}_F,  \norm{\BA}_F \norm{\BB}\right\}^p}{N^{\frac{p}{2}}} \left( \frac{1}{\dz^p} + \frac{1}{\dz^{2p}} \right) \right). \label{eq: moment bound on trace AQHbHH}
\end{align}
\end{lemma}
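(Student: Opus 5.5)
The plan is to use the Poincaré--Nash inequality to get the variance bound, and then the standard bootstrapping over moments to reach general $p$. Write $\Bh = \BC^{1/2}\Bz/\sqrt{M}$ with $\Bz\sim\mathcal{CN}(0,\BI_{NM})$. Then $f(\Bz)=\Tr\BA\BQ$ is a smooth function of a standard complex Gaussian vector. The Poincaré--Nash inequality gives
\[
\Var f \leq \E\bigl[\norm{\nabla_{\Bz} f}^2+\norm{\nabla_{\Bz^*} f}^2\bigr].
\]
Since $\Bh$ is a linear image of $\Bz$, we have $\nabla_{\Bz} f = M^{-1/2}\BC^{1/2}\nabla_{\Bh} f$, so $\norm{\nabla_{\Bz}f}^2\leq \norm{\BC}M^{-1}\norm{\nabla_{\Bh}f}^2$. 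Assumption \ref{assumpt: A2} therefore reduces everything to the i.i.d.-type gradient computation.

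For the first bound, $\partial \Tr\BA\BQ/\partial H^*_{ab} = -[\BQ\BA\BQ\BH]_{a,b}$. Hence
\[
\norm{\nabla_{\Bh^*}f}^2=\norm{\BQ\BA\BQ\BH}_F^2\leq \norm{\BA}_F^2\norm{\BQ}^4\norm{\BH}^2 ,
\]
and similarly for the holomorphic derivative. Two further facts are needed. First, $\norm{\BQ}\leq 1/\dz$. Second, $\E\norm{\BH}^{q}$ is bounded for every $q$ by Theorem \ref{thm: bound norm}. Together with Assumption \ref{assumpt: A1} these give $\Var(\Tr\BA\BQ)=\mathcal{O}(\norm{\BA}_F^2/(N\dz^4))$, which is the case $p=2$. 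A sharper route bounds $\norm{\BQ\BH}\leq \norm{\BQ}^{1/2}\norm{\BQ\BH\BH^H}^{1/2}$, but it is not needed for this exponent.

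For general $p$ I would use the $L^p$ Gaussian Poincaré (Pastur/Hachem-type) inequality, or equivalently Gaussian concentration for Lipschitz functions. The map $\Bz\mapsto \Tr\BA\BQ$ restricted to $\{\norm{\BH}\leq K\}$ is Lipschitz with constant $\mathcal{O}(\norm{\BA}_F/(\sqrt{M}\dz^2))$. The difficulty is that the function is globally not Lipschitz. The cleanest fix is the inequality
\[
\E|f-\E f|^{p}\leq K_p\,\E\norm{\nabla f}^{p},
\]
valid for Gaussian vectors (it follows from the Maurey--Pisier interpolation representation). Applying the gradient bound above and using Hölder together with the moment bound of Theorem \ref{thm: bound norm} for $\E\norm{\BH}^p$ then gives \eqref{eq: moment bound on trace AQ}, with the factor $N^{-p/2}$ coming from $M^{-p/2}$ and Assumption \ref{assumpt: A1}.

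For \eqref{eq: moment bound on trace AQHbHH}, let $g=\Tr\BA\BQ\BH\BB\BH^H$. Differentiating in $H^*_{ab}$ gives two terms,
\[
[\BA\BQ\BH\BB]_{a,b}\quad\text{and}\quad -[\BQ\BH\BB\BH^H\BA\BQ\BH]_{a,b},
\]
and the derivative in $H_{ab}$ gives the analogous terms. Each Frobenius norm is bounded by placing the Frobenius norm on whichever of $\BA$ or $\BB$ is chosen, which yields the $\min\{\cdot\}$ in the statement. The remaining factors are controlled by $\norm{\BQ}\leq\dz^{-1}$ and polynomial moments of $\norm{\BH}$. The first term gives the $\dz^{-p}$ contribution and the second the $\dz^{-2p}$ contribution. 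The powers of $\norm{\BH}$ are again harmless by Theorem \ref{thm: bound norm}.

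The main obstacle is passing from $p=2$ to arbitrary $p$ without losing uniformity. The resolvent depends on $\norm{\BH}$, which is unbounded, so I would rely on the Gaussian $L^p$-Poincaré inequality and the all-order moment bounds of Theorem \ref{thm: bound norm} rather than on a truncation argument. The correlation in $\BC$ enters only through the factor $\norm{\BC}^{p/2}$.
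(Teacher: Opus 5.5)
Your proposal is correct. For $p=2$ it is the paper's argument: Poincar\'e--Nash, $\norm{\BQ}\le\dz^{-1}$, and moments of $\norm{\BH}$ from Theorem~\ref{thm: bound norm}. For general $p$ you take a different route.

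The paper uses only the $L^2$ Poincar\'e--Nash inequality and bootstraps by induction on $k$. It applies the variance bound to $\mathring{X}^k$ and writes $\E|\mathring X|^{2k}=\Var(\mathring X^k)+|\E\mathring X^k|^2$. It then splits on $\{\norm{\BH}>K_1'\}$ using the exponential tail of $\norm{\BH}$ and H\"older, and closes the resulting self-bounding inequality by a contradiction argument.

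You instead invoke Pisier's $L^p$ Gaussian Poincar\'e inequality $\E|f-\E f|^p\le K_p\E\norm{\nabla f}^p$. This buys several things:
\begin{itemize}
\item it handles every $p$ in one step;
\item it needs only moment bounds on $\norm{\BH}$, not the tail estimate;
\item it gives an explicit $K_p=\mathcal{O}((Cp)^{p/2})$;
\item it covers \eqref{eq: moment bound on trace AQHbHH} directly, a case the paper omits as analogous. Your two derivative terms, with the Frobenius norm placed on $\BA$ or $\BB$, give exactly the $\min\{\cdot\}$ and the $\dz^{-p}+\dz^{-2p}$ factors.
\end{itemize}
The paper's route buys self-containment within the standard Gaussian toolbox, at the cost of the induction.

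Two small points. First, apply Pisier's inequality in real coordinates $(\Re\Bz,\Im\Bz)\sim\cN(0,\tfrac12\BI)$ and to $\Re f$ and $\Im f$ separately; only constants change. Second, your remark that $f$ is not globally Lipschitz is inaccurate, since $\norm{\BQ\BH}^2=\max_i\lambda_i/|\lambda_i-z|^2\le\dz^{-1}+|z|\dz^{-2}$. Plain Gaussian concentration would therefore also work, but it introduces a $|z|$-dependence absent from the stated bound. Going through $\E\norm{\BH}^p$ is the better choice here.
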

\textit{Proof:} The proof of Lemma \ref{lem: variance control} is given in Appendix \ref{app: proof of lemma variance control}. \qed
\par
Let $ \{ \widetilde{\BC}_1, \dots, \widetilde{\BC}_\ell \} $ be an orthonormal basis of the vector space $\mathrm{Span} \{\BC_{ij} \}_{i,j=1}^M $ in the matrix space $ \mathbb{C}^{N \times N} $, equipped with the inner product $ \langle \BX, \BY \rangle = \Tr(\BX\BY^H) $. Then, the matrix $ \BC $ can be expressed by $ \BC = \sum_{j=1}^\ell \BD_j \otimes \widetilde{\BC}_j $. Thus, the map $\cL$ can be represented as 
$
    \cL(\BX) = \sum_{j=1}^\ell \BD_j^\top \frac{\Tr \widetilde{\BC}_j \BX}{M}
$.
As a result, we have 
\begin{align*}
    &\E \Tr \left[ \BQ \BH \cL(\mathring{\BQ}) \widetilde{\BT}_\alpha\BH^H \BT_\alpha \BA \right] = \sum_{j=1}^\ell \E \left[ \frac{1}{M} {\Tr (\widetilde{\BC}_j \mathring{\BQ}})  \Tr (\BQ\BH \BD_j^\top \widetilde{\BT}_\alpha\BH^H \BT_\alpha \BA) \right]. \numberthis \label{eq: var control}
\end{align*}
By Lemma \ref{lem: variance control}, inequality \eqref{eq: sp norm bound on T Talpha}, and the Cauchy-Schwarz inequality $\E|xy| \leq \E^{\frac{1}{2}}|x|^2 \E^{\frac{1}{2}}|y|^2$, we have 
\begin{align*}
    |\varepsilon| &\leq \frac{|z|}{M} \sum_{j=1}^\ell \Var^{\frac{1}{2}}[\Tr \widetilde{\BC}_j \BQ ]\Var^{\frac{1}{2}}[\Tr  (\BT_\alpha \BA \BQ\BH\BD_j^\top \widetilde{\BT}_\alpha\BH^H)] \\
    &\leq  \cO(1) \frac{|z|}{M} \sum_{j=1}^\ell \frac{\sqrt{\Tr (\widetilde{\BC}_j \widetilde{\BC}_j^H) \norm{\BA}^2\Tr( \widetilde{\BT}_\alpha^H \BD_j^* \BD_j^\top  \widetilde{\BT}_\alpha) }}{N} \left( \frac{1}{\dz^4} + \frac{1}{\dz^5} \right)  \\
    &\overset{(a)}{\leq} \cO(1) \frac{|z|\sqrt{\ell \sum_{j=1}^\ell \Tr \BD_j^* \BD_j^\top}}{MN} \left(  \frac{1}{\dz^5} + \frac{1}{\dz^6} \right) \overset{(b)}{=}\mathcal{O} \left( \frac{\sqrt{\ell}}{N} \left( \frac{|z|}{\dz^5} + \frac{|z|}{\dz^6} \right) \right) , \numberthis \label{eq: pre bound ell}
\end{align*}
where in step $(a)$ we use the inequalities $|\Tr \BX \BY| \leq  \norm{\BX} \Tr \BY$ for Hermitian non-negative matrix $\BY$, and $\sum_{j=1}^n \sqrt{a_j} \leq \sqrt{n(\sum_{j=1}^n a_j)}$. Step $(b)$ follows by the inequality $\sum_{j=1}^\ell \Tr \BD_j^* \BD_j^\top = \sum_{a, b}\sum_{j} |[\BD_j]_{a, b}|^2 = \sum_{a, b} \Tr \BC_{ab} \BC_{ab}^H = \Tr \BC^2 \leq NM \norm{\BC}^2$.
\par
The error bound \eqref{eq: pre bound ell} implies that when $\ell \ll N$, i.e., the set of correlation matrices $\{\BC_{ij}\}_{i, j=1}^M$ is low-rank, the RHS of \eqref{eq: QA} becomes negligible. However, in the worst-case scenario $\ell \approx N^2$, $\varepsilon$ becomes an $\cO(1)$ term. A more refined analysis is therefore needed. To this end, we apply Gaussian tools to the error term $\E \Tr [ \BQ \BH \cL(\mathring{\BQ}) \widetilde{\BT}_\alpha\BH^H \BT_\alpha \BA ]$ to derive another bound.  By applying the integration by parts formula to the term  $\E Q_{i,k}H_{k, l}[\cL(\mathring\BQ)]_{s, m} H_{j, q}^*$, we obtain
\begin{align*}
    &\E Q_{i,k}H_{k, l} [\cL(\mathring\BQ)]_{s, m}  H_{j, q}^* = \sum_{a, b} \E[H_{k, l} H^*_{a, b}] \E \frac{\partial  (Q_{i,k}[\cL(\mathring\BQ)]_{s, m}  H_{j, q}^*)}{\partial H_{a, b}^*} \\
    &= \sum_{a, b} \frac{[\BC_{lb}]_{k, a}}{M} \E \left[ -[\BQ \BH]_{i, b} Q_{a,k}  [\cL(\mathring\BQ)]_{s, m} H_{j, q}^* - \frac{1}{M}Q_{i, k} [\BQ \BC_{ms}\BQ\BH]_{a, b} H_{j, q}^* + Q_{i,k} [\cL(\mathring\BQ)]_{s, m}  \delta(a-j)\delta(b-q)\right]. \numberthis
\end{align*}
By summing over the subscript $k$, we have 
\begin{align*}
    &\E [\BQ\BH]_{i,l} [\cL(\mathring\BQ)]_{s, m}  H_{j, q}^* = \E \left[ -[\BQ \BH \cL(\BQ)]_{i, l} [\cL(\mathring\BQ)]_{s, m} H_{j, q}^* - \sum_{b=1}^M \frac{1}{M^2} [\BQ\BC_{lb}\BQ \BC_{ms}\BQ\BH]_{i, b} H_{j, q}^* + \frac{[\BQ \BC_{lq}]_{i,j}[\cL(\mathring\BQ)]_{s, m} }{M}\right]. \numberthis
\end{align*}
Similar to \eqref{eq: QHH_H}, by writing $\cL(\BQ) = \cL(\E \BQ) + \cL(\mathring{\BQ})$, multiplying both sides by $-z[\widetilde{\BT}_\alpha]_{l, s}  [\widetilde{\BT}_\alpha]_{m, q}[\BT_\alpha \BA]_{j, i}$, and summing over the subscripts $m, q, l, s, i,$ and $j$, we obtain
\begin{align*}
    &\E \Tr [\BQ\BH \cL(\mathring{\BQ}) \widetilde{\BT}_\alpha \BH^H \BT_\alpha \BA] = z\E \left[ \Tr [\BQ \BH \cL(\mathring{\BQ}) \widetilde{\BT}_\alpha \cL(\mathring{\BQ}) \widetilde{\BT}_\alpha \BH^H \BT_\alpha \BA] \right]\\
    & +\sum_{s, l, b, m}\frac{z[\widetilde{\BT}_\alpha]_{l, s}}{M^2}\E \left[  [\widetilde{\BT}_\alpha \BH^H\BT_\alpha \BA\BQ\BC_{lb}\BQ \BC_{ms}\BQ\BH]_{m, b} \right] \\
    &-z \E \left[ \Tr \left[ \cL(\BT_\alpha \BA \BQ) \widetilde{\BT}_\alpha \cL(\mathring{\BQ}) \widetilde{\BT}_\alpha \right] \right] := \varepsilon_1 + \varepsilon_2 + \varepsilon_3. \numberthis
\end{align*}
In the following, we evaluate the error terms $\varepsilon_j$, $j=1,2,3$ separately. 
\par
Using the Cauchy–Schwarz inequalities $|\Tr \BX\BY| \leq (\Tr \BX \BX^H \Tr \BY\BY^H)^{\frac{1}{2}}$ and $\E |xy| \leq \E^{\frac{1}{2}} |x|^2 \E^\frac{1}{2} |y|^2 $, we have 
\begin{align*}
    |\varepsilon_1| &\leq |z| \E^{\frac{1}{2}} \left[ \Tr \cL(\mathring{\BQ})[\cL(\mathring{\BQ})]^H \right] \E^{\frac{1}{2}} \left[\Tr \widetilde{\BT}_\alpha \cL(\mathring{\BQ}) \widetilde{\BT}_\alpha \BH^H \BT_\alpha \BA\BQ\BH \BH^H \BQ^H \BA^H\BT_\alpha^H \BH \widetilde{\BT}_\alpha^H [\cL(\mathring{\BQ})]^H \widetilde{\BT}_\alpha^H\right] \\
    & \leq |z| \norm{\widetilde{\BT}_\alpha}^2  \norm{\BT_\alpha} \norm{\BA} \E^{\frac{1}{2}} \left[ \Tr \cL(\mathring{\BQ}) [\cL(\mathring{\BQ})]^H \right]\E^{\frac{1}{2}} \left[ \Tr \cL(\mathring{\BQ})[\cL(\mathring{\BQ})]^H \norm{\BH}^4\norm{\BQ}^2\right] \\
    &\overset{(a)}{=} \cO \left(\frac{|z| }{N^{\frac{1}{2}}\mathsf{d}_z^6} \right)\E^{\frac{1}{2}}\left[ \Tr \cL(\mathring{\BQ})[\cL(\mathring{\BQ})]^H \norm{\BH}^4\right], \numberthis \label{eq: bound on vepsilon1}
\end{align*}
where step $(a)$ follows from \eqref{eq: sp norm bound on T Talpha}, Lemma \ref{lem: variance control}, and the following estimate
\begin{align*}
    \E \left[ \Tr \cL(\mathring{\BQ})[\cL(\mathring{\BQ})]^H \right] = \sum_{i, j=1}^M \frac{1}{M^2}\Var(\Tr \BC_{ji} \BQ) = \cO\left( \frac{\sum_{i, j} \Tr \BC_{ji} \BC_{ji}^H}{M^2N \mathsf{d}_z^4} \right) =  \cO \left( \frac{ \Tr  \BC^2}{M^2N \mathsf{d}_z^4} \right) = \cO\left(\frac{1}{N\mathsf{d}_z^4}\right). \numberthis \label{eq: bound on TrLaLaH}
\end{align*}
As a result, to bound the error term 
$\varepsilon_1$, it is essential to establish a bound for $\E \Tr \cL(\mathring{\BQ}) [\cL(\mathring{\BQ})]^H \lnorm{\BH}^4$. By the Cauchy–Schwarz inequality, we have
\begin{align*}
    \E \left[ \Tr \cL(\mathring{\BQ}) [\cL(\mathring{\BQ})]^H \norm{\BH}^4 \right] \leq \E^{\frac{1}{2}} \left[ \Tr\cL(\mathring{\BQ}) [\cL(\mathring{\BQ})]^H \right]^2 \E^{\frac{1}{2}} \left[ \norm{\BH}^8 \right]. \numberthis \label{eq: Tr LQLQH4}
\end{align*}
Using Lemma \ref{lem: variance control}  together with another application of the Cauchy–Schwarz inequality, we have 
\begin{align*}
\E \left[ \Tr \cL(\mathring{\BQ})[\cL(\mathring{\BQ})]^H  \right]^2 &= \sum_{i, j, a, b = 1}^M \E \abs{[\cL(\mathring{\BQ})]_{i,j}[\cL(\mathring{\BQ})]_{a, b}}^2 \leq \sum_{i, j, a, b = 1}^M \E^{\frac{1}{2}} \left[\abs{[\cL(\mathring{\BQ})]_{i, j}}^4 \right]\E^{\frac{1}{2}} \left[\abs{[\cL(\mathring{\BQ})]_{a, b}}^4 \right] \\
&\leq  \frac{\cO \left(1\right)}{M^4}\sum_{i, j, a, b = 1}^M \frac{\Tr \BC_{ji}\BC_{ji}^H}{N \mathsf{d}_z^4} \frac{\Tr \BC_{ba}\BC_{ba}^H}{N \mathsf{d}_z^4}   =\cO \left(\frac{1}{M^4}\left( \frac{\Tr \BC^2}{N \mathsf{d}_z^4} \right)^2 \right) = \cO\left( \frac{1}{N^2 \mathsf{d}_z^8}\right). \numberthis \label{eq: Tr LQLQ2}
\end{align*}
Combining \eqref{eq: Tr LQLQ2}, \eqref{eq: Tr LQLQH4} and the norm bound in Theorem \ref{thm: bound norm}, we get 
\begin{align}
    |\varepsilon_1| =  \cO\left(\frac{|z|}{N \dz^8}\right).
\end{align}
\par
Next, we evaluate the order of $\varepsilon_2$. For a clearer presentation, we first introduce some notation.
 For an $NM$-by-$NM$ complex matrix $\BY$, define  $\BY^\Gamma$ as its block-wise transpose with $\BY^{\Gamma}[\Bvarsigma_i, \Bvarsigma_j] = \BY[\Bvarsigma_j, \Bvarsigma_i]$, for all $j, i \in \{1, 2, \ldots, M\}$.
Then, it can be verified that $\sum_{l, s} \BC_{lb} \BQ \BC_{ms} [\widetilde{\BT}_\alpha]_{l,s} = (\BC^\Gamma (\widetilde{\BT}_\alpha \otimes \BQ) \BC^\Gamma) [\Bvarsigma_b, \Bvarsigma_m]$. Hence, $|\varepsilon_2|$ can be bounded as 
\begin{align*}
    \abs{\varepsilon_2} &= \abs{\sum_{b, m}\frac{z}{M^2}\E \left[  [\widetilde{\BT}_\alpha \BH^H\BT_\alpha \BA\BQ(\BC^\Gamma (\widetilde{\BT}_\alpha \otimes \BQ) \BC^\Gamma) [\Bvarsigma_b, \Bvarsigma_m]\BQ\BH]_{m, b} \right]} \\
    & = \abs{ \frac{z}{M^2} \E \left[ \Bv^H_{Q, 1}(\BC^\Gamma (\widetilde{\BT}_\alpha \otimes \BQ) \BC^\Gamma)^\Gamma \Bv_{Q, 2} \right] } \leq \frac{|z|}{M^2} \E \left[ \norm{\Bv_{Q, 1}} \norm{\Bv_{Q, 2}} \norm{(\BC^\Gamma (\widetilde{\BT}_\alpha \otimes \BQ) \BC^\Gamma)^\Gamma} \right] \\
    & \leq \frac{|z|N \norm{\BA}}{M^2 \dz^4} \E \left[ \norm{\BH}^2 \norm{(\BC^\Gamma (\widetilde{\BT}_\alpha \otimes \BQ) \BC^\Gamma)^\Gamma} \right], \numberthis \label{eq: epsilon 2 bound}
\end{align*}
where $\Bv_{Q, 1} = \Vecm(\BQ^H\BA^H\BT_\alpha^H \BH \widetilde{\BT}_\alpha^H)$ and $\Bv_{Q, 2} = \Vecm(\BQ\BH)$. To bound $\varepsilon_2$, a crucial step is to bound $\lnorm{(\BC^\Gamma (\widetilde{\BT}_\alpha \otimes \BQ) \BC^\Gamma)^\Gamma}$. To this end, we define $\BXi_{C, Q} = \BC^\Gamma (\widetilde{\BT}_\alpha \otimes \BQ) \BC^\Gamma$. Let $\BU \in \U(M)$ be a given unitary matrix. Then, the matrix $\BXi_{C, Q}$ can be written as 
\begin{align*}
    \BXi_{C, Q} &= \BC^\Gamma (\BU \otimes \BI_N)  (\BU^H \otimes \BI_N) (\widetilde{\BT}_\alpha \otimes \BQ) (\BU \otimes \BI_N)  (\BU^H \otimes \BI_N) \BC^\Gamma \\
    &= \BC^\Gamma (\BU \otimes \BI_N)   (\BU^H\widetilde{\BT}_\alpha \BU \otimes \BQ)  (\BU^H \otimes \BI_N) \BC^\Gamma. \numberthis
\end{align*}
Using the identities $(\BV \otimes \BI_N) \BC^\Gamma  = [\BC (\BV^\top \otimes \BI_N)]^\Gamma$ and $\BC^\Gamma (\BV \otimes \BI_N) = [(\BV^\top \otimes \BI_N) \BC]^\Gamma$ for any $\BV \in \C^{M \times M}$, we obtain 
$ (\BU^\top \otimes \BI_N) \BXi_{C, Q}^\Gamma (\BU^* \otimes \BI_N) = (\BC_U^\Gamma (\widetilde{\BT}_{\alpha, U} \otimes \BQ) \BC_U^\Gamma)^\Gamma$, where $\BC_U = (\BU^\top \otimes \BI_N) \BC (\BU^* \otimes \BI_N)$ and $\widetilde{\BT}_{\alpha, U} = \BU^H \widetilde{\BT}_\alpha \BU$. Hence, we have 
\begin{align*}
   \norm{\BXi_{C, Q}^\Gamma} &= \norm{(\BC^\Gamma_U (\widetilde{\BT}_{\alpha, U} \otimes \BQ) \BC^\Gamma_U)^\Gamma} = \sup_{\norm{\Bx} = \norm{\By} = 1} \abs{\Bx^H(\BC^\Gamma_U (\widetilde{\BT}_{\alpha, U} \otimes \BQ) \BC^\Gamma_U)^\Gamma \By} \\
   &= \sup_{\norm{\Bx} = \norm{\By} = 1}  \abs{ \sum_{i, j=1}^M \Bx_i^H  \left[ (\BC^\Gamma_U(\widetilde{\BT}_{\alpha, U} \otimes \BQ) \BC^\Gamma_U)[\Bvarsigma_j, \Bvarsigma_i] \right] \By_j }\\
   &\leq  \norm{\BQ}\sup_{\norm{\Bx} = \norm{\By} = 1}  \sum_{i, j}\norm{\Bx_i} \norm{\By_j} \sum_{a, b} \norm{\BC_{U, aj}} \abs{[\widetilde{\BT}_{\alpha, U}]_{a, b}} \norm{\BC_{U, ib}} \\
   &\overset{(a)}{\leq} \frac{1}{\dz} \sup_{\norm{\Balpha} = \norm{\Bbeta} = 1} \abs{ \Balpha^\top  \cK(\BC_U) \abs{\widetilde{\BT}_{\alpha, U}} \cK(\BC_U) \Bbeta } \leq \frac{\norm{\cK(\BC_U)}^2 \norm{\abs{\widetilde{\BT}_{\alpha, U}}}}{\dz}. \numberthis \label{eq: CGamma T Q related norm bound}
\end{align*}
where we decompose the vector $\Bu = [\Bu_1^\top, \Bu_2^\top, \ldots, \Bu_M^\top]^\top$  with $\Bu_j \in \C^N$ for $j \in \{1, 2, \ldots, M\}$ and $\Bu \in \{\Bx, \By\}$. Step $(a)$ is due to the identity $\norm{\Bu}^2 = \sum_{j=1}^M \lnorm{\Bu_j}^2$. Next, we bound the spectral norm of $|\widetilde{\BT}_{\alpha, U}|$. Setting $\BL = \BU^H \cL(\E \BQ ) \BU$ in Lemma \ref{lem: bound on banded Stieltjes transform} and letting $W_C = \cW(\cK(\BC_U))$, we have $\norm{\Bmu_L(\R_+)} = \norm{\cL(\BI_N)} $ and $\mu_z = \min(\frac{\dz^2}{4|z| \norm{\cL(\BI_N)} }, \frac{1}{2})$. Using the inequalities $1 - e^{-x} \leq \min(1, x)$ and $e^{x} - 1 \geq x$ for $x \geq 0$, we have
\begin{align*}
    & \norm{\abs{\widetilde{\BT}_{\alpha, U}}}_{\infty} =\max_{1 \leq i \leq M} \sum_{j=1}^M \abs{[\widetilde{\BT}_{\alpha, U}]_{i, j}} \leq \frac{2}{\dz} \max_{1 \leq i \leq M} \sum_{j=1}^M \exp(-\frac{\mu_z}{W_C + 1} |i - j| ) \leq  \frac{2}{\dz} \left(1 + 2 \sum_{j=1}^M e^{-\frac{\mu_z}{W_C + 1} j}\right) \\
    &= \frac{2}{\dz} \left(1 + \frac{2(1 - e^{- M \frac{\mu_z}{W_C + 1}})}{e^{\frac{\mu_z}{W_C + 1}} - 1}\right) \leq \frac{2}{\dz}\left( 1 + \frac{2 \min(W_C + 1, M\mu_z)}{\mu_z}  \right) \leq \frac{K}{\dz}\left( \frac{|z|}{\dz^2} + 1\right)\min\left(W_C+1, N\right), \numberthis \label{eq: abs T alpha norm bound}
\end{align*}
where $K$ is a constant independent of $N$ and $M$.  The same estimates also hold for $\lnorm{|\widetilde{\BT}_{\alpha, U}|}_{1}$. Applying the inequality $\norm{\BX} \leq \sqrt{\norm{\BX}_1 \norm{\BX}_\infty}$, substituting the bounds from \eqref{eq: abs T alpha norm bound} into \eqref{eq: epsilon 2 bound} and \eqref{eq: CGamma T Q related norm bound}, and using Theorem \ref{thm: bound norm}, we obtain
\begin{align}
    |\varepsilon_2| = \mathcal{O}\left(  \frac{\norm{\cK(\BC_U)}^2 \min(\cW[\cK(\BC_U)] + 1, N)}{N} \left(\frac{|z|}{\dz^6} + \frac{|z|^2}{\dz^8} \right)\right).
\end{align}
\par
Finally, we evaluate $|\varepsilon_3|$.  Writing $\BQ = \mathring{\BQ} + \E \BQ$ and using the Cauchy-Schwarz inequality give
\begin{align}
    |\varepsilon_3| \leq \frac{|z|}{\dz^2} \E^{\frac{1}{2}} \left[ \Tr(\cL(\BT_\alpha \BA \mathring\BQ))(\cL(\BT_\alpha \BA \mathring\BQ))^H  \right] \E^{\frac{1}{2}} \left[ \Tr \cL(\mathring{\BQ})(\cL(\mathring{\BQ}))^H \right].
\end{align}
With the same method used in \eqref{eq: bound on TrLaLaH}, we can get 
\begin{align}
    |\varepsilon_3| = \mathcal{O} \left( \frac{|z|}{N \dz^7} \right).
\end{align}
Combining the bounds for $\varepsilon_j, j=1,2,3$, we have 
\begin{align}
    \abs{\varepsilon} \leq |z|(|\varepsilon_1| + |\varepsilon_2| + |\varepsilon_3|) \leq \mathcal{O}\left( \left(\frac{|z|^2}{\dz^6} + \frac{|z|^2}{\dz^7} + \frac{|z|^2 + |z|^3}{\dz^8}\right) \frac{ 1 + \norm{\cK(\BC_U)}^2 \min(\cW[\cK(\BC_U)] + 1, N) }{N}\right). \label{eq: detail bound ell}
\end{align}
Since $\BU \in \mathbb{U}(M)$ is arbitrary, we can take the infimum over $\mathbb{U}(M)$ outside the big-O notation. 
Combining the bounds in \eqref{eq: pre bound ell} and \eqref{eq: detail bound ell}, we have, for any $z \in \C - \R_+$ and $\BA$ with uniformly bounded spectral norm,  
\begin{align}
    \abs{\Tr \BA \BCE(z)} \leq \cO(P(z)\mathsf{E}_{N}), \label{eq: trace bound on E} 
\end{align}
where $P(z) = \frac{|z|}{\dz^5} + \frac{|z| + |z|^2}{\dz^6} + \frac{|z|^2}{\dz^7} + \frac{|z|^2 + |z|^3}{\dz^8}$  and $\BCE(z) = \E \BQ(z) - \BT_\alpha(z)$. Therefore, the stochastic step is completed.
\subsection{Deterministic Step}
In this step, we establish the connection between $\BT_\alpha$ and $\BT$ by controlling the linear functional associated with the difference $\BT_\alpha - \BT$. To this end, we first introduce the following four useful propositions for clarity of presentation.
\begin{proposition}
\label{prop: bound on norm eigenvalue T}
There exist  constants $K, \eta > 0$ (independent of $N, M$) such that 
    \begin{align}
        \min \left\{ \lambda_{\min}(\BT^H\BT), \lambda_{\min}(\BT_\alpha^H\BT_\alpha)  \right\} \geq \frac{K \dz^2}{(|z|^2 + \eta^2)^2},
    \end{align}
    for $z \in \C - \R_+$.
\end{proposition}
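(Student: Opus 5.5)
Both $\BT$ and $\BT_\alpha$ have the form $\BG=-\frac{1}{z}(\BI_N+\BL)^{-1}$, with $\BL\in\cS_N(\R_+)$ given by $\widetilde{\cL}(\widetilde{\BT})$ or $\widetilde{\cL}(\widetilde{\BT}_\alpha)$. So $\lambda_{\min}(\BG^H\BG)=\norm{\BG^{-1}}^{-2}$, and it suffices to bound
\begin{align*}
\norm{\BG^{-1}}=\norm{z(\BI_N+\BL(z))}\leq |z|\big(1+\norm{\BL(z)}\big)
\end{align*}
from above by a quantity of order $(|z|^2+\eta^2)/\dz$. The whole task therefore reduces to controlling $|z|\,\norm{\BL(z)}$ uniformly in $N$ and $M$.

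\textbf{Step 1: the underlying measure of $\BL$.} By Propositions~\ref{prop: L in Sn} and~\ref{prop: cLG and wcLG in Sn}, $\widetilde{\BT}$ and $\widetilde{\BT}_\alpha$ lie in $\cS_M(\R_+)$ with total mass $\BI_M$. For $\widetilde{\BT}_\alpha$ this uses $\cL(\E\BQ)\in\cS_M(\R_+)$, which holds because $\E\BQ\in\cS_N(\R_+)$ with mass $\BI_N$. Since $\widetilde{\cL}$ is positive, linear and continuous, $\BL=\widetilde{\cL}(\widetilde{\BT})$ is the Stieltjes transform of the measure $\widetilde{\cL}(\Bmu_{\widetilde{\BT}})$, whose total mass is $\widetilde{\cL}(\BI_M)=\E[\bar\BH\bar\BH^H]=\frac1M\sum_a\BC_{aa}$. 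Its norm is at most $\norm{\BC}$, which is uniformly bounded by Assumption~\ref{assumpt: A2}.

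\textbf{Step 2: the naive bound and its defect.} Proposition~\ref{prop: sp norm bound on elements in Sn} gives $\norm{\BL}\le\norm{\BC}/\dz$. Hence
\begin{align*}
\norm{\BG^{-1}}\le |z|+\frac{|z|\,\norm{\BC}}{\dz}, \qquad \lambda_{\min}(\BG^H\BG)\ge \frac{\dz^2}{|z|^2(\dz+\norm{\BC})^2}.
\end{align*}
This is already of the required form when $\dz$ is bounded. However, it fails when $\dz\asymp|z|\to\infty$ (say $z\to-\infty$), because the target decays like $|z|^{-2}$ while this bound decays like $|z|^{-4}$. To repair it, I would use the sharper representation
\begin{align*}
\BL(z)=\int_{\R_+}\frac{\Bmu(\dd\lambda)}{\lambda-z}.
\end{align*}
When $z$ is far from $\R_+$, the measure part contributes $|z|\,\norm{\BL}\lesssim |z|\norm{\BC}/\dz$. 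The plan is to split according to whether $|z|\le 2\dz$ or $|z|>2\dz$:
\begin{itemize}
\item If $|z|\le2\dz$, then $|z|\norm{\BL}\le 2\norm{\BC}$, so $\norm{\BG^{-1}}\le |z|+2\norm{\BC}$. Squaring gives $\lambda_{\min}\ge 1/(|z|+2\norm{\BC})^2$, and this dominates $K\dz^2/(|z|^2+\eta^2)^2$ because $\dz\le|z|$.
\item If $|z|>2\dz$, then $\dz=|\Im z|$ up to constants, with $z$ near the positive axis. The bound $\norm{\BG^{-1}}\le|z|(\dz+\norm{\BC})/\dz$ gives $\lambda_{\min}\ge \dz^2/(|z|^2(\dz+\norm{\BC})^2)$. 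Since $\dz<|z|$, we have $|z|^2(\dz+\norm{\BC})^2\le |z|^2(|z|+\norm{\BC})^2\lesssim(|z|^2+\eta^2)^2$ with $\eta=\norm{\BC}$ (bounded), using $|z|(|z|+\eta)\le 2(|z|^2+\eta^2)$.
\end{itemize}
In fact, the second estimate alone covers both cases once one writes $|z|(\dz+\eta)\le|z|(|z|+\eta)\le 2(|z|^2+\eta^2)$. So the cleanest route is to use $\norm{\BG^{-1}}\le |z|(\dz+\eta)/\dz$ together with this elementary inequality.

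\textbf{Step 3: conclusion.} Taking $\eta=\limsup\norm{\BC}+1$ and a universal constant $K$, the argument applies verbatim to both $\BT$ and $\BT_\alpha$. The main point requiring care is Step 1 for $\BT_\alpha$, namely verifying that $\widetilde{\cL}(\widetilde{\BT}_\alpha)$ belongs to $\cS_N(\R_+)$ with total mass bounded by $\norm{\BC}$. I would settle this by applying the Stieltjes-class propositions in sequence: $\E\BQ$ (mass $\BI_N$), then $\cL(\E\BQ)$, then $\widetilde{\BT}_\alpha$ (mass $\BI_M$), then $\widetilde{\cL}(\widetilde{\BT}_\alpha)$.
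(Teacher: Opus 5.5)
Your proof is correct, and it takes a genuinely different and shorter route than the paper. (One phrase in your Step 2 commentary is wrong; see the end.)

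The paper works with the measure $\Bnu_N$ of $\BT$ itself. It identifies the first moment $\int\lambda\,\Bnu_N(\dd\lambda)=\widetilde{\cL}(\BI_M)\preceq\norm{\BC}\BI_N$ and uses a Markov-type bound to find $\eta$ with $\Bnu_N([0,\eta])\succeq\frac12\BI_N$. It then bounds $|\Bx^H\BT\Bx|$ from below for every unit $\Bx$, splitting into $\Re z\geq 0$ and $\Re z<0$. Finally it uses $\lambda_{\min}(\BT^H\BT)\geq\inf_{\norm{\Bx}=1}|\Bx^H\BT\Bx|^2$.

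You instead use the explicit inverse $\BT^{-1}=-z(\BI_N+\widetilde{\cL}(\widetilde{\BT}))$. The same quantity $\widetilde{\cL}(\BI_M)$ now enters as the total mass of the measure of $\widetilde{\cL}(\widetilde{\BT})$, which gives $\norm{\BT^{-1}}\leq|z|(\dz+\norm{\BC})/\dz$. Combined with $\dz\leq|z|$, this yields the claim with any $\eta\geq\sup_N\norm{\BC_N}$ (finite under Assumption~\ref{assumpt: A2}) and $K=1/4$.

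Your route avoids the first-moment identity (which the paper only sketches via monotone convergence), the tightness step and the case analysis. You also spell out the $\BT_\alpha$ case, via $\E\BQ\to\cL(\E\BQ)\to\widetilde{\BT}_\alpha\to\widetilde{\cL}(\widetilde{\BT}_\alpha)$, which the paper omits. Your intermediate bound $\dz^2/(|z|^2(\dz+\norm{\BC})^2)$ is actually sharper than the stated one. It stays bounded away from $0$ as $z\to0$ along directions with $\dz\asymp|z|$, and it gains a factor of order $|z|^2$ near the positive axis for large $|z|$.

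The paper's route, for its part, only needs $\BT$ to be a matrix Stieltjes transform with mass $\BI_N$ and controlled first moment, so it does not depend on having a usable inverse formula. It also gives a lower bound on the numerical range of $\BT$, which is a stronger type of statement than the singular-value bound, though only the latter is used downstream.

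The correction to your Step 2: the naive bound does not decay like $|z|^{-4}$ when $\dz\asymp|z|\to\infty$. It decays like $|z|^{-2}$, exactly as the target, so there is no defect to repair and the case split is unnecessary. As you observe yourself at the end of that step, the single estimate $|z|(\dz+\eta)\leq|z|(|z|+\eta)\leq 2(|z|^2+\eta^2)$ suffices.
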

\textit{Proof:} The proof of  Proposition \ref{prop: bound on norm eigenvalue T} is given in Appendix \ref{app: proof of prop bound on norm eigenvalue T}. \qed 
\begin{proposition}
\label{prop: representation of L cL}
    The linear maps $\cL$ and $\widetilde{\cL}$ admit the representation
    \begin{align}
        \cL(\BX) = \sum_{k=1}^T \BV_k^H \BX \BV_k, ~~  \widetilde{\cL}(\BY) = \sum_{k=1}^T \BV_k \BY \BV_k^H,
    \end{align}
    where $\BV_k \in \C^{N \times M}$ and $T \leq NM$.
\end{proposition}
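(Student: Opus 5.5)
The plan is to realize $\cL$ and $\widetilde{\cL}$ through a single Gaussian-covariance factorization. We then read off the Kraus operators directly, so Lemma~\ref{lem: Choi theorem} is not actually needed.

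Since $\BC\succeq 0$, write $\BC=\sum_{k=1}^{T}\Bc_k\Bc_k^H$ with $T=\rank(\BC)\leq NM$. For instance, take $\Bc_k=\sqrt{\lambda_k}\Bu_k$ from an eigen-decomposition. Let $\BV_k\in\C^{N\times M}$ be the matricization of $\Bc_k/\sqrt{M}$, so that $\Vecm(\BV_k)=\Bc_k/\sqrt{M}$. Then $\bar{\Bh}=\Vecm(\bar{\BH})$ has the same law as $\sum_k g_k\Vecm(\BV_k)$, where the $g_k$ are i.i.d.\ $\mathcal{CN}(0,1)$. Equivalently,
\begin{align*}
\bar{\BH}\overset{d}{=}\sum_{k=1}^T g_k\BV_k ,
\end{align*}
because the covariance is $\sum_k \Vecm(\BV_k)\Vecm(\BV_k)^H=\BC/M$.

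Substitute this into the definitions \eqref{eq: def of two operators}. This gives
\begin{align*}
\cL(\BX)=\sum_{k,k'}\E[g_k^* g_{k'}]\,\BV_k^H\BX\BV_{k'}=\sum_{k=1}^T\BV_k^H\BX\BV_k ,
\end{align*}
since $\E[g_k^*g_{k'}]=\delta(k-k')$. The same computation yields $\widetilde{\cL}(\BY)=\sum_k\BV_k\BY\BV_k^H$ with the same matrices $\BV_k$.

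For a sanity check, compare with the block formulas stated after \eqref{eq: def of two operators}. We have $[\BV_k^H\BX\BV_k]_{i,j}=\Bv_{k,i}^H\BX\Bv_{k,j}$, where $\Bv_{k,i}$ is the $i$-th column of $\BV_k$ and equals the $i$-th length-$N$ block of $\Bc_k/\sqrt M$. Summing over $k$ gives $\frac1M\Tr\big(\BX\sum_k \Bc_k[\Bvarsigma_j]\Bc_k[\Bvarsigma_i]^H\big)=\frac1M\Tr\BC_{ji}\BX$, which is consistent.

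There is no real obstacle here. The only points needing care are the bookkeeping of the vectorization (column-stacking, so block $i$ of $\Vecm(\BV)$ is column $i$ of $\BV$) and the fact that circular symmetry kills the pseudo-covariance terms $\E[g_kg_{k'}]$. The bound $T\leq NM$ follows from the rank of $\BC$.
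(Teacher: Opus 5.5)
Your argument is correct, and it reaches the same operators $\BV_k$ as the paper by a more direct route. The paper first computes the Choi matrix $\BXi_{\widetilde{\cL}}=\sum_{i,j}\BE_{ij}\otimes\widetilde{\cL}(\BE_{ij})=\BC/M$. It then invokes Lemma~\ref{lem: Choi theorem} to obtain $\widetilde{\cL}(\BY)=\sum_k\BV_k\BY\BV_k^H$; the proof of that lemma is exactly ``factor the Choi matrix into rank-one terms and matricize each factor.'' Finally, it recovers $\cL$ by noting that $\BX\mapsto\sum_k\BV_k^H\BX\BV_k$ and $\cL$ are both Hilbert--Schmidt adjoints of $\widetilde{\cL}$, so they coincide. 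Since the Choi matrix is precisely $\BC/M$, your $\BV_k$ (matricizations of the factors of $\BC/M$) are the same matrices the paper produces.

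Your route substitutes the factorization directly into the expectation definitions \eqref{eq: def of two operators}. This gives both representations at once, avoids Choi's theorem and the adjointness/uniqueness step, and makes it clear why a single family $\{\BV_k\}$ serves both maps. The paper's route buys generality: it applies to any linear map whose Choi matrix is positive semidefinite, not only to maps defined as Gaussian expectations.

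One small correction to your closing remark. The pseudo-covariances $\E[g_kg_{k'}]$ never appear, because each of $\cL$ and $\widetilde{\cL}$ involves one factor of $\bar{\BH}$ and one of $\bar{\BH}^H$. In fact only the second moment $\E[\bar{\Bh}\bar{\Bh}^H]=\BC/M$ enters, so the equality in distribution is more than you need. By bilinearity, any factorization $\BC/M=\sum_k\Vecm(\BV_k)\Vecm(\BV_k)^H$ suffices.
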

\textit{Proof:} We will prove this using Lemma \ref{lem: Choi theorem}. The matrix $\BXi_{\widetilde{\cL}}$ is given by
\begin{align}
    \BXi_{\widetilde{\cL}} = \sum_{i, j} \BE_{ij} \otimes \widetilde{\cL}(\BE_{ij}) = \frac{1}{M} \sum_{i, j, a, b} \BE_{ij} \otimes (\BC_{ab} [\BE_{ij}]_{a, b}) = \frac{\BC}{M},
\end{align}
which is Hermitian non-negative. By Lemma \ref{lem: Choi theorem}, $\widetilde{\cL}$ therefore admits the representation $\widetilde{\cL}(\BY)= \sum_{k}\BV_k \BY \BV_k^H$. Define the map $\cL': \C^{N \times N} \mapsto \C^{M \times M}$ as  $\cL'(\BX) = \sum_k \BV_k^H \BX \BV_k$. Then, for any $\BX \in \C^{N \times N}$ and $\BY \in \C^{M \times M}$, we have
\begin{align}
    \innerprod{\cL'(\BX)}{\BY} = \sum_k \innerprod{\BV_k^H\BX\BV_k}{\BY} =  \innerprod{\BX}{\sum_k \BV_k \BY \BV_k^H} = \innerprod{\BX}{\widetilde{\cL}(\BY)}.
\end{align}
Since we already have $\innerprod{\cL\BX}{\BY} = \langle{\BX}, {\widetilde{\cL}\BY}\rangle$, it can be shown by contradiction that $\cL = \cL'$, which completes the proof. \qed 
\begin{proposition} 
\label{prop: bound on LI}
The map $\cL$ satisfies
     \begin{align}
         \norm{\cL(\BI_N)} \leq \frac{N}{M}\norm{\BC} , ~~ \lambda_{\min}(\cL(\BI_N)) \geq \frac{N}{M}\lambda_{\min}(\BC).
     \end{align}
\end{proposition}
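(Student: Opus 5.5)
The plan is to write $\cL(\BI_N)$ explicitly in terms of the diagonal blocks of $\BC$ and then compare it with $\BC$ through a quadratic-form (compression) argument. By the entrywise formula $[\cL(\BX)]_{i,j}=\frac{1}{M}\Tr\BC_{ji}\BX$ stated after \eqref{eq: def of two operators}, we get $[\cL(\BI_N)]_{i,j}=\frac{1}{M}\Tr\BC_{ji}$. Since $\cL$ is positive (for instance by Proposition \ref{prop: representation of L cL}), $\cL(\BI_N)$ is Hermitian non-negative. It therefore suffices to bound $\By^H\cL(\BI_N)\By$ from above and below, uniformly over unit vectors $\By\in\C^M$.

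The key identity is obtained by expanding the trace in the standard basis $\{\Be_k\}_{k=1}^N$ of $\C^N$. This gives
\begin{align*}
\By^H\cL(\BI_N)\By=\frac{1}{M}\sum_{i,j}y_i^*y_j\Tr\BC_{ji}=\frac{1}{M}\sum_{k=1}^N\sum_{i,j}y_j\,[\BC_{ji}]_{k,k}\,y_i^* .
\end{align*}
We then recognize the inner sum as a quadratic form in $\BC$. Set $\Bw_k=\By^*\otimes\Be_k\in\C^{NM}$. The $j$-th length-$N$ block of $\Bw_k$ is $y_j^*\Be_k$, so
\begin{align*}
\Bw_k^H\BC\Bw_k=\sum_{j,i}y_j\,\Be_k^\top\BC_{ji}\Be_k\,y_i^*=\sum_{i,j}y_j[\BC_{ji}]_{k,k}y_i^* .
\end{align*}
Hence $\By^H\cL(\BI_N)\By=\frac{1}{M}\sum_{k=1}^N\Bw_k^H\BC\Bw_k$. (Equivalently, $\cL(\BI_N)=\frac1M\E[\bar\BH^H\bar\BH]$ is a partial trace of $\BC$ up to transpose, and the same conclusion follows.)

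Each $\Bw_k$ has norm $\norm{\By}\norm{\Be_k}=1$, so the Rayleigh quotient bounds give $\lambda_{\min}(\BC)\le\Bw_k^H\BC\Bw_k\le\norm{\BC}$. Summing over the $N$ values of $k$ yields
\begin{align*}
\frac{N}{M}\lambda_{\min}(\BC)\le\By^H\cL(\BI_N)\By\le\frac{N}{M}\norm{\BC}
\end{align*}
for every unit $\By$. Taking the supremum and infimum over $\By$ gives the two claimed inequalities.

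The only step requiring care is the index bookkeeping between the conjugate $\By^*$, the block ordering $\Bvarsigma_j$, and the transpose $\BC_{ji}$ versus $\BC_{ij}$. An error there would merely swap $\By$ and $\By^*$, which does not affect the conclusion because $\norm{\By^*}=1$ as well.
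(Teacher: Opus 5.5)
Your proposal is correct and is essentially the paper's proof: the paper writes $\Bx^H\cL(\BI_N)\Bx=\frac{1}{M}\Tr\,(\Bx^\top\otimes\BI_N)\BC(\Bx^*\otimes\BI_N)$ and bounds it between $\frac{N}{M}\lambda_{\min}(\BC)$ and $\frac{N}{M}\norm{\BC}$. Your vectors $\Bw_k=\By^*\otimes\Be_k$ are exactly the $N$ columns of $\By^*\otimes\BI_N$, so splitting that trace into $N$ unit-vector Rayleigh quotients is the same argument written column by column.
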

\textit{Proof:} For the unit norm vector $\Bx \in \C^M$, we have 
\begin{align}
    \Bx^H \cL(\BI_N) \Bx = \frac{1}{M} \sum_{i, j} x_i^* x_j  \Tr (\BC_{ji})  = \frac{1}{M} \Tr (\Bx^\top \otimes \BI_N)\BC (\Bx^* \otimes \BI_N),
\end{align}
which implies 
$
    \Bx^H \cL(\BI_N) \Bx \leq \norm{\BC} \frac{1}{M} \Tr (\Bx^* \Bx^\top  \otimes \BI_N) = \frac{N}{M}\norm{\BC}
$ and $
    \Bx^H \cL(\BI_N) \Bx \geq \lambda_{\min}(\BC) \frac{1}{M} \Tr (\Bx^* \Bx^\top  \otimes \BI_N) = \frac{N}{M}\lambda_{\min}(\BC)
$. Therefore, we have completed the proof. \qed
\begin{proposition}
\label{prop: bound on the error LE term}
    For any  $z \in \C - \R_+$, we have the following bounds for the error term 
    \begin{align}
        \norm{{\cL}(\BCE(z))} = \mathcal{O}\left(\frac{P(z)\mathsf{E}_{N}}{N}\right), ~~~
         \norm{\Im({\cL}(\BCE(z)))}  = \mathcal{O}\left(\frac{\abs{\Im(z)}P(z)\mathsf{E}_{N}}{\dz N}\right). \label{eq: LE spectral norm bound}
    \end{align}
\end{proposition}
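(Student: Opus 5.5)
We reduce the spectral norm bound on $\cL(\BCE)$ to the trace bound \eqref{eq: trace bound on E}, which already holds for every deterministic $\BA$ with bounded spectral norm. The key observation is that each matrix entry, and more generally each bilinear form, of $\cL(\BCE)$ is a trace of $\BCE$ against a bounded matrix.

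Fix unit vectors $\Bx,\By\in\C^M$. By the definition $[\cL(\BX)]_{i,j}=\frac{1}{M}\Tr\BC_{ji}\BX$ we have
\begin{align*}
\Bx^H\cL(\BCE)\By=\frac{1}{M}\Tr\Big(\sum_{i,j}x_i^*y_j\BC_{ji}\Big)\BCE=\frac{1}{M}\Tr \BA_{x,y}\BCE,
\end{align*}
where $\BA_{x,y}=\sum_{i,j}x_i^*y_j\BC_{ji}=(\By^\top\otimes\BI_N)\BC(\Bx^*\otimes\BI_N)$. The matrices $\By^\top\otimes\BI_N$ and $\Bx^*\otimes\BI_N$ have unit spectral norm, so $\norm{\BA_{x,y}}\leq\norm{\BC}$, which is uniformly bounded by Assumption \ref{assumpt: A2}. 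Hence \eqref{eq: trace bound on E} applies to $\BA_{x,y}$ and gives $|\Bx^H\cL(\BCE)\By|=\cO(P(z)\mathsf{E}_N/M)$.

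This only has value if the $\cO(\cdot)$ constant in \eqref{eq: trace bound on E} depends on $\BA$ through $\norm{\BA}$ alone. Re-reading the stochastic step confirms this: $\BA$ enters $\varepsilon$, $\varepsilon_1$, $\varepsilon_2$ and $\varepsilon_3$ only through the factor $\norm{\BA}$, and through Lemma \ref{lem: variance control} via $\norm{\BT_\alpha\BA}\leq\norm{\BA}/\dz$. So the bound is uniform over $\Bx,\By$. Taking the supremum over unit vectors and using Assumption \ref{assumpt: A1} ($M\asymp N$) yields the first estimate in \eqref{eq: LE spectral norm bound}.

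For the imaginary part we need an extra factor $|\Im z|/\dz$. First, $\Im(\cL(\BCE))=\cL(\Im\BCE)$, because $\cL$ commutes with $(\cdot)^H$: $\cL(\BX)^H=\cL(\BX^H)$ holds since $\BC_{ji}^H=\BC_{ij}$. Second, both $\E\BQ$ and $\BT_\alpha$ lie in $\cS_N(\R_+)$ with total mass $\BI_N$. Hence $\Im\BCE$ is a trace-class object that vanishes on the real axis, with $\Im\E\BQ=\Im(z)\,\E\BQ\BQ^H$ and $\Im\BT_\alpha=\Im(z)\BT_\alpha(\BI+\dots)\BT_\alpha^H$ by the resolvent-type identities of Proposition \ref{prop: L in Sn}.

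The main obstacle is that $\Im$ does not commute with the bound on $|\Tr\BA\BCE|$ directly. The plan is to write
\begin{align*}
\Bx^H\Im(\cL(\BCE))\Bx=\frac{1}{2\jmath M}\big(\Tr\BA_{x,x}\BCE(z)-\Tr\BA_{x,x}\BCE(z^*)\big)
\end{align*}
(using $\BCE(z)^H=\BCE(z^*)$ and $\BA_{x,x}$ Hermitian), and then control this difference by integrating the derivative along the vertical segment from $z^*$ to $z$. That derivative, $\Tr\BA\BCE'(w)$, is handled by repeating the stochastic step with $\BQ^2$ in place of $\BQ$. Alternatively, the same comparison follows from a Cauchy-estimate argument on a disc of radius $\asymp\dz$ around $\Re z$, on which $P$ changes only by constants. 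Either route yields $|\Im z|\cdot\cO(P(z)\mathsf{E}_N/(\dz N))$. For general Hermitian forms, polarization over $\Bx,\By$ then gives the spectral norm bound, since $\Im(\cL(\BCE))$ is Hermitian and its norm equals $\sup_{\Bx}|\Bx^H(\cdot)\Bx|$.

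If the derivative route proves messy, the fallback is to absorb the lost $\dz$ into a slightly larger polynomial $P$, which suffices for the later stability analysis.
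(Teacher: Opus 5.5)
For the first estimate, your argument is the paper's: write $\Bx^H\cL(\BCE)\By=\frac1M\Tr\BA_{x,y}\BCE$ with $\BA_{x,y}=(\By^\top\otimes\BI_N)\BC(\Bx^*\otimes\BI_N)$, use $\norm{\BA_{x,y}}\leq\norm{\BC}$, apply \eqref{eq: trace bound on E}, and take the supremum over unit vectors.

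For the imaginary part, the paper gives no argument of its own and only invokes \cite[Lemma B.1]{hachem2013bilinear}. Your reduction to $f(z)-f(z^*)$ with $f(w)=\Tr\BA_{x,x}\BCE(w)$, via $\cL(\BX)^H=\cL(\BX^H)$ and $\BCE(z)^H=\BCE(z^*)$, is the right starting point. However, neither of your two routes works on all of $\C-\R_+$ as stated.
\begin{itemize}
\item If $\Re z\geq 0$, the vertical segment from $z^*$ to $z$ passes through $\Re z\in\R_+$. This point lies on the cut where $\E\BQ$ and $\BT_\alpha$ are in general not analytic; for $\Re z=0$ it is the origin. A disc centred at $\Re z$ is not even contained in the domain.
\item If $\Re z<0$ but $|\Re z|\ll|z|=\dz$, a disc of radius $\asymp\dz$ around $\Re z$ again meets $\R_+$. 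Along the segment, $\mathsf{d}_w$ drops to $|\Re z|$, so $P(w)$ is not comparable to $P(z)$.
\end{itemize}
Redoing the stochastic step with $\BQ^2$ does not fix this, because the obstruction is geometric, not the lack of a derivative bound.

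The missing step is a case distinction.
\begin{itemize}
\item If $|\Im z|\geq c\,\dz$ for a fixed $c\in(0,1)$, then $|\Im z|/\dz\geq c$. The second estimate then follows from the first, since $\norm{\Im\BX}\leq\norm{\BX}$. This case covers every $z$ with $\Re z\geq 0$, because there $\dz=|\Im z|$.
\item If $|\Im z|<c\,\dz$ with, say, $c=1/8$, then necessarily $\Re z<0$, $\dz=|z|$, $|\Re z|\geq\sqrt{1-c^2}\,\dz$, and $|\Im z|<|\Re z|/4$. For every $w$ on $[z^*,z]$, the disc of radius $|\Re z|/4$ about $w$ lies in $\{u:\Re u\leq-\tfrac34|\Re z|,\ |u|\leq\tfrac32|z|\}$. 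On that set $\mathsf{d}_u\gtrsim\dz$, hence $P(u)\lesssim P(z)$ term by term.
\end{itemize}
In the second case, apply Cauchy's estimate to the analytic function $f$, together with \eqref{eq: trace bound on E}, whose constant is independent of $z$. This gives $|f'(w)|\lesssim P(z)\mathsf{E}_N/\dz$ on the segment. Hence $|f(z)-f(z^*)|\leq 2|\Im z|\sup_w|f'(w)|\lesssim|\Im z|P(z)\mathsf{E}_N/\dz$. Divide by $2M$ and take the supremum over unit $\Bx$ (the matrix $\Im(\cL(\BCE))$ is Hermitian) to get the claim. No second stochastic computation is needed.

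Finally, your fallback of enlarging $P$ only helps once the factor $|\Im z|$ has been obtained. The ratio $\dz/|\Im z|$ is unbounded near the negative real axis. The stability argument leading to \eqref{eq: bound on z Falp} needs exactly this vanishing factor, to compare against $\Im(z)\lambda_{\min}(\cL(\BI_N))\lambda_{\min}(\BT_\alpha\BT_\alpha^H)$.
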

\textit{Proof:} We first evaluate the bilinear form $\Bx^H {\cL}(\BCE)\By$ for $\Bx, \By \in \C^{M}$. Expanding the map ${\cL}$ yields 
\begin{align}
    \Bx^H {\cL}(\BCE)\By = \frac{1}{M} \sum_{i, j} x_i^*  \Tr (\BC_{ji} \BCE) y_j = \frac{1}{M} \Tr \left\{ \left[ (\By^\top \otimes \BI_N) \BC (\Bx^* \otimes \BI_N) \right] (\E \BQ - \BT_\alpha) \right\} = \frac{1}{M}\Tr \BA_{C, x, y} (\E \BQ - \BT_\alpha),
\end{align}
where $\BA_{C, x, y} = (\By^\top \otimes \BI_N) \BC (\Bx^* \otimes \BI_N)$. Since $\lnorm{\BA_{C, x, y}} \leq  \norm{\By} \norm{\BC} \norm{\Bx}$, the first assertion in \eqref{eq: LE spectral norm bound} follows from \eqref{eq: trace bound on E} by taking the supremum over the unit vectors. The second assertion follows by the same argument as in \cite[Lemma B.1]{hachem2013bilinear}. \qed
\par
Define the difference matrix $\BDelta = \BT_\alpha - \BT$. Using the resolvent identity $\BX - \BY = \BY(\BY^{-1}- \BX^{-1})\BX$ yields
\begin{align}
    \BDelta & = \BT \widetilde{\cL}(z\widetilde{\BT}_\alpha - z\widetilde{\BT}) \BT_\alpha = z^2\BT \widetilde{\cL}(\widetilde{\BT} \cL( \BDelta + \BCE )\widetilde{\BT}_\alpha) \BT_\alpha. \label{eq: identity for Delta}
\end{align}
Applying $\cL$ to both sides of the above equation, we have
\begin{align}
    \cL(\BDelta) = \cF_{I, \alpha}[\cL(\BDelta)] + \cF_{I, \alpha}[\cL(\BCE)], \label{eq: ope system for Delta}
\end{align}
where the linear map $\cF_{I, \alpha}: \C^{M \times M} \mapsto \C^{M \times M}$ is defined as $\cF_{I, \alpha}(\cdot) = z^2 \cL(\BT\widetilde{\cL}(\widetilde{\BT} (\cdot)\widetilde{\BT}_\alpha)\BT_\alpha)$. To solve the matrix equation above, the invertibility of the map $\mathrm{Id}_M - \cF_{I, \alpha}$ must be established. Since $\cF_{I, \alpha}$ depends simultaneously on variables related to $\BT$ and $\BT_\alpha$,  its analysis is challenging.
To this end, we derive the auxiliary equations that involve only $\BT$ and $\BT_\alpha$ in the following. Applying the resolvent identity and $\Im(\BX) = (\BX - \BX^H) / 2\jmath$, for $z \in \C_+$, we have 
\begin{align}
    \cL(\Im(\BT)) = \Im(z) \cL\left(\BT\BT^H \right) + \mathcal{F}[\cL(\Im(\BT))], \label{eq: operator equation LT}
\end{align}
where $\mathcal{F}(\cdot) = |z|^2 \cL ( \BT \widetilde{\cL}(\widetilde{\BT} (\cdot) \widetilde{\BT}^H)\BT^H )$.  According to Propositions \ref{prop: sp norm bound on elements in Sn}, \ref{prop: bound on norm eigenvalue T}, \ref{prop: bound on LI}, and Lemma \ref{lem: bound on positive operator}, we have $\rho(\cF) < 1$ and 
\begin{align}
    \norm{(\mathrm{Id}_M - \cF)^{-1}}_{op} \leq \frac{\norm{\cL(\Im(\BT))}}{\Im(z) \lambda_{\min}(\cL(\BT\BT^H))} \leq \frac{\norm{\cL(\BI_N)} \norm{\Im(\BT)}}{\Im(z) \lambda_{\min}(\cL(\BI_N)) \lambda_{\min} \left( \BT\BT^H \right)} \leq \overline{K} \frac{(|z|^2 + \eta^2)^2}{\dz^4}, \label{eq: bound on z F}
\end{align}
where $\overline{K} > 0$ is a constant. We note that \eqref{eq: bound on z F} holds for $z < 0$. This follows by taking the limit $\lim_{\Im z \downarrow 0}$ in \eqref{eq: bound on z F} and using a continuation argument.
By a similar calculation as in \eqref{eq: operator equation LT}, we obtain 
\begin{align*}
    {\cL}(\Im(\E \BQ)) 
    &= \Im({\cL}(\BCE)) + \Im(z) {\cL}(\BT_\alpha^H \BT_\alpha) + \mathcal{F}_{\alpha} [\cL(\Im(\E \BQ))], \numberthis
\end{align*}
where the map $\mathcal{F}_{\alpha}(\cdot) = |z|^2 {\cL} ( \BT_\alpha^H  \widetilde{\cL}(\widetilde{\BT}_\alpha^H (\cdot) \widetilde{\BT}_\alpha) \BT_\alpha)$. According to Proposition \ref{prop: bound on the error LE term}, there exists a constant $\widetilde{K} > 0$ such that for $z$ in the region
\begin{align}
    \mathfrak{R} = \left\{z \in \C - \R_+ : \frac{ \widetilde{K}\mathsf{E}_N P(z)   (|z|^2 + \eta^2)^2}{N \dz^3} \leq 1 \right\},
\end{align}
there holds 
$\norm{\Im(\cL(\BCE))} \leq  \frac{K\lambda_{\min}(\cL(\BI_N)) \Im(z) \dz^2}{2(|z|^2 + \eta^2)^2}$ ($K$ is defined in Proposition \ref{prop: bound on norm eigenvalue T}). As a result, for $z \in \mathfrak{R}$, we have 
\begin{align*}
    \norm{(\mathrm{Id}_M - \cF_\alpha)^{-1}}_{op} &\leq \frac{\norm{\cL(\Im(\E \BQ))}}{\lambda_{\min}[\Im({\cL}(\BCE)) + \Im(z) {\cL}(\BT_\alpha^H \BT_\alpha)]} \\
    &\leq \frac{\norm{\cL(\BI_N)} \norm{\Im \E \BQ}}{\Im(z) \lambda_{\min}(\cL(\BI_N)) \lambda_{\min}(\BT_\alpha\BT_\alpha^H) - \norm{\Im(\cL(\BCE))}} \leq \frac{2\overline{K}(|z|^2 + \eta^2)^2}{\dz^4}. \numberthis \label{eq: bound on z Falp}
\end{align*}
With the representation in Proposition \ref{prop: representation of L cL}, we have 
\begin{align}
    &\cF(\cdot ) = \sum_{k, l} |z|^2 \BV_k^H \BT   \BV_l \widetilde{\BT} (\cdot) \widetilde{\BT}^H \BV_l^H \BT^H \BV_k = \sum_{ \boldsymbol{\mathsf{j}} = (k, l) } \BV_{\indvj}(\cdot) \BV_{\indvj}^H,  \\
    &\cF_{\alpha}(\cdot) = \sum_{k, l}  |z|^2 \BV_k^H \BT_\alpha^H   \BV_l \widetilde{\BT}_\alpha^H (\cdot) \widetilde{\BT}_\alpha \BV_l^H \BT_\alpha \BV_k = \sum_{\indvj = (k, l)} \BV^{\alpha}_{\indvj} (\cdot) (\BV_{\indvj}^\alpha)^H,
\end{align}
with $\BV_{(k, l)} = z \BV_k^H \BT   \BV_l \widetilde{\BT}$ and $ \BV_{(k, l)}^\alpha = z^*\BV_k^H \BT_\alpha^H   \BV_l \widetilde{\BT}_\alpha^H $. By the definition of $\cF_{I, \alpha}$, we have 
\begin{align}
    &\cF_{I, \alpha}(\cdot) = \sum_{k, l} z^2 \BV_k^H \BT  \BV_l \widetilde{\BT}(\cdot) \widetilde{\BT}_\alpha \BV_l^H \BT_\alpha \BV_k = \sum_{\indvj = (k, l)} \BV_{\indvj} (\cdot) (\BV_{\indvj}^\alpha)^H.
\end{align}
For a multi-index $(\indvj_1, \indvj_2, \ldots, \indvj_k)$, define $\BA_{(\indvj_1, \indvj_2, \ldots, \indvj_k)} = \prod_{i = 1}^k \BA_{\indvj_i}$. For any $\BX \in \C^{M \times M}$ and any positive integer $k$, we have 
\begin{align*}
   \abs{ \Bx^H \cF_{I, \alpha}^k(\BX) \By} &= \abs{ \sum_{\boldsymbol{\mathsf{J}}_k = (\indvj_1, \indvj_2, \ldots, \indvj_k)} \Bx^H\BV_{\boldsymbol{\mathsf{J}}_k} \BX \left( \BV_{\boldsymbol{\mathsf{J}}_k}^\alpha \right)^H \By} \leq \norm{\BX}\sum_{\boldsymbol{\mathsf{J}}_k} \norm{\Bx^H\BV_{\boldsymbol{\mathsf{J}}_k}} \norm{\By^H\BV_{\boldsymbol{\mathsf{J}}_k}^\alpha}\\
    & \leq \norm{\BX} \left(\sum_{\boldsymbol{\mathsf{J}}_k}\Bx^H \BV_{\boldsymbol{\mathsf{J}}_k}\BV_{\boldsymbol{\mathsf{J}}_k}^H \Bx \right)^{\frac{1}{2}}   \left( \sum_{\boldsymbol{\mathsf{J}}_k}\By^H \BV_{\boldsymbol{\mathsf{J}}_k}^\alpha (\BV_{\boldsymbol{\mathsf{J}}_k}^\alpha)^H \By \right)^{\frac{1}{2}} = \norm{\BX} \left( \Bx^H \cF^k(\BI_M) \Bx\right)^{\frac{1}{2}}\left( \By^H \cF^k_\alpha(\BI_M) \By \right)^{\frac{1}{2}}. \numberthis \label{eq: bound on the operator FIalp}
\end{align*}
By taking the supremum over unit vectors $\Bx$, $\By$ and unit spectral norm matrix $\BX$ on both sides of the above equation, we obtain 
\begin{align}
    \norm{\cF_{I, \alpha}^k}_{op}^{\frac{1}{k}} \leq \sqrt{\norm{\cF^k(\BI_M)}^{\frac{1}{k}} \norm{\cF^k_\alpha(\BI_M)}^{\frac{1}{k}} } \overset{(a)}{=} \sqrt{\norm{\cF^k}_{op}^{\frac{1}{k}}  \norm{\cF_\alpha^k}_{op}^{\frac{1}{k}} },
\end{align}
where step $(a)$ follows from the Russo–Dye theorem \cite{russo1966note}. By Gelfand's formula, we have, for $z \in \mathfrak{R}$, 
\begin{align}
\rho(\cF_{I, \alpha}) = \lim_{k \to \infty} \norm{\cF_{I, \alpha}^k}_{op}^{\frac{1}{k}} \leq \sqrt{\rho(\cF_\alpha) \rho(\cF)} < 1,
\end{align}
which implies $\mathrm{Id}_M - \cF_{I, \alpha}$ is invertible. According to the Neumann series, we have 
$
    \left( \mathrm{Id}_M - \cF_{I, \alpha} \right)^{-1} = \sum_{j = 0}^{\infty} \cF_{I, \alpha}^j
$. Hence, by \eqref{eq: bound on z F}, \eqref{eq: bound on z Falp}, and \eqref{eq: bound on the operator FIalp}, we obtain, for $z \in \mathfrak{R}$,
\begin{align*}
    &\norm{\left( \mathrm{Id}_M - \cF_{I, \alpha} \right)^{-1}}_{op} = \sup_{\norm{\BX} = 1}\norm{\left( \mathrm{Id}_M - \cF_{I, \alpha} \right)^{-1}(\BX)} = \sup_{\norm{\BX} = 1} \sup_{\norm{\Bx} = \norm{\By} = 1} \abs{\sum_{j=0}^\infty \Bx^H \cF_{I, \alpha}^j(\BX) \By} \\
    &\leq \sup_{\norm{\Bx} = \norm{\By} = 1} \left(\Bx^H \sum_{j=0}^\infty \cF^j(\BI_M) \Bx \right)^{\frac{1}{2}}\left(\By^H \sum_{j=0}^\infty \cF^j_\alpha(\BI_M) \By \right)^{\frac{1}{2}} \leq \sqrt{ \norm{\left( \mathrm{Id}_M - \cF\right)^{-1}}_{op} \norm{\left( \mathrm{Id}_M - \cF_\alpha\right)^{-1}}_{op} } \\
    & \leq \frac{\sqrt{2}\overline{K}(|z|^2 + \eta^2)^2}{\dz^4}, \numberthis
\end{align*}
By \eqref{eq: ope system for Delta}, we can get $\cL(\BDelta) = (\mathrm{Id}_M - \cF_{I, \alpha})^{-1} \cF_{I, \alpha} [\cL(\BCE)] = -\cL(\BCE) + (\mathrm{Id}_M - \cF_{I, \alpha})^{-1}[\cL(\BCE)]$ and 
\begin{align}
\norm{\cL(\BDelta)} \leq \norm{\cL(\BCE)} + \norm{\left(\mathrm{Id}_M - \cF_{I, \alpha} \right)^{-1}}_{op} \norm{\cL(\BCE)} \leq  \frac{K_1(|z|^2 + \eta^2)^2 P(z) \mathsf{E}_N}{\dz^4 N}, ~~~ z \in \mathfrak{R},
\end{align}
where $K_1 > 1$ is a constant and we use $|z|^2 + \eta^2 \geq \dz^2$. By the identity \eqref{eq: identity for Delta}, we can get 
\begin{align}
    \abs{ \Tr \BA \left( \BT_\alpha - \BT \right) } \leq  N|z|^2 \norm{\BA} \norm{\BT}\norm{\BT_\alpha}\norm{\widetilde{\BT}}\norm{\widetilde{\BT}_\alpha} \norm{\widetilde{\cL}}_{op} \left( \norm{\cL(\BDelta)} + \norm{\cL(\BCE)} \right) \leq  \frac{K_2 (|z|^2 + \eta^2)^3P(z)\mathsf{E}_N}{\dz^8}, ~~ z \in \mathfrak{R}.
\end{align}
On the other hand, for $z \in \C - \R_+ - \mathfrak{R}$, we have $1 \leq  \widetilde{K} \mathsf{E}_N P(z)(|z|^2 + \eta^2)^2 / (N\dz^3)$
\begin{align}
    \abs{\Tr \BA \BDelta} \leq \frac{2N \norm{\BA}}{\dz} \times 1 \leq \frac{2\norm{\BA}\widetilde{K}  (|z|^2 + \eta^2)^2P(z)\mathsf{E}_N }{\dz^4}.
\end{align}
Using $\max(1, (|z|^2 + \eta^2)^2 / \dz^4, (|z|^2 + \eta^2)^3 / \dz^8) \leq (\eta^{-2} + 2)(|z|^2 + \eta^2)^4/ \dz^8$, we obtain, for any $z \in \mathbb{C} - \R_+$, 
\begin{align}
    \abs{\Tr \BA (\E \BQ - \BT)} \leq \abs{\Tr \BA \BCE} + \abs{\Tr \BA \BDelta} = \cO\left(\frac{(|z|^2 + \eta^2)^4 P(z)\mathsf{E}_N}{\dz^8}\right).
\end{align}
By expanding the polynomials in the RHS of the above inequality, we have 
\begin{align*}
    \frac{(|z|^2 + \eta^2)^4 P(z)}{ \dz^8} \leq \frac{1}{\dz^4}  \frac{(|z|^2 + \eta^2)^4}{ \dz^8}\left( \frac{|z|}{\dz} + \frac{|z| + |z|^2}{\dz^2} + \frac{|z|^2}{\dz^3} + \frac{|z|^2 + |z|^3}{\dz^4} \right) \leq K_\eta \frac{\Psi(z)}{\dz^4}, \numberthis
\end{align*}
for some constant $K_\eta$. Therefore, 
Theorem \ref{thm: de for the resolvent} is proved. \qed

\section{Proof of Lemma \ref{lem: bound on banded Stieltjes transform}}
\label{app: lemma bound on banded Stieltjes transform}
In this section, we use the perturbation method to bound the absolute values of the entries of $\BG$. Intuitively, since $\BL$ has a finite bandwidth, the effect of a small perturbation on $\BL$ is limited. Hence, $\BG$ can be effectively constrained via its perturbed version. 
\par
Recall the notation $\BCI$ and $\cB_I$ in Lemma \ref{lemm: band berstein bound}. Fix $z \in \C - \R_+$ and let $\alpha \in \R$. Define the diagonal weight matrix $\BCD_\alpha = e^{\alpha \BCI} =  \diag(1, e^{\alpha}, e^{2\alpha},  \ldots, e^{(N-1)\alpha})$. For a matrix $\BX \in \C^{N \times N}$, denote its perturbed version as $\BX_\alpha = \BCD_\alpha \BX \BCD_{-\alpha}$.  Then, for any unit vectors $\Bx, \By \in \C^N$, we have
\begin{align}
    \Bx^H(\BL_\alpha - \BL) \By = \sum_{i, j} x_i^* y_j (e^{\alpha(i - j)} - 1) [\BL]_{i, j}  = \sum_{i, j} \sum_{k = 1}^{\infty} \frac{\alpha^k (i - j)^k}{k!} x_i^* y_j [\BL]_{i, j} = \sum_{k = 1}^{\infty} \frac{\alpha^k}{k!} \Bx^H \cB_I^k(\BL) \By.
\end{align}
Using Lemma \ref{lemm: band berstein bound}, we obtain $\norm{\cB_I^k(\BL)} \leq W^k \norm{\BL}$ and 
\begin{align}
\norm{\BL_\alpha - \BL} \leq \sum_{k=1}^{\infty} \frac{|\alpha|^k W^k}{k!} \norm{\BL} = (e^{|\alpha|W} - 1) \norm{\BL}.
\end{align}
It can be verified that when $|\alpha|W \leq \frac{1}{2}$, there holds $|e^{\alpha W} - 1| \leq 2 |\alpha|W$. Since $\BL(z) \in \mathcal{S}_N(\R_+)$, we have $\norm{\BL} \leq  \norm{\Bmu_L(\R_+)} \dz^{-1}$. Then, we can get 
\begin{align}
    \norm{\BL_\alpha - \BL} \leq \frac{2 |\alpha| W\norm{\Bmu_L(\R_+)}}{\dz}.
\end{align}
By linear algebra, we have 
\begin{align}
    \BG_{\alpha}^{-1} = \BG^{-1} + (\BG_{\alpha}^{-1} - \BG^{-1}) = \BG^{-1} - z(\BL_{\alpha} - \BL) = \BG^{-1}(\BI_N - z\BG(\BL_{\alpha} - \BL)).
\end{align}
Since $\BL(z) \in \mathcal{S}_N(\R_+)$, Propositions \ref{prop: L in Sn} and \ref{prop: sp norm bound on elements in Sn} imply that $\BG(z) \in \mathcal{S}_N (\mathbb{R}_+)$ and $\norm{\BG(z)} \leq \dz^{-1}$. Consequently, if $\alpha$ satisfies $|\alpha|W \leq  \mu_z = \min( \frac{\dz^2}{4|z|\norm{\Bmu_L(\R_+)}}, \frac{1}{2}) $, then
$
    \norm{z \BG (\BL_{\alpha} - \BL)} \leq |z| \norm{\BG} \norm{\BL_\alpha - \BL} \leq \frac{1}{2}
$,
which implies the spectral radius $\rho(z \BG (\BL_{\alpha} - \BL)) \leq  \frac{1}{2}$. Thus, the matrix $\BI_N - z\BG(\BL_{\alpha} - \BL)$ is invertible.
Expanding the inverse by the Neumann series gives
\begin{align}
    \BG_\alpha = \sum_{j=0}^{\infty} \left( z \BG (\BL_\alpha - \BL) \right)^j \BG,
\end{align}
and 
\begin{align}
    \norm{\BG_\alpha} \leq  \sum_{j=0}^{\infty} \norm{z \BG (\BL_\alpha - \BL)}^j \norm{\BG} \leq \frac{2}{\dz}.
\end{align}
Hence, the relation $[\BG]_{i, j} = e^{(j-i)\alpha} [\BG_\alpha]_{i, j}$ holds for all $|\alpha|W \leq \mu_z$. This yields 
\begin{equation}
    |[\BG]_{i, j}| \leq \inf_{|\alpha|W \leq \mu_z} e^{(j-i)\alpha}  \norm{\BG_\alpha} \leq \frac{2}{\dz}\exp\left( -  \frac{\mu_z\abs{i - j}}{W + 1}\right) .
\end{equation}
Therefore, we have completed the proof of Lemma \ref{lem: bound on banded Stieltjes transform}. \qed

\section{Proof of Lemma \ref{lem: variance control}}
\label{app: proof of lemma variance control}
In this section, we will use the Poincaré–Nash inequality \cite{Hachem2008ANewApproach} to prove Lemma \ref{lem: variance control}. Specifically, we first bound the variance of the linear functionals of the resolvent, and then extend the variance bound to central moments of arbitrary integer order by induction.  Since \eqref{eq: moment bound on trace AQ} and \eqref{eq: moment bound on trace AQHbHH} can be proved analogously, in this section, we only give the proof of \eqref{eq: moment bound on trace AQ} and omit that of \eqref{eq: moment bound on trace AQHbHH} for simplicity.
\subsection{Variance Case}
By the Poincaré-Nash inequality, we have 
\begin{align*}
   & \Var (\Tr\BA\BQ ) \leq \E \left[ \sum_{i, j, a, b}  \frac{\partial \Tr \BA \BQ}{\partial H_{i,j}} \left(\E H_{i, j}H_{a, b}^* \right) \left( \frac{\partial \Tr \BA \BQ}{\partial H_{a, b}} \right)^* + \left( \frac{\partial \Tr \BA \BQ}{\partial H_{i,j}^*} \right)^* \left(\E H_{i, j}H_{a, b}^* \right) \frac{\partial \Tr \BA \BQ}{\partial H_{a, b}^*} \right] \\
    &= \E \left[ \sum_{i, j, a, b} [\BH^H\BQ\BA\BQ]_{j, i} \frac{[\BC_{jb}]_{i, a}}{M} [\BQ^H\BA^H\BQ^H\BH]_{a, b} \right] + \E \left[ \sum_{i, j, a, b} [\BH^H\BQ^H\BA^H\BQ^H]_{j, i} \frac{[\BC_{jb}]_{i, a}}{M} [\BQ\BA\BQ\BH]_{a, b} \right] \\
    &= \sum_{j, b} \frac{1}{M}\E \left[ \left[\BH^H\BQ\BA\BQ\BC_{jb} \BQ^H\BA^H\BQ^H\BH\right]_{j, b} +  \left[\BH^H\BQ^H\BA^H\BQ^H\BC_{jb}\BQ\BA\BQ\BH \right]_{j, b}   \right]. \numberthis\label{eq: var trace AQ bound1}
\end{align*}
Denote two vectors $\Bx = \Vecm(\BQ^H\BA^H\BQ^H\BH)$ and $\overline{\Bx} = \Vecm(\BQ\BA\BQ\BH)$.  Then, the RHS of  \eqref{eq: var trace AQ bound1} can be written as $\frac{1}{M} \E ( \Bx^H \BC\Bx + \overline{\Bx}^H \BC \overline{\Bx} )$. Thus, we have
\begin{align*}
    \Var(\Tr \BA \BQ) &\leq \frac{1}{M}\E\left[ \Bx^H \BC \Bx + \overline{\Bx}^H \BC\overline{\Bx}\right] \leq \frac{\norm{\BC}}{M} \E \left[ \norm{\Bx}^2 + \norm{\overline{\Bx}}^2\right] \\
    &\leq \frac{\norm{\BC}}{M} \E\left[ \Tr \BH^H\BQ\BA\BQ\BQ^H\BA^H\BQ^H\BH + \Tr \BH^H \BQ^H\BA^H\BQ^H\BQ\BA\BQ\BH \right] \\
    &\overset{(a)}{\leq} \frac{2 \Tr \BA\BA^H \norm{\BC}}{M} \E \left[ \norm{\BQ}^4 \norm{\BH}^2 \right] \overset{(b)}{=} \mathcal{O}\left(\frac{\Tr \BA\BA^H}{M \mathsf{d}_z^4}\right), \numberthis \label{eq: variance bound for var TrAQ}
\end{align*}
where step $(a)$ follows from the inequality $|\Tr \BX \BY| \leq \norm{\BX} \Tr \BY$ for Hermitian non-negative matrix $\BY$. Step $(b)$ is due to the fact that $\norm{\BQ(z)}\leq \mathsf{d}_z^{-1}$ and Theorem \ref{thm: bound norm}. Since $M$ and $N$ are of the same order, \eqref{eq: moment bound on trace AQ} holds for $p = 2$.
\subsection{General Central Moment Case}
To simplify notation, we define $X = \Tr \BA \BQ$, $\mathring{X} = X - \E X$, and $\mathsf{a} = N^{\frac{1}{2}}{\mathsf{d}_z^2} / {\sqrt{\Tr \BA\BA^H}}$. By the Cauchy–Schwarz inequality, we have $\E |\mathsf{a}\mathring{X}| \leq \E^{1/2}|\mathsf{a} \mathring{X}|^2 = \mathcal{O}(1)$.
Starting from \(p=1,2\), we proceed by induction on \(k\ge2\). Assume that the bound \eqref{eq: moment bound on trace AQ} holds for all $p=1, 2, \ldots, 2k-2$. We now establish the case $p = 2k$.  Observing that $\Var(x^k) = \E|x|^{2k} - |\E x^k|^2$, we can evaluate $\E |x|^{2k}$ by bounding $\Var(x^k)$ and $|\E x^k|^2$ separately.  Using the Poincaré-Nash inequality, we obtain
\begin{align*}
    \Var(\mathring{X}^k) &\leq \E \sum_{i, j, a, b} \frac{\partial \mathring{X}^k}{\partial H_{i,j}} \frac{[\BC_{jb}]_{i,a}}{M} \left( \frac{\partial \mathring{X}^k}{\partial H_{a, b}} \right)^* + \left( \frac{\partial \mathring{X}^k}{\partial H_{i,j}^*} \right)^* \frac{[\BC_{jb}]_{i,a}}{M} \frac{\partial \mathring{X}^k}{\partial H_{a, b}^*} \\
    &= \frac{k^2}{M} \E \left[ \abs{\mathring{X}}^{2k-2} \left(  \sum_{j,b}\left[\BH^H\BQ\BA\BQ\BC_{jb} \BQ^H\BA^H\BQ^H\BH\right]_{j, b} +  \left[\BH^H\BQ^H\BA^H\BQ^H\BC_{jb}\BQ\BA\BQ\BH \right]_{j, b}  \right) \right] \\
    & \overset{(a)}{\leq} \frac{k^2K  }{\mathsf{a}^2} \E \left[  \abs{\mathring{X}}^{2k-2} \norm{\BH}^2 \right], \numberthis
\end{align*}
where step $(a)$ follows the same argument as in \eqref{eq: variance bound for var TrAQ} and $K$ is a constant independent of $N$ and $M$. Using the induction hypothesis $\E |\mathsf{a}\mathring{X}|^{2k-2} = \mathcal{O}_k(1)$, the tail bound in \eqref{eq: proba tail bound for norm H}, and Hölder's inequality $\E|x|^{2k-2}|y||z| \leq \E^{1-\frac{1}{k}} |x|^{2k} \E^{\frac{1}{2k}} |y|^{2k}\E^{\frac{1}{2k}} |z|^{2k}$, the following is obtained for sufficiently large $K'_1$
\begin{align*}
    \mathsf{a}^{2k}\Var(\mathring{X}^k)  &\leq {k^2K(K_1')^2}  \E \left[  \abs{\mathsf{a}\mathring{X}}^{2k-2}\right] + k^2K\E \left[ \abs{\mathsf{a}\mathring{X}}^{2k-2} \norm{\BH}^2 \Ind\{ \norm{\BH} > K'_1\}\right] \\
    & {\leq} \mathcal{O}_k(1) \left[ 1 +  \left(\E  \abs{\mathsf{a}\mathring{X}}^{2k} \right)^{1 - \frac{1}{k}} \left(\E \norm{\BH}^{4k} \right)^{\frac{1}{2k}}\left(\PP\{ \norm{\BH} > K'_1\} \right)^{\frac{1}{2k}} \right] \\
    & \leq \mathcal{O}_k(1)\left[1 +  e^{-\frac{K_2'N}{2k}}\left(\E  \abs{\mathsf{a}\mathring{X}}^{2k} \right)^{1 - \frac{1}{k}} \right], \numberthis
\end{align*}
where $K'_2 > 0$ is a constant. Then, by the identity $\E|\mathring{X}|^{2k} =  \Var(\mathring{X}^k) + |\E \mathring{X}^k|^2$ and the induction hypothesis $\E |\mathsf{a}\mathring{X}|^{k} = \cO_k(1)$, the following is obtained
\begin{align*}
       \left( \E  \abs{\mathsf{a}\mathring{X}}^{2k} \right)^{\frac{1}{k}} \leq \mathcal{O}_k(1)\left[\left(\E  \abs{\mathsf{a}\mathring{X}}^{2k} \right)^{-1 + \frac{1}{k}} +  e^{-\frac{K_2'N}{2k}} \right]. \numberthis \label{eq: bound on E aX_2k}
\end{align*}
Next, we show by contradiction that $\E |\mathsf{a}\mathring{X}|^{2k}$ is uniformly bounded by a constant, which only depends on $k$. Suppose the contrary, i.e.,  $\limsup_{N \to \infty}\E |\mathsf{a}\mathring{X}|^{2k} = \infty$. Take the subsequence $\{N_m\}$ such that $ \E |\mathsf{a} \mathring{X}|^{2k} \to \infty$ as $m \to \infty$. Then, the LHS of \eqref{eq: bound on E aX_2k} becomes infinite, while the RHS tends to 0. This yields a contradiction, thus establishing $\E |\mathsf{a}\mathring{X}|^{2k} = \mathcal{O}_k(1)$. By Hölder's inequality $\E^\frac{1}{a}|x|^a \geq \E^{\frac{1}{b}}|x|^b$ for $a > b > 0$, we have $\E|\mathsf{a}\mathring{X}|^{2k-1} \leq \E^{\frac{2k - 1}{2k}}|\mathsf{a} \mathring{X}|^{2k}  = \mathcal{O}_k(1)$. This completes the inductive argument for $p=2k-1$ and $p = 2k$. Therefore,  we have completed the proof of \eqref{eq: moment bound on trace AQ}. \qed

\section{Proof of Proposition \ref{prop: bound on norm eigenvalue T}}
\label{app: proof of prop bound on norm eigenvalue T}
We will only discuss the case for $\BT$ for brevity. To prove Proposition \ref{prop: bound on norm eigenvalue T}, the following identity is useful. Let $\BF \in \cS_n(\R_+)$ with associated matrix-valued measure $\Bmu$. Then
\begin{align}
    \int_{\R_+} \lambda \Bmu(\dd \lambda) = \lim_{y \to \infty} \Re \left\{ -\jmath y \left[ \jmath y \BF(\jmath y) + \Bmu(\R_+) \right]  \right\}.
\end{align}
This identity is a natural extension of the scalar measure case in \cite{AblaNoEigen} and follows from the monotone convergence theorem. Denote the underlying measure of $\BT$ as $\Bnu_N$ (we reintroduce the index $N$ to emphasize the sequence of measures $\{\Bnu_N\}_{N \geq 1}$). By \eqref{eq: DE systems}, we have 
\begin{align}
    -\jmath y (\jmath y \BT(\jmath y) + \BI_N) = \widetilde{\cL}(-\jmath y \widetilde{\BT}(\jmath y))(\BI_N + \widetilde{\cL}(\widetilde{\BT}(\jmath y)))^{-1}.
\end{align} 
As shown in the proof of Proposition \ref{prop: exist unique DE},  $\lim_{y \to \infty} -\jmath y \widetilde{\BT}(\jmath y) = \BI_M$ and consequently $\int_{\R_+} \lambda \Bnu_N(\dd \lambda) = \widetilde{\cL}(\BI_M)$. By Assumption \ref{assumpt: A2} and the inequality $\max_j \lnorm{\BC_{jj}} \leq \lnorm{\BC}$, we have 
\begin{align}
     \widetilde{\cL}(\BI_M)  = \frac{1}{M} \sum_{j=1}^M \BC_{jj} \preceq \norm{\BC} \BI_N.
\end{align}
Hence, the sequence of matrix-valued measures $\{\Bnu_N\}$ is tight. As a result, there exists a constant $\eta > 0$ such that $\lambda_{\min}(\Bnu_N([0, \eta])) \geq \frac{1}{2}$ for all $N$. In fact, this can be shown by the Markov inequality
\begin{align}
    \BI_N - \Bnu_N([0, \eta]) =  \int_{(\eta, \infty)} \Bnu_N(\dd \lambda) \preceq \int_{(\eta, \infty)} \frac{\lambda}{ \eta} \Bnu_N(\dd \lambda) \preceq \frac{\widetilde{\cL}(\BI_M)}{\eta} \preceq \frac{\norm{\BC}}{\eta} \BI_N.
\end{align}
For a unit vector $\Bx \in \C^N$, we denote the scalar measure $\nu_{N}^x = \Bx^H \Bnu_N \Bx$. Then, we have $\nu_{N}^x([0, \eta]) = \Bx^H \Bnu_N([0, \eta]) \Bx \geq \frac{1}{2}$.  For $z \in \C - \R_+$, we can write
\begin{align}
    \Bx^H \BT(z) \Bx = \int_{\R_+} \frac{(\lambda - \Re z)}{|\lambda - z|^2} \nu^x_N(\dd \lambda) + \jmath \int_{\R_+} \frac{\Im z}{|\lambda - z|^2} \nu^x_N(\dd \lambda) := a + \jmath b.
\end{align}
We then derive a lower bound for $|\Bx^H \BT(z) \Bx|$ by considering two cases.
\par
\textit{Case I: $z \in \C - \R_+$ and $\Re z \geq 0$.} In this case, we have $|\Im z| = \dz$ and 
\begin{align}
\label{eq: lower bound xHTx}
    \abs{\Bx^H \BT(z) \Bx} \geq |b| = \dz \int_{\R_+} \frac{\nu^x_N(\dd \lambda)}{(\lambda - \Re z)^2 + (\Im z)^2} \overset{(a)}{\geq}  \dz \frac{\int_{[0, \eta]}  \nu^x_N(\dd \lambda)}{2(\eta^2 + (\Re z)^2) + (\Im z)^2} \geq \frac{\dz}{4(|z|^2 + \eta^2)},
\end{align}
where step $(a)$ follows from $(\lambda - \Re z)^2 \leq 2(\lambda^2 + (\Re z)^2) \leq 2(\eta^2 + (\Re z)^2)$ for $\lambda \in [0, \eta]$. 
\par
\textit{Case II: $z \in \C - \R_+$ and $\Re z < 0$.} In this case, we have $|\Im z| + |\Re z| \geq  \dz$ and
\begin{align}
    \abs{\Bx^H \BT(z) \Bx} \geq \frac{|a| + |b|}{\sqrt{2}} {\geq} \frac{|\Re z|  + |\Im z|}{\sqrt{2}}\int_{\R_+} \frac{\nu^x_N(\dd \lambda)}{|\lambda - z|^2} \overset{(a)}{\geq} \frac{\dz}{4\sqrt{2}(|z|^2 + \eta^2)},
\end{align}
where step $(a)$ uses the same denominator estimate as in \textit{Case I}, together with $\nu_N^x([0,\eta])\ge1/2$.
\par
Combining the two cases and using the inequality $
    \abs{\Bx^H \BT \Bx}^2 = |\Bx^H \BT^H \Bx \Bx^H \BT \Bx | \leq \norm{\Bx}^2 \Bx^H \BT^H \BT \Bx $ yields
\begin{align}
    \lambda_{\min}(\BT^H\BT) = \inf_{\norm{\Bx} = 1} \Bx^H \BT^H \BT \Bx \geq \inf_{\norm{\Bx} = 1}\abs{\Bx^H \BT \Bx}^2 \geq \frac{\dz^2}{32(|z|^2 + \eta^2)^2}.
\end{align}
Therefore, we have completed the proof of Proposition \ref{prop: bound on norm eigenvalue T}. \qed

\section{Proof of Theorem \ref{thm: EMI approximation}}
\label{app: proof of thm EMI approximation}
By taking the derivative and using the formula $\partial \log\det(\BX) = \langle\partial \BX , \BX^{-H}\rangle$, we obtain
\begin{align}
\cI(\sigma^2) = \int_{\sigma^2}^{\infty} -\frac{\partial \cI(x)}{\partial x} \dd x = \int_{\sigma^2}^{\infty} \left( \frac{N}{x} - \Tr \BQ(-x) \right) \dd x, \label{eq: Cauchy trlation}
\end{align}
which establishes a connection between the resolvent and the MI. We then show that $\overline{\cI}$ is a good approximation of $\E[\cI]$. Differentiating the deterministic approximation gives
\begin{align*}
    \frac{\partial \overline{\cI}(x)}{\partial x} &= \frac{\partial}{\partial x}\left[ \log\det\left(\BI_N + \widetilde{\cL}(\widetilde{\BT}(-x))\right) + \log\det\left(\BI_M + \cL(\BT(-x))\right) - x \innerprod{\cL(\BT(-x))}{\widetilde{\BT}(-x)} \right] \\
    & \overset{(a)}{=} \innerprod{\frac{\partial \widetilde{\cL}(\widetilde{\BT}(-x))}{\partial x}}{\left(\BI_N + \widetilde{\cL}(\widetilde{\BT}(-x))\right)^{-1}}  + \innerprod{\frac{\partial \cL(\BT(-x))}{\partial x}}{\left(\BI_M + \cL(\BT(-x))\right)^{-1}}  \\
    &- \innerprod{\cL(\BT(-x))}{\widetilde{\BT}(-x)} - \innerprod{\frac{\partial\cL(\BT(-x))}{\partial x}}{x\widetilde{\BT}(-x)} - x\innerprod{\cL(\BT(-x))}{\frac{\partial \widetilde{\BT}(-x)}{\partial x}} \\
    & \overset{(b)}{=} - \innerprod{\cL(\BT(-x))}{\widetilde{\BT}(-x)} \overset{(c)}{=}  \Tr \BT(-x)- \frac{N}{x}. \numberthis
\end{align*}
Here, step $(a)$ follows by $(\BT(-x))^H = \BT(-x)$ and $(\widetilde{\BT}(-x))^H = \widetilde{\BT}(-x)$. Step $(b)$ follows by the identities  $\langle \cL\BX, \BY \rangle = \langle \BX, \widetilde{\cL}\BY \rangle$ and $\widetilde{\cL}(\partial \BX) = \partial \widetilde{\cL}(\BX)$. Step $(c)$ is due to the resolvent identity $(x\BI_N)^{-1} - \BT(-x) = \BT(-x) \widetilde{\cL}[\widetilde{\BT}(-x)]$. Hence, by $\lim_{x \to +\infty} \overline{\cI}(x) = 0$, we obtain
\begin{align}
    \overline{\cI}(\sigma^2) = \int_{\sigma^2}^{\infty} \left( \frac{N}{x} - \Tr \BT(-x) \right) \dd x.
\end{align}

By setting $\BA = \BI_N$ in Theorem \ref{thm: de for the resolvent}, we have 
$
    \Tr \E\BQ(-x) = \Tr \BT(-x) + \mathcal{O}(1) \frac{\Psi(-x) \mathsf{E}_N}{x^4}
$.
Combining the above analysis and Fubini's theorem, we have 
\begin{align*}
    \E \cI(\sigma^2) - \overline{\cI}(\sigma^2) = \int_{\sigma^2}^{\infty}\Tr[\BT(-x) - \E \BQ(-x)] \dd x  = \cO(\mathsf{E}_N) \int_{\sigma^2}^{\infty} \frac{\Psi(-x)}{x^4} \dd x = \cO\left(\mathsf{E}_N \sum_{j=3}^{15} \frac{1}{\sigma^{2j}}\right). \numberthis
\end{align*}
Therefore, we have completed the proof of Theorem \ref{thm: EMI approximation}. \qed

\bibliographystyle{IEEEtran}
\bibliography{reference}
\end{document}